\numberwithin{equation}{section}
\newtheorem{theorem}{Theorem}
\newtheorem{lemma}{Lemma}
\newtheorem{corollary}{Corollary}
\newcommand{\bsb}{\boldsymbol}
\DeclareMathOperator{\vect}{\mbox{vec}\,}
\newcommand{\bsbX}{{\boldsymbol{X}}}
\newcommand{\bsbx}{{\boldsymbol{x}}}
\newcommand{\bsby}{{\boldsymbol{y}}}
\newcommand{\bsbY}{{\boldsymbol{Y}}}
\newcommand{\bsbb}{{\boldsymbol{\beta}}}
\newcommand{\bsbT}{{\boldsymbol{T}}}
\newcommand{\bsbI}{{\boldsymbol{I}}}
\newcommand{\bsbR}{{\boldsymbol{R}}}
\newcommand{\bsbZ}{{\boldsymbol{Z}}}
\newcommand{\bsbO}{{\boldsymbol{O}}}
\newcommand{\bsbSig}{{\boldsymbol{\Sigma}}}
\newcommand{\bsbOmega}{{\boldsymbol{\Omega}}}
\newcommand{\bsbxi}{{\boldsymbol{\xi}}}
\newcommand{\bsbD}{{\boldsymbol{D}}}
\newcommand{\bsbU}{{\boldsymbol{U}}}
\newcommand{\bsbV}{{\boldsymbol{V}}}
\newcommand{\bsba}{{\boldsymbol{\alpha}}}
\newcommand{\bsbA}{{\boldsymbol{A}}}
\newcommand{\bsbC}{{\boldsymbol{C}}}
\newcommand{\bsbE}{{\boldsymbol{E}}}
\newcommand{\bsbS}{{\boldsymbol{S}}}
\newcommand{\bsbB}{{\boldsymbol{B}}}
\newcommand{\bsbDelta}{{\boldsymbol{\Delta}}}
\newcommand{\Proj}{{\boldsymbol{\mathcal P}}}
\newcommand{\EP}{\,\mathbb{P}}
\newcommand{\EE}{\,\mathbb{E}}
\begin{document}
\title{On Cross-validation for  Sparse  Reduced Rank Regression}
\author{Yiyuan She and Hoang Tran \\Department of Statistics, Florida State University}
\date{}
\maketitle

\begin{abstract}
In high-dimensional data analysis, regularization methods pursuing   sparsity and/or low rank have received a lot of attention recently.   To provide      a proper amount of shrinkage, it is typical to use     a grid search and a model comparison criterion to find the optimal    regularization parameters. However, we show that fixing the parameters   across all folds may result in an inconsistency issue, and  it is more appropriate to cross-validate   projection-selection patterns to obtain the best coefficient   estimate. Our     in-sample error studies in jointly sparse and rank-deficient models lead to  a new class of information criteria with four scale-free forms to bypass the estimation of the  noise level. By use of an identity, we propose a novel scale-free  calibration to help cross-validation achieve the minimax optimal error rate   non-asymptotically. Experiments  support the efficacy of the proposed methods.
\end{abstract}

\section{Background}
Modern statistical techniques heavily rely on shrinkage estimation and thus   parameter tuning  becomes an important task. This work assumes  a  multivariate   regression setup,  $\bsbY=\bsbX\bsbB^* +\bsbE$, which is  applicable to a variety of real-world applications  in machine learning, economics and genetics studies \citep{stock2002forecasting,ESL2,vounou2012sparse}. Here,  $\bsbY \in \mathbb R^{n\times m}$ denotes the response matrix, $\bsbX\in \mathbb R^{n\times p}$ is the design matrix containing $p$ predictors or features, and the noise  $\bsbE$ is assumed to be  Gaussian (or sub-Gaussian).
The $j$th row of $\bsbB^*$ contains the coefficients associated with the $j$th predictor, and the $k$th column of $\bsbB^*$ corresponds to the coefficients associated with the $k$th response in $\bsbY$. Therefore, when  a number of  features are   irrelevant to $\bsbY$,   $\bsbB^*$ has row sparsity.  In the case of    a single response, the problem becomes standard   variable selection.
But variable selection alone may not be sufficient when $m>1$. Suppose that   the first $J^*$ rows in $\bsbB^{*}$ are nonzero, i.e.,
$\bsbB^* = [\bsbb_1^* \ \cdots \ \bsbb_{J^*}^* \ \bsb0 \ \cdots \  \bsb0]^T$.
 The number of unknowns,  $m \times J^*$,  can still be large and exceed the total number of observations. In   matrix estimation, adding a low-rank constraint is   popular and effective \citep{CandPlan,Rohde11}. Specifically, suppose further that $\bsbB^{*}$ has  rank $r^*<J^*$, which means that $\bsbB^{*} = \bsbC_1^{*} \left( \bsbC_2^{*} \right)^T$ for some $\bsbC_1^* \in \mathbb{R}^{p \times r^*}$ and $\bsbC_2^* \in \mathbb{R}^{m \times r^*}$. Then, the model can be rewritten as $\bsbY = \left( \bsbX \bsbC_1^* \right) \left( \bsbC_2^* \right)^T + \bsbE$, where $\bsbX \bsbC_1^*$ is composed of $r^*$ factors which are linear combinations of   $J^*$ relevant predictors to explain \emph{all} response variables. This is an extension of sparse principal component analysis where $\bsbX$   is an identity matrix   \citep{spapca,shenhuang,pmd,johnstoneLu2009,Ma2013}. For more details on jointly rank-deficient and sparse regression, we refer to  \cite{BSW12}, \cite{ChenHuang12}, \cite{chen2012jrssb}, and  \cite{she2017selective}.

Although   joint       regularization guarantees effective dimension reduction,    it can  be challenging   to  control the amount of shrinkage adaptively in real data.
Let $\lambda$ denote the regularization parameter(s) to be tuned. In the literature, a  grid search is often used, which calls a sparse learning algorithm for every   value of $\lambda$  in a pre-specified grid. So the problem    boils down to the design of an appropriate model comparison criterion.

There are two broad classes of comparison criteria: information criteria, appending various  penalties on the model complexity to the   training error, and cross-validation (CV), which is based on data resampling. These criteria offer  {data-dependent}  (or adaptive) parameters, in contrast to  some theoretical choices  that are   derived assuming  incoherent designs.

For ordinary variable selection ($m=1$), some examples of information criteria are AIC~\citep{akaike1974new}, BIC~\citep{schwarz1978estimating}, RIC~\citep{FosterGeorge} and EBIC~\citep{chenchen}. We refer to~\cite{shao1997asymptotic} and~\cite{yang2005can} for  asymptotic studies under the classical regime with $n \rightarrow \infty$ and $p$ fixed. In practical data analysis, however,  there appears to be no clear conclusion of which criterion to use,  and some results   even seem contradictory. There is much less theoretical work on   rank selection with $p\gg n$  \citep{And99,BuneaSheWeg}, let alone for joint variable selection and rank reduction.
 Instead of trying to    figure out the appropriate  information criterion penalty   in a specific setting, many practitioners prefer   general-purpose $K$-fold cross-validation  \citep{geisser1975predictive}. This requires us to  split the dataset randomly  into $K$ subsets of roughly equal size, then to
fit,  say, a    lasso model~\citep{Tib} at any given value of $\lambda$ on the data without the $k$-th subset  ($1\leq k \leq K$), and   to evaluate
prediction error on the left-out subset. We refer to~\cite{arlot2010survey} for a modern survey of cross-validation.
It is a common  belief that the summarized  cross-validation error provides a good index of    a model's  goodness of fit. The choice of $K$ is often discretionary but can have a significant effect.

The main contributions of this paper are three-fold. First, we argue that the $K$ trainings   in the     conventional cross-validation may   be associated with   \textit{inconsistent} models,  especially when a   nonconvex penalty is in use, and  we introduce a structural  cross-validation (SCV) based on
selection-projection patterns.  Second, we develop a new class of  predictive information criteria (PIC) that can achieve  the minimax optimal error rate  without {any} design incoherence or signal strength conditions. Third,   we show that $K$-fold cross-validation is \textit{not} rate-optimal in    pursuing sparsity, and   we propose  a novel scale-free calibration of the CV error to match the optimal rate.   Experiments  show the superb performance of the proposed methods.

The majority of this work  will  assume that the model is   jointly row-sparse and rank-deficient. The benefits will be seen, for example,  in  revealing
the \textbf{additive} relationship between  degrees of freedom and inflation, which may be  difficult to perceive in single-response regression.   But all of our results and discussions  apply to      pure variable selection and low-rank modeling.

The following notation and symbols will be used. Given $\bsbB\in \mathbb R^{p\times m}$, we shall use $r(\bsbB)$ to denote the rank of $\bsbB$, $\mathcal J(\bsbB)  =\{1\le j \le p: \| \bsbB[j,]\|_2\ne 0\}$ to denote the row support of $\bsbB$, and $ J(\bsbB) = | \mathcal  J(\bsbB)|$. We use  $\bsbB[\mathcal I, \mathcal J]$ to denote a submatrix of $\bsbB$ with rows and columns indexed by $\mathcal I, \mathcal J$ respectively, and abbreviate $\bsbB[ 1\! : \! p, \mathcal J]$ to $\bsbB[ , \mathcal J]$. Let $\mathbb O^{m\times n}=\{\bsbA\in \mathbb R^{m\times n}: \bsbA^T \bsbA=\bsbI\}$ represent the class of orthogonal matrices. The symbol $\gtrsim$ means that the inequality holds up to multiplicative constants (similarly for $\lesssim$). We use $\| \bsbA\|_F$ to denote the Frobenius norm of $\bsbA$. Finally, $\Proj_{\bsbA}$ is the orthogonal projection matrix onto  the column space of  $ \bsbA$, i.e., $\Proj_{\bsbA}=\bsbA(\bsbA^T \bsbA)^{-} \bsbA^T$ where $^-$ is the Moore-Penrose inverse; when there is no ambiguity, we also use     $\Proj_{\bsbA}$   to denote the column space of $\bsbA$.

\section{Structural Cross-Validation }\label{sec:structural}

When using cross-validation for   parameter tuning,   one often holds $\lambda$ \textit{fixed} at a given value across all folds. 
Some recent theoretical studies partially support this idea. For instance, consider  the lasso   which minimizes $ \| \bsby - \bsbX \bsbb\|_2^2 + \lambda \| \bsbb\|_1$ with $\bsbX \in \mathbb{R}^{n \times p}$, $\bsby \in \mathbb R^n$.
Under some incoherent   conditions on the design matrix,   $\lambda$   has a {universal}  rate choice of $ O(\sigma    \sqrt{ n \log p })$; see, for example,  ~\cite{bickel09}. This rate   depends only on problem dimensions and     makes the lasso    achieve  nearly   optimal prediction error rate
 with high probability.
Similar conclusions exist for  singular-value thresholds in low-rank matrix estimation; see, for example, \cite{CandPlan} and \cite{BuneaSheWeg}.  These universal-rate results may lend some support to fixing the regularization parameter(s) in cross-validation, as long as the designs in different trainings are of     the same size.

\begin{figure}[h!]
\centering
\includegraphics[width=0.8\textwidth]{./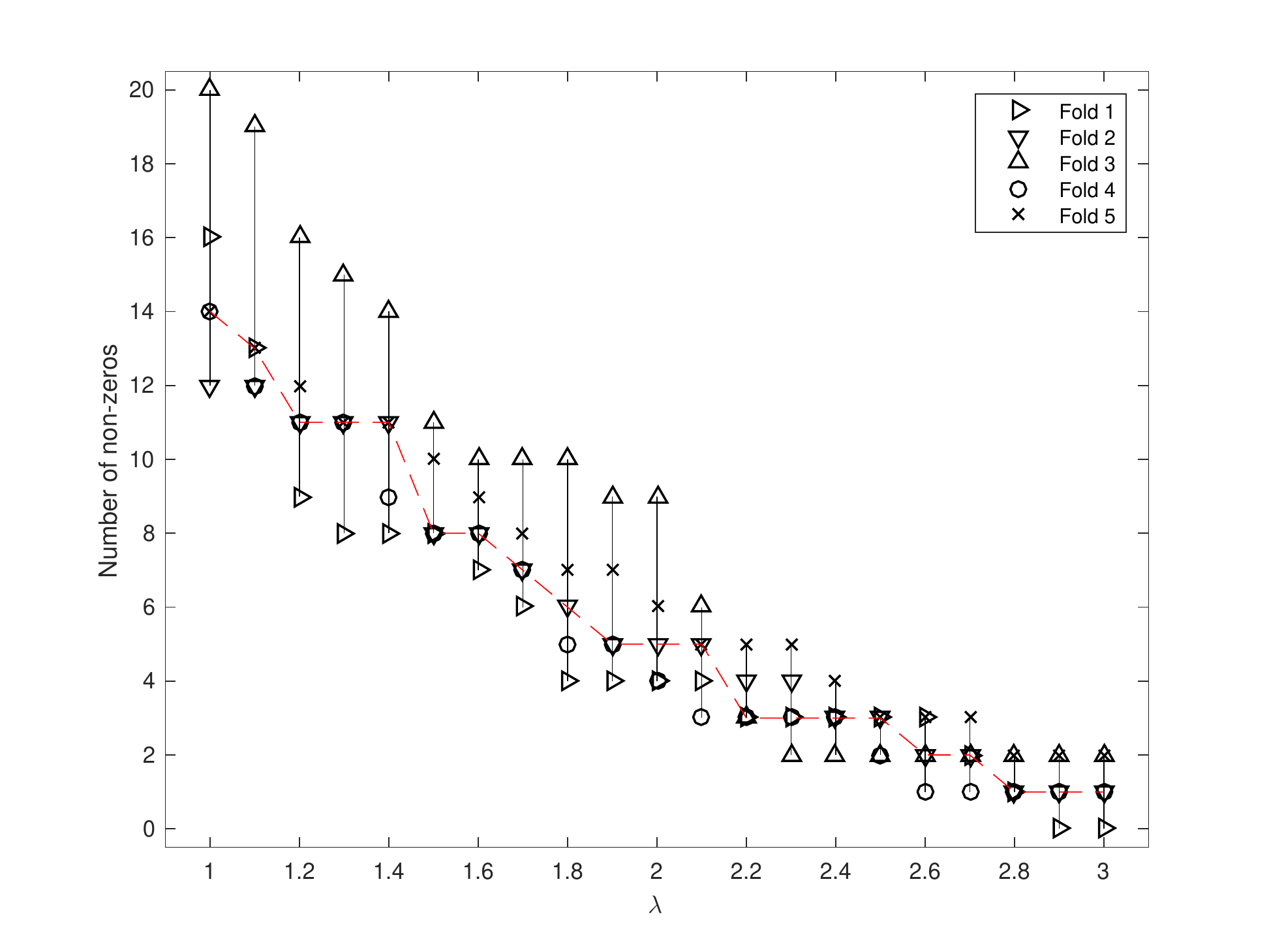}
\caption{\small Training inconsistency  in  cross-validating $\lambda$. The number of nonzero coefficient estimates  is plotted against $\lambda$ in each training (computed by lasso). Each vertical bar shows the range of cardinalities at a given $\lambda$ and the dashed line connects the medians. In an ideal situation where the $K$ fittings are regarding the same candidate model,
the vertical variation in the plot would disappear (and one would see a  solution path similar to the dashed line).  \label{fig:CV} }
\end{figure}

 However,  the rate   choice, as well as  the error bounds obtained with high probability,       is  too crude   and restrictive in real-life data analysis. Moreover, the design matrix is required to satisfy, say, the restricted eigenvalue condition~\citep{bickel09}, the restricted isometry property~\citep{candes2008restricted},   the sparse Riesz condition~\citep{ZhangHuang} or the comparison condition \citep{She2016},     all   placing stringent requirements on the Gram matrix $\bsbX^T \bsbX$ and  easy to violate in practice. There is yet no theory that is sufficiently
fine   to take all   characteristics of data     into  consideration to give a precise   formula of  $\lambda$.
The   issue is perhaps easier to understand from a Lagrangian perspective. Indeed, the     $\lambda$ in lasso is a {Lagrangian multiplier} of the   constrained   problem
$
\min_\bsbb \| \bsby - \bsbX \bsbb\|_2^2 \mbox{ s.t. } \| \bsbb \|_1 \leq c$.
In contrast to the constraint parameter   $c$ that is directly defined on $\bsbb$, $\lambda=\lambda(\bsbX, \bsby, c)$, as a dual parameter of $c$,  depends on $ \bsbX$ and  $\bsby$ as well, and   this data dependence is not just restricted to  problem's  dimensions. Therefore, on different datasets (associated with the same $\bsbb$),  the penalty parameter 
 may have to be changed to maintain the same constraint value. 
Therefore,
   with   $\lambda$ held fixed in CV, the $K$ fitted models  may not be consistent 
and the judgement based on the  total  CV-error could be       spurious.
  See Figure \ref{fig:CV} for an illustration of the issue,   where for a given value of $\lambda$  the cardinality of the trained models varies across the folds (for example,  at $\lambda=1.2$ the number of non-zeros  varies from nine to sixteen).
The issue was discussed in \cite{SheGLMTISP} and \cite{SheSpec}.

What if we were to   cross-validate $c$? This constraint form is exactly what \cite{Tib} used to  define the lasso criterion and to call cross-validation on.
But the idea   puts some limitations on computation, especially when {a nonconvex} regularization is in use. Solving a constrained problem is often not easy; for example, changing a SCAD penalty to a constrained form results in a much harder problem~\citep{fan2001variable}.   Moreover, it is well known that  the solution path associated with a nonconvex penalty or constraint is  discontinuous, and       algorithms may be easily trapped  into a local minimum. These matters only exacerbate the training inconsistency issue.

There is  a simple idea to
address the problem. Instead of finding the optimal regularization parameter, our goal should be to obtain the best coefficient matrix estimate.
So  we can cross-validate factor loadings  instead of  regularization parameters to maintain the same model in $K$ trainings and validations. Below we describe how to extract  selection-projection patterns in the jointly row-sparse and rank-deficient setup  given an arbitrary nonzero estimate $\bsbB$, where $ r = r(\bsbB)$,   $ {\mathcal J} =\mathcal J(\bsbB)$ and $J=J(\bsbB)$.

\begin{enumerate}[a)]
\item  Let $\bsbS( \bsbB)=\bsbI[,  {\mathcal J}]$, the  submatrix formed by the columns  indexed by $\mathcal{J}$ in a $p\times p$ identity matrix.
\item If $r< J \wedge m$, find  $\bsbU( \bsbB) \in \mathbb O^{{J} \times  r}$   spanning the  column space   of $\bsbB[ {\mathcal J},]$ (which can be obtained by SVD or {QR} decomposition). Otherwise, set  $\bsbU( \bsbB) =\bsbI$, the  identity matrix of size $J\times J$.
\item The  \textbf{structural pattern} is given by the orthogonal matrix $\bsbS( \bsbB) \bsbU( \bsbB)$.
\end{enumerate}

According to the procedure, when $\bsbB$ has row sparsity, the index set of  nonzero rows,    ${\mathcal J}$,  determines the sparsity pattern.
If  $ \bsbB[ {\mathcal J},]$ has further rank deficiency, its  range  provides   the projection pattern.   This can be seen by  writing $\bsbX  \bsbB$ as
\begin{align*}
\bsbX  \bsbB & = \bsbX \bsbI[,\mathcal J] \bsbB[\mathcal J,]= \bsbX \bsbS (\bsbB)( \bsbU(\bsbB) \bsbU(\bsbB)^T \bsbB[\mathcal J,])\\
&=  \{\bsbX\bsbS (\bsbB) \bsbU(\bsbB)\} \cdot \{\bsbU(\bsbB)^T \bsbS(\bsbB)^T \bsbB\}.
\end{align*}
By construction,  all structural patterns $\bsbS(\bsbB)\bsbU(\bsbB)$  are orthogonal.

In our jointly sparse model,         $\bsbX\bsbS (\bsbB) \bsbU(\bsbB)$ is a new design matrix   that has only $ r$ factors   constructed from  $ J$ raw predictors.
Explicitly extracting  the factors      removes  estimation bias in regularized  estimation.   Given $\mathcal J$ and $r$, let $\bsbB_{ub}$ denote  the unbiased estimate that  solves \begin{align}\min_{\bsbB} \| \bsbY - \bsbX \bsbB\|_F^2 \mbox{ s.t. }  r(\bsbB) \leq r, \bsbB[\mathcal J^c, ]=\bsb{0}.\label{restrprob}\end{align} Assume   $\bsbB_{ub}\ne \bsb0$ and let   $\bsbS\bsbU$ be the associated structural pattern of $\bsbB_{ub}$. Then it is not difficult to prove   that   $ \bsbB_{ub}$ can always be decomposed into $\bsbS  \bsbU \cdot \bsbC$,  with $\bsbC$     the ordinary least squares (OLS) estimate from regressing $ \bsbY$ on $ \bsbX \bsbS \bsbU $.

 The procedure of cross-validating the structural patterns (or   \emph{s}-patterns, for short) that are associated with each estimate then follows, which is termed
\textit{structural cross-validation} (SCV).
 \begin{enumerate}
\item Split the row indices  of $(\bsbY, \bsbX)$ into $ K$ subsets. Denote the $k$th ($1\leq k\leq  K$) one of the data by $(\bsbY^{k}, \bsbX^{k})$, and the
remaining   by $(\bsbY^{-k}, \bsbX^{-k})$.
\item For each \emph{s}-pattern $\bsbS\bsbU$,  compute $\hat \bsbC^{-k}$, which is the OLS estimate from regressing $\bsbY^{-k}$ on $\bsbX^{-k} \bsbS  \bsbU$.   Let $\hat \bsbB^{-k}  = \bsbS \bsbU \hat \bsbC^{-k}$. Evaluate the prediction error of $\hat \bsbB^{-k}$  on $(\bsbY^{k}, \bsbX^{k})$. Repeat the process for all $k: 1\le k\le K$.
\item Summarize the  total validation error for each candidate model.
\end{enumerate}

Clearly, the $K$ models in SCV are   comparable because a  pattern  only acts on the  columns of $\bsbX$ and remains consistent across the (re-sampled) rows.
The fitting step    is cost-effective---indeed, each training  fits only a low-dimensional OLS
model  restricted to the  selected and/or projected dimensions. This  amounts to   sparse low-rank estimation with bias calibration in view of    \eqref{restrprob}.   In comparison,  ordinary   $K$-fold cross-validation  runs a more involved learning algorithm for  $K$ times. In structural cross-validation, the learning algorithm  is   called only once on  the overall data to generate candidate patterns.

It is worth mentioning that SCV is applicable to any   sparsity-inducing penalty, including those that are nonconvex.  It also applies to sole variable selection or rank selection and can be extended to non-Gaussian models such as generalized linear models; see \cite{SheGLMTISP}.

\section{In-sample Optimal Complexity Penalty  }
\label{sec:insample}
 Another problem  is if  $K$-fold SCV is rate optimal or if it applies a sufficiently large   penalty for overly complex models. 
Some popular choices are   $K=5$, $10$, or $n$. 
Under fixed $p$ and $n\rightarrow\infty$, some asymptotic conclusions have been obtained \citep{shao1997asymptotic,yang2005can}, but it is difficult to determine how large $n$ should be to apply the results in practice. Noticing that the problem  is similar to picking an appropriate complexity-penalty in information criteria, this section       studies a given model's   \textit{in-sample error}  \citep{ESL2}.
In-sample error is a form of prediction error  {conditioned} on the design matrix  \citep{ESL2}.
Our goal is to find
 a  \textbf{non-asymptotic} information criterion to   achieve  the optimal error rate in prediction \textit{without}    placing any    signal-to-noise    or design incoherence restriction or  requiring   the true model to be among the candidate models.  The predictive learning principle allows the data to inform the best parsimonious model. 


Assume    $\bsbY=\bsbX \bsbB^*+\bsbE$ with   $\EE[\bsbE \bsbE^T] = m\sigma^2 \bsbI$. Given an    arbitrary \emph{s}-pattern   $\bsbS \bsbU$  with   $\bar r = r(\bsbS \bsbU)$,   the associated restricted  estimate   is     $\hat \bsbB  = \bsbS \bsbU (\bsbX_{  \bsbS\bsbU }^T \bsbX_{ \bsbS\bsbU} )^{-1} \bsbX_{ \bsbS\bsbU}^T \bsbY$, where  we write $\bsbX \bsbS \bsbU$ as $\bsbX_{\bsbS \bsbU}$ for simplicity.
The {in-sample} test error for an  independent copy $\bsbY'$ of $\bsbY$ can be computed as follows 
\begin{align}
&\EE [\| \bsbY' - \bsbX \hat \bsbB\|_F^2 ] \notag \\
 =\, & nm\sigma^2 + \EE[\|\bsbX\hat \bsbB - \bsbX \bsbB^*\|_F^2] \notag\\
=\,&  n m \sigma^2 + \EE[\|\Proj_{\bsbX_{\bsbS\bsbU}}\bsbY - \Proj_{\bsbX_{\bsbS\bsbU}}\bsbX \bsbB^*\|_F^2]]  + \EE[\|(\bsbI - \Proj_{\bsbX_{\bsbS\bsbU}}) \bsbX  \bsbB^* \|_F^2] \notag\\
=\,&\EE [(nm + \bar r m) \sigma^2 + \EE[\|(\bsbI - \Proj_{\bsbX_{\bsbS\bsbU}}) \bsbX  \bsbB^* \|_F^2] \notag\\
=\, &\EE [ \|\bsbY - \bsbX \hat \bsbB\|_F^2  + 2  \sigma^2  \bar r m].  \label{insampleErr}
\end{align}
It appears that the criterion
\begin{align}
 \|\bsbY - \bsbX \hat \bsbB\|_F^2  + 2  \sigma^2  \bar r m \label{hyperIC}
\end{align}
 could be used for model comparison. When $m=1$, $\bar r = J$, \eqref{hyperIC}    is the  AIC (but     not when $m>1$). Nevertheless, the above  derivation   only shows the unbiasedness of   \eqref{hyperIC}   for a \textit{given} model.  Although  we are searching in a $p$-dimensional   space of all possible models, the model complexity penalty  $2  \sigma^2 \bar r m$     does not involve $p$.  In fact, minimizing~\eqref{hyperIC} will not lead to   an estimator with the lowest prediction error in sparse scenarios. 

To motivate the correct   complexity penalty, we study   the minimax   error rate for the class of models having both row sparsity and low rank.
 Given $J$ and $r$, define a signal class
\begin{align}
\begin{split}
S(J, r) = \{\bsbB \in \mathbb R^{p\times m}:   J(\bsbB)\leq J, r(\bsbB)\leq r \}. 
\end{split}\label{sigcls}
\end{align}
Any model in \eqref{sigcls} is jointly parsimonious in its rank and row support. Let $l(\cdot)$ be an arbitrary nondecreasing loss function with $l(0)=0, l\not\equiv 0$. Let $q=r(\bsbX)$. Define
\begin{align}
P_o(\bsbB) =   [q\wedge  J(\bsbB) +m - r(\bsbB)] r(\bsbB) + J(\bsbB)\log (e p / J(\bsbB) ). \label{optpen}
\end{align}
 We also write $P_o(J, r)$ because $P_o$ depends on $\bsbB$ through $J(\bsbB)$ and $r(\bsbB)$.
\begin{theorem} \label{th:minimax}
Let $\bsbY = \bsbX \bsbB^* + \bsbE$ where $\bsbE=[e_{ik}]$ with $e_{ik}$ $\overset{ i.i.d. }\sim \mathcal N(0, \sigma^2)$, and    $\bsbB^*\in S(J, r)$ for some  $J, r$  satisfying   $1\le J \le p/2$, $p\ge 2$, $r(q\wedge  J+m -r )\ge 16$, $1\le r \le \min\{(J+m)/2,  2J\}$.  Assume  the following  restricted condition-number condition: (i) when $J\le q$,  there exist $\underline{\kappa}, \overline{\kappa} > 0$ such that  $
\underline{\kappa} \|\bsbB\|_F^2\leq \|\bsbX \bsbB\|_F^2 \leq \overline{\kappa}  \| \bsbB\|_F^2
$
for any $\bsbB: J(\bsbB)\le J, r(\bsbB)\le r$ and   $
\underline{\kappa} / \overline{\kappa}$ is a positive constant; (ii) when $J>q$,  $
 \sigma_{\min}^2(\bsbX) /\sigma_{\max}^2(\bsbX)$ is a positive constant, where $\sigma_{\min}(\bsbX)$ and $\sigma_{\max}(\bsbX)$ are the smallest and the largest {\emph{nonzero}} singular values of $\bsbX$, respectively. Then  there exist positive constants $c, c'$,  depending on $l(\cdot)$ only, such that when $J \le q$,
\begin{align}
\inf_{\hat \bsbB} \sup_{\bsbB^* \in S(J, r)} \EE[l(\|\bsbX \bsbB^* - \bsbX \hat\bsbB\|_F^2/(c \sigma^2 P_o(J, r)))] \geq c' >0,
\end{align}
and when $J>q$, $
\inf_{\hat \bsbB} \sup_{\bsbB^* \in S(J, r)} \EE[l(\|\bsbX \bsbB^* - \bsbX \hat\bsbB\|_F^2/(c \sigma^2 (q  +m - r ) r ))] \geq c' >0.
$
\end{theorem}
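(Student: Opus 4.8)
The plan is to prove the bound by the information-theoretic (Fano) method, after two reductions. Since $l$ is nondecreasing with $l(0)=0$ and $l\not\equiv 0$, fix $t_0>0$ with $l(t_0)>0$; then $\EE[l(Z)]\ge l(t_0)\EP(Z\ge t_0)$ for any $Z\ge 0$, so taking $Z=\|\bsbX\bsbB^*-\bsbX\hat\bsbB\|_F^2/(c\sigma^2 P_o)$ it suffices to exhibit an absolute constant $c_1>0$ with $\inf_{\hat\bsbB}\sup_{\bsbB^*\in S(J,r)}\EP(\|\bsbX\bsbB^*-\bsbX\hat\bsbB\|_F^2\ge c_1\sigma^2 P_o)\ge 1/2$; then $c=c_1/t_0$ and $c'=l(t_0)/2$ work and depend on $l$ only (the $J>q$ case is identical with $P_o$ replaced by $(q+m-r)r$). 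Next write $P_o=P_1+P_2$ with $P_1=[q\wedge J+m-r]r$ and $P_2=J\log(ep/J)$; since $P_o\le 2\max\{P_1,P_2\}$ it is enough to prove the probability bound with $P_o$ replaced, separately, by $P_1$ and by $P_2$.

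\textbf{Two packings and Fano.} For each term I will construct a finite $\{\bsbB_1,\dots,\bsbB_N\}\subset S(J,r)$ that is $2s$-separated in the seminorm $\|\bsbX\cdot\|_F$ with $s^2\asymp\sigma^2\cdot(\text{term})$ and satisfies $\frac{1}{N^2}\sum_{i,j}\|\bsbX(\bsbB_i-\bsbB_j)\|_F^2/(2\sigma^2)\le \tfrac18\log N$ and $N\ge 16$; the Gaussian identity $\mathrm{KL}(\EP_{\bsbB_i}\|\EP_{\bsbB_j})=\|\bsbX(\bsbB_i-\bsbB_j)\|_F^2/(2\sigma^2)$, Fano's inequality, and the usual minimum-distance test then force the max error probability to be $\ge 1/2$, which gives the claim with $c_1\sigma^2\cdot(\text{term})\asymp s^2$. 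For $P_2$: fix a unit vector $\bsbv\in\mathbb R^m$ and let $\bsbB=\gamma\,\bsb\omega\bsbv^T$, with $\bsb\omega$ ranging over a Varshamov--Gilbert packing of $\{0,1\}^p$ with at most $J$ ones, pairwise Hamming distance $\gtrsim J$, and $\log N\gtrsim J\log(ep/J)$ (a standard combinatorial packing lemma, valid since $J\le p/2$); these matrices have rank one, hence belong to $S(J,r)$. Then $\|\bsbB_i-\bsbB_j\|_F^2\asymp\gamma^2 J$, so the restricted condition-number assumption gives $\underline{\kappa}\gamma^2 J\lesssim\|\bsbX(\bsbB_i-\bsbB_j)\|_F^2\lesssim\overline{\kappa}\gamma^2 J$, and choosing $\gamma^2\asymp\sigma^2\log(ep/J)/\overline{\kappa}$ makes the KL bound hold while $s^2\asymp(\underline{\kappa}/\overline{\kappa})\sigma^2 J\log(ep/J)\asymp\sigma^2 P_2$, using that $\underline{\kappa}/\overline{\kappa}$ is a positive constant. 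For $P_1$: fix the support $\mathcal J=\{1,\dots,J\}$ and take $\bsbB[\mathcal J,]=\gamma\,\bsb M$, $\bsbB[\mathcal J^c,]=\bsbzer$, with $\bsb M$ ranging over a packing of rank-$\le r$ matrices in $\mathbb R^{J\times m}$ built in the customary two pieces---varying a binary left factor in $\mathbb R^{J\times r}$ with an orthonormal right factor held fixed, and dually varying a binary right factor in $\mathbb R^{m\times r}$ with an orthonormal left factor fixed---which give $\log N\gtrsim (q\wedge J)r$ and $\log N\gtrsim (m-r)r$ respectively; as $(q\wedge J+m-r)r\le 2\max$ of these, treating each suffices, and calibrating $\gamma^2\asymp\sigma^2/\overline{\kappa}$ yields $s^2\asymp\sigma^2 P_1$. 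The hypotheses $r(q\wedge J+m-r)\ge 16$ and $1\le r\le\min\{(J+m)/2,2J\}$ are precisely what guarantee $N\ge 16$, that these factor packings are nondegenerate, and that $P_o(J,r)$ is of the order of the minimax risk.

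\textbf{The case $J>q$ and the main obstacle.} When $J>q$ the same low-rank construction is used, reparametrized through the SVD $\bsbX=\bsbU_X\bsbSig_X\bsbV_X^T$ so that $\|\bsbX\bsbB\|_F=\|\bsbSig_X\bsbV_X^T\bsbB\|_F$ and $\sigma_{\min}^2(\bsbX),\sigma_{\max}^2(\bsbX)$ play the role of $\underline{\kappa},\overline{\kappa}$; choosing the hypothesis support to contain a column-basis of $\bsbX$ keeps the differences from being annihilated, and since the sparsity budget is then non-binding beyond the rank, only the $(q+m-r)r$ term survives. In the degenerate small-dimension cases where $P_1$ or $P_2$ falls below the $N\ge 16$ threshold, a two-point (Le Cam) argument still yields an $\Omega(\sigma^2)$ lower bound that is absorbed into $c$. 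I expect the main obstacle to be the \emph{simultaneous} calibration of the scale $\gamma$: the packing must be spread out enough that $\|\bsbX(\bsbB_i-\bsbB_j)\|_F^2\gtrsim\sigma^2\cdot(\text{rate})$ yet have mutual KL below $\tfrac18\log N$, which is possible only because $\log N$ is itself of order the target rate and because $\underline{\kappa}/\overline{\kappa}$ (respectively $\sigma_{\min}^2(\bsbX)/\sigma_{\max}^2(\bsbX)$) is bounded below; getting the sparse and low-rank factor packings to realize $J\log(ep/J)$ and $(q\wedge J+m-r)r$ up to constants, and checking that each dimension inequality in the hypotheses is exactly what is needed, is the delicate part of the argument.
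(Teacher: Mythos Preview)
Your proposal is correct and follows the same Fano/Tsybakov packing strategy as the paper, with the same reduction from general $l$ to a probability bound and the same SVD reparametrization in the $J>q$ case. The one substantive difference is in the packing for the low-rank term $P_1=(q\wedge J+m-r)r$.

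The paper does \emph{not} split $P_1$ into a left-factor piece and a right-factor piece. Instead it builds a single ``L-shaped'' binary hypercube: with support fixed at $\{1,\dots,J\}$, it lets $b_{jk}\in\{0,\gamma R\}$ for $(j,k)$ in the union of the first $r/2$ columns (rows $1$ to $J$) and the first $r/2$ rows (all $m$ columns), and $b_{jk}=0$ otherwise. Any such matrix has rank at most $r$ (it is the sum of a matrix with $\le r/2$ nonzero rows and one with $\le r/2$ nonzero columns), the free entries number $(J+m-r/2)r/2\asymp (J+m-r)r$ by the hypothesis $r\le (J+m)/2$, and a single Varshamov--Gilbert extraction gives the packing. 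This is a bit slicker than your two separate factor packings, and it has one concrete advantage: it only requires $r/2\le J$, i.e.\ exactly the assumption $r\le 2J$, whereas your ``orthonormal left factor in $\mathbb O^{J\times r}$'' for the $(m-r)r$ piece tacitly needs $r\le J$. Your approach is still repairable in the range $J<r\le 2J$ (the rank constraint becomes slack so one can pack the full $J\times m$ block), but the paper's construction avoids the case split entirely.

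For the sparsity term the paper uses essentially your construction (rows equal to $\bsb0$ or $\gamma R\bsb 1$, i.e.\ your $\bsb\omega\bsbv^T$ with $\bsbv=\bsb1$), citing a combinatorial packing lemma of Rigollet--Tsybakov in place of the generic Varshamov--Gilbert bound; and it invokes Tsybakov's Theorem~2.7 rather than stating Fano directly, which is equivalent to what you wrote.
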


Notice that the restricted condition-number condition takes a more relaxed form when $J$ is large ($J>q$) and allows $p$   to be greater than $n$. Setting $l(t)=t$, the   theorem shows that    the minimax optimal  rate for the risk $\EE \|\bsbX \bsbB^* - \bsbX \hat\bsbB\|_F^2$ is $\sigma^2 P_o(J, r) $. Also, with $l(t)=1_{t\geq 1}$, for any estimator $\hat\bsbB$, there exists a $\bsbB^*\in S(J, r)$ such that $\| \bsbX \bsbB^* - \bsbX \hat\bsbB\|_F^2\gtrsim \sigma^2 P_o(J, r)$ with positive probability. The minimax lower bound 
 suggests   $P_o$ be an ideal model-complexity penalty, which leads to a new information criterion for model comparison.

\begin{theorem} \label{th:pic}
Assume  $\bsbY = \bsbX \bsbB^* + \bsbE$ where $\vect(\bsbE)$ is sub-Gaussian with mean $\bsb0$ and scale bounded by $\sigma$. For any collection of (possibly random) non-zero matrices $\bsbB_1$, $\bsbB_2$, $\ldots$, select
the optimal one  by $$
\hat \bsbB = {\mathop{\rm arg\, min}}_{\bsbB_l }\frac{1}{2}\| \bsbY - \bsbX \bsbB_l\|^2_F + A \sigma^2 P_o(\bsbB_l),
$$
with the constant $A$ appropriately large.  Then $\hat \bsbB $
satisfies
\begin{align}
\EE[ \|  \bsbX \hat \bsbB  - \bsbX \bsbB^* \|^2_F ]
\lesssim \inf_{l}      \EE [ \|  \bsbX \bsbB_l - \bsbX \bsbB^* \|_F^2 +\sigma^2  P_o(\bsbB_l) ].\label{picrateres}
\end{align}
\end{theorem}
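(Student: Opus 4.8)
The plan is to prove \eqref{picrateres} by the classical ``basic inequality plus empirical‑process control'' argument, carrying out the empirical‑process bound through the sparse‑and‑low‑rank geometry that $P_o$ encodes. First I would record the basic inequality: by the definition of $\hat\bsbB$, for every index $l$,
\[
\tfrac12\|\bsbY-\bsbX\hat\bsbB\|_F^2+A\sigma^2 P_o(\hat\bsbB)\le\tfrac12\|\bsbY-\bsbX\bsbB_l\|_F^2+A\sigma^2 P_o(\bsbB_l),
\]
and substituting $\bsbY=\bsbX\bsbB^*+\bsbE$, cancelling $\|\bsbE\|_F^2$, and rearranging gives
\[
\tfrac12\|\bsbX\bsbB^*-\bsbX\hat\bsbB\|_F^2+A\sigma^2 P_o(\hat\bsbB)\le\tfrac12\|\bsbX\bsbB^*-\bsbX\bsbB_l\|_F^2+\langle\bsbE,\bsbX\hat\bsbB-\bsbX\bsbB_l\rangle+A\sigma^2 P_o(\bsbB_l).
\]
Everything then reduces to a uniform bound on the random cross term $\langle\bsbE,\bsbX\hat\bsbB-\bsbX\bsbB_l\rangle$, since $\hat\bsbB$ (and possibly $\bsbB_l$) depend on the data.

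The heart of the argument is a stochastic lemma: with probability at least $1-e^{-u}$, simultaneously for every column support $\mathcal J\subseteq\{1,\dots,p\}$ and every integer $\rho\ge1$,
\[
\sup\bigl\{\langle\bsbE,\bsbX\bsbB\rangle:\mathcal J(\bsbB)\subseteq\mathcal J,\ r(\bsbB)\le\rho,\ \|\bsbX\bsbB\|_F\le1\bigr\}\lesssim\sigma\bigl(\sqrt{P_o(|\mathcal J|,\rho)}+\sqrt u\bigr).
\]
For a fixed $(\mathcal J,\rho)$ the admissible $\bsbX\bsbB$ lie in the column space $\Proj_{\bsbX[,\mathcal J]}$, of dimension $(q\wedge|\mathcal J|)\,m$, and there they have rank at most $\rho$; a standard covering‑number (equivalently $\gamma_2$) estimate for rank‑constrained balls together with a sub‑Gaussian tail bound then gives the displayed bound with $P_o(|\mathcal J|,\rho)$ replaced by $\rho(q\wedge|\mathcal J|+m-\rho)$. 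A union bound over the $\binom{p}{|\mathcal J|}\le\exp\{|\mathcal J|\log(ep/|\mathcal J|)\}$ supports of that cardinality supplies precisely the missing $|\mathcal J|\log(ep/|\mathcal J|)$ term, upgrading the right‑hand side to $P_o(|\mathcal J|,\rho)$, and a further peeling over the at most $p$ values of $|\mathcal J|$ and $m$ values of $\rho$ costs only a $\log(pm)$ term, which is dominated by $P_o$. Applying this to $\bsbX\hat\bsbB-\bsbX\bsbB_l$, which is supported on $\mathcal J(\hat\bsbB)\cup\mathcal J(\bsbB_l)$ and has rank at most $r(\hat\bsbB)+r(\bsbB_l)$, and using the elementary subadditivity $P_o(J_1+J_2,r_1+r_2)\lesssim P_o(J_1,r_1)+P_o(J_2,r_2)$ (which follows from concavity of $t\mapsto t\log(ep/t)$ and $\min(q,a+b)\le\min(q,a)+\min(q,b)$), I obtain
\[
\langle\bsbE,\bsbX\hat\bsbB-\bsbX\bsbB_l\rangle\lesssim\|\bsbX\hat\bsbB-\bsbX\bsbB_l\|_F\cdot\sigma\bigl(\sqrt{P_o(\hat\bsbB)+P_o(\bsbB_l)}+\sqrt u\bigr).
\]

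To finish, Young's inequality turns the last display into $\tfrac18\|\bsbX\hat\bsbB-\bsbX\bsbB_l\|_F^2+C\sigma^2\bigl(P_o(\hat\bsbB)+P_o(\bsbB_l)+u\bigr)$; bounding $\|\bsbX\hat\bsbB-\bsbX\bsbB_l\|_F^2\le2\|\bsbX\hat\bsbB-\bsbX\bsbB^*\|_F^2+2\|\bsbX\bsbB_l-\bsbX\bsbB^*\|_F^2$, inserting into the basic inequality, moving $\tfrac14\|\bsbX\hat\bsbB-\bsbX\bsbB^*\|_F^2$ to the left‑hand side, and choosing $A$ large enough to absorb the constant $C$ (so the $P_o(\hat\bsbB)$ term can be dropped) yields, on the good event and uniformly in $l$,
\[
\|\bsbX\bsbB^*-\bsbX\hat\bsbB\|_F^2\lesssim\|\bsbX\bsbB^*-\bsbX\bsbB_l\|_F^2+\sigma^2 P_o(\bsbB_l)+\sigma^2 u.
\]
Integrating this tail estimate over $u\ge0$, then taking expectations and the infimum over $l$ — the leftover additive $\sigma^2$ being absorbed into $\sigma^2 P_o(\bsbB_l)$ because $P_o(\bsbB_l)\ge1+\log(ep)$ for every nonzero $\bsbB_l$, and using $\EE\inf_l\le\inf_l\EE$ — delivers \eqref{picrateres}.

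The main obstacle is the stochastic lemma: one must pin down the metric entropy of the union over supports of these rank‑constrained image sets so that the support union bound reproduces exactly $J\log(ep/J)$ while the low‑rank/Grassmannian component reproduces $(q\wedge J+m-r)r$, being careful that, since $p$ may exceed $n$, the ambient dimension of $\Proj_{\bsbX[,\mathcal J]}$ is $q\wedge J$ rather than $J$; and one must carry this out for genuinely sub‑Gaussian $\bsbE$ by way of a generic‑chaining maximal inequality in place of Gaussian concentration. The remaining ingredients — the basic inequality, Young's inequality, the selection of $A$, and the tail integration — are routine, and I note that no restricted‑eigenvalue or incoherence hypothesis is required because the estimate is stated entirely in the prediction seminorm $\|\bsbX\cdot\|_F$.
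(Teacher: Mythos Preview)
Your overall strategy and your stochastic lemma are correct and match the paper (your lemma is essentially Lemma~\ref{emprocBnd}). The gap is the step where you apply the lemma directly to $\bsbX\hat\bsbB-\bsbX\bsbB_l$ and invoke the ``elementary subadditivity'' $P_o(J_1{+}J_2,r_1{+}r_2)\lesssim P_o(J_1,r_1)+P_o(J_2,r_2)$: this inequality is \emph{false}. The two facts you cite handle the inflation term $J\log(ep/J)$ and the factor $q\wedge J$, but neither controls the bilinear cross terms $(q\wedge J_1)r_2+(q\wedge J_2)r_1$ that appear when you expand the degrees-of-freedom part. Concretely, take $q$ large, $m=K$, $J_1=N$, $r_1=1$, $J_2=r_2=K$, with $p\asymp N\gg K^2$; the rank of the difference caps at $m=K$, so $P_o(N{+}K,K)\asymp NK$, while $P_o(N,1)+P_o(K,K)\asymp N$, and the ratio is of order $K$. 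Consequently the complexity you attach to $\hat\bsbB$ can exceed $P_o(\hat\bsbB)$ by an unbounded factor, and no finite choice of $A$ closes the argument.

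The paper avoids this by not bounding $\langle\bsbE,\bsbX\bsbDelta\rangle$ in one shot. With $\bsbDelta=\hat\bsbB-\bsbB_l$, it writes $\bsbX\bsbDelta$ as an \emph{orthogonal} sum of three pieces, using the projector $\Proj_{\mathcal J}$ onto the column space of $\bsbX[\,,\mathcal J(\bsbB_l)]$ and the projector $\Proj_{rs}$ onto the row space of $\bsbX_{\mathcal J}\bsbB_l[\mathcal J,]$:
\[
\bsbX\bsbDelta=\Proj_{\mathcal J}\bsbX\bsbDelta\Proj_{rs}\;+\;\Proj_{\mathcal J}^{\perp}\bsbX_{\hat{\mathcal J}}\hat\bsbB_{\hat{\mathcal J}}\Proj_{rs}\;+\;\bsbX_{\hat{\mathcal J}}\hat\bsbB_{\hat{\mathcal J}}\Proj_{rs}^{\perp}.
\]
Each piece has rank at most $r(\bsbB_l)$ or $r(\hat\bsbB)$ individually---never their sum---and orthogonality makes $\|\bsbX\bsbDelta\|_F^2$ the sum of the three squared norms, so after Young's inequality the pieces recombine with the correct norm. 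The first and third pieces are handled by your lemma with complexities $P_o(J,r)$ and $P_o(\hat J,\hat r)$. The middle piece lives in the oblique space spanned by $\Proj_{\mathcal J}^{\perp}\bsbX_{\hat{\mathcal J}}$ and calls for a companion empirical-process bound (Lemma~\ref{emprocBnd2}) whose complexity contains $\log\binom{p}{J}+\log\binom{p}{\hat J}$ additively and is therefore dominated by $P_o(J,r)+P_o(\hat J,\hat r)$. It is this orthogonal splitting, not a subadditivity of $P_o$, that decouples $(J,r)$ from $(\hat J,\hat r)$.
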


We refer to this new information criterion as the   \textit{Predictive Information Criterion} (\textbf{PIC}). \eqref{picrateres} does not require $ \bsbB_l $ ($l\ge 1$) to cover the true model. But when $\bsbB^*$ is among the candidate matrices or there exists some $\bsbB_l$ close to $\bsbB^*$,   the risk of $\hat \bsbB$ is bounded above  by $\sigma^2 P_o \left( \bsbB^* \right)$ up to multiplicative constants.   When   $m=1$,   the complexity penalty in PIC is of the   order $\sigma^2 J \log ( ep/ J )$, is similar to (but finer than) the rate in RIC   \citep{FosterGeorge}. However, we do not   assume that $n - p$ must be large as in the derivation of RIC.
In fact,   PIC is     non-asymptotic in nature,  and   requires no large-$n$ assumption. Moreover, it does \emph{not} require
any incoherence condition which is commonly assumed in the literature.
From the bias-variance tradeoff in \eqref{picrateres}, setting $\bsbB_l = \bsbB^*$, the noise-free truth, may not yield the most accurate and parsimonious model,
especially when the noise level is large. This is well known in wavelet studies \citep{donoho1994}. When the signal is properly large, minimizing PIC also recovers the row support of $\bsbB^*$ with high probability under some regularity conditions; see Theorem~\ref{th:supp} in the  Appendix. 


In \eqref{optpen}, $[J(\bsbB) +m - r(\bsbB)] r(\bsbB)$ corresponds to the number of free parameters of the model or   degrees of freedom (\textbf{DF}),   typically  less than $mn$, while $J(\bsbB)\log (e p / J(\bsbB))$ characterizes the inflation (\textbf{IF}) due to the selection among $p$ predictors. The model-complexity penalty can then be written as $A_1 \sigma^2  \mbox{\textbf{DF}} + A_2 \sigma^2 \mbox{\textbf{IF}}$, where $A_1, A_2$ are constants. The additive form 
 is in contrast to the multiplicative form $ c(n, p) \times \sigma^2\mbox{\textbf{DF}} $ that is widely used in the literature, for example, $c(n, p) = \log p$ in BIC.


The unknown noise scale remains an issue in supervised learning, because in large-$p$ settings  estimating $\sigma$ could be as challenging as estimating the mean.
The following theorem, as a generalization of Theorem 3 in \cite{she2017selective},  presents four scale-free forms of PIC to \textit{bypass} the scale estimation under a model sparsity  assumption.
\begin{theorem} \label{th:sf-pic}
Let $\bsbY = \bsbX \bsbB^* + \bsbE$ where $\bsbE=[e_{ik}]$ with $e_{ik}$ $\overset{ i.i.d. }\sim \mathcal N(0, \sigma^2)$. Suppose the true model is parsimonious in the sense that $P_o(\bsbB^*) \allowbreak < mn / A_0$ for some  constant $A_0>0$.
  Let $\delta(\bsbB)=A P_o(\bsbB)/(mn )$ for some constant $ A<A_0$, and so $\delta(\bsbB^*)<1$. Consider the following  criteria
\begin{align}
& {\|\bsbY - \bsbX \bsbB\|_F^2}/[{ 1 - \delta(\bsbB)}],\label{eq:sf-pic}\\
& {   \|\bsbY - \bsbX \bsbB\|_F^2}/{ [1 - \delta(\bsbB) ]^2},\label{gcv-pic}\\
 & \log \{\|\bsbY - \bsbX \bsbB\|_F^2\}+   \delta(\bsbB),\label{log-pic}\\
 & \|\bsbY - \bsbX \bsbB\|_F^2  +    \delta(\bsbB){\|\bsbY - \bsbX \bsbB\|_F^2}.    \label{plugin-pic}
\end{align}
 Then, for sufficiently large  values of $A_0,A$,   any $\hat\bsbB$ that minimizes \eqref{eq:sf-pic}, \eqref{gcv-pic}, \eqref{log-pic}, or \eqref{plugin-pic}     subject to $\delta(\bsbB)<1$    satisfies $\| \bsbX \hat \bsbB - \bsbX \bsbB^*\|_F^2\lesssim \sigma^2 P_o(\bsbB^*)$  with probability at least $1 -C p^{-c}-C'\exp(-c' m n )$ for some  constants $C, C', c, c'>0$. 
\end{theorem}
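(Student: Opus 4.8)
The plan is to show that each of the four scale-free criteria implies, up to multiplicative constants, the same ``penalized-form'' optimality inequality that underlies Theorem~\ref{th:pic}, with the residual sum of squares $f(\bsbB):=\|\bsbY-\bsbX\bsbB\|_F^2$ acting as a self-calibrating proxy for $mn\sigma^2$, and then to reuse the uniform stochastic control behind that theorem. Let $\hat\bsbB$ minimize one of \eqref{eq:sf-pic}--\eqref{plugin-pic} subject to $\delta(\bsbB)<1$, and take $\bsbB^*$ as the comparison point, which is feasible since $\delta(\bsbB^*)<A/A_0<1$ by the parsimony hypothesis (if $\bsbB^*$ is not itself a candidate, comparison with a candidate whose \emph{s}-pattern spans that of $\bsbB^*$ is handled as in Theorem~\ref{th:pic}). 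The first step is a handful of elementary scalar estimates: for $x\in[0,A/A_0]$, convexity/secant bounds give $(1-x)^{-1}\le 1+c_0x$, $(1-x)^{-2}\le 1+c_0x$ and $e^{x}\le 1+c_0x$; for $y\in[0,1)$ one has $(1-y)^{-1}\ge 1+y$, $(1-y)^{-2}\ge 1+2y$ and $e^{-y}\le 1-y/2$; and for the logarithmic form \eqref{log-pic} I would use the tangent-line inequality $\log a-\log b\ge(a-b)/a$ instead of a Taylor bound. Feeding these into the optimality inequality of each criterion---for instance $f(\hat\bsbB)/(1-\delta(\hat\bsbB))\le f(\bsbB^*)/(1-\delta(\bsbB^*))$ for \eqref{eq:sf-pic}---yields in every case
\[
f(\hat\bsbB)+c_1\,\delta(\hat\bsbB)\,f(\hat\bsbB)\;\le\;f(\bsbB^*)+C_2\,\delta(\bsbB^*)\,f(\bsbB^*),\qquad c_1,C_2>0 .
\]

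Next I would turn this into a prediction-error bound. Substituting $\delta(\bsbB)=AP_o(\bsbB)/(mn)$ and $f(\bsbB^*)=\|\bsbE\|_F^2$, expanding $f(\hat\bsbB)=\|\bsbE\|_F^2+\|\bsbX\hat\bsbB-\bsbX\bsbB^*\|_F^2-2\langle\bsbE,\bsbX\hat\bsbB-\bsbX\bsbB^*\rangle$, and cancelling the common $\|\bsbE\|_F^2$ gives
\[
\|\bsbX\hat\bsbB-\bsbX\bsbB^*\|_F^2+\frac{c_1A}{mn}P_o(\hat\bsbB)\,f(\hat\bsbB)\;\le\;2\langle\bsbE,\bsbX\hat\bsbB-\bsbX\bsbB^*\rangle+\frac{C_2A}{mn}P_o(\bsbB^*)\,\|\bsbE\|_F^2 .
\]
A $\chi^2$ concentration bound gives $\|\bsbE\|_F^2\le(1+\epsilon)mn\sigma^2$ with probability $1-C'\exp(-c'mn)$, so the last term is $\lesssim\sigma^2P_o(\bsbB^*)$. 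For the second term on the left I need the crucial fact that the self-calibrating scale is genuinely of order $\sigma^2$, i.e.\ $f(\hat\bsbB)\ge\tfrac12 mn\sigma^2$. Since $\hat\bsbB$ is a restricted estimate, $\bsbX\hat\bsbB=\Proj_{\bsbX_{\bsbS\bsbU}}\bsbY$ and $f(\hat\bsbB)=\|(\bsbI-\Proj_{\bsbX_{\bsbS\bsbU}})\bsbY\|_F^2\ge\|(\bsbI-\Proj_{\bsbX_{\bsbS\bsbU}})\bsbE\|_F^2-C\sigma^2P_o(\hat\bsbB)$ after bounding the cross term $\langle\bsbE,(\bsbI-\Proj_{\bsbX_{\bsbS\bsbU}})\bsbX\bsbB^*\rangle$ uniformly over patterns; since $\|(\bsbI-\Proj_{\bsbX_{\bsbS\bsbU}})\bsbE\|_F^2=\|\bsbE\|_F^2-\|\Proj_{\bsbX_{\bsbS\bsbU}}\bsbE\|_F^2$ and $\|\Proj_{\bsbX_{\bsbS\bsbU}}\bsbE\|_F^2\lesssim\sigma^2P_o(\hat\bsbB)$ uniformly (by the same union bound---through $\log\binom{p}{j}\le j\log(ep/j)$---that proves Theorem~\ref{th:pic}), this gives $f(\hat\bsbB)\ge\|\bsbE\|_F^2-C\sigma^2P_o(\hat\bsbB)\ge mn\sigma^2(1-\epsilon-C/A)$, where I have used the feasibility constraint $\delta(\hat\bsbB)<1$, i.e.\ $P_o(\hat\bsbB)<mn/A$, together with parsimony. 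For $A$ large this is $\ge\tfrac12 mn\sigma^2$, whence $\frac{c_1A}{mn}P_o(\hat\bsbB)\,f(\hat\bsbB)\ge\tfrac{c_1A}{2}\,\sigma^2P_o(\hat\bsbB)$.

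Finally I would dispose of the stochastic term $\langle\bsbE,\bsbX\hat\bsbB-\bsbX\bsbB^*\rangle$ exactly as in the proof of Theorem~\ref{th:pic}: writing $\bsbX\hat\bsbB-\bsbX\bsbB^*=\Proj_{\bsbX_{\bsbS\bsbU}}\bsbE-(\bsbI-\Proj_{\bsbX_{\bsbS\bsbU}})\bsbX\bsbB^*$ and union-bounding over the candidate patterns yields, with probability $1-Cp^{-c}$, that $2\langle\bsbE,\bsbX\hat\bsbB-\bsbX\bsbB^*\rangle\le\tfrac12\|\bsbX\hat\bsbB-\bsbX\bsbB^*\|_F^2+C_3\sigma^2(P_o(\hat\bsbB)+P_o(\bsbB^*))$. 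Substituting this, absorbing $\tfrac12\|\bsbX\hat\bsbB-\bsbX\bsbB^*\|_F^2$ on the left, and choosing $A$ large enough (with $A_0>A$ correspondingly large) that $c_1A/2\ge C_3$, the $P_o(\hat\bsbB)$ terms cancel and I am left with $\|\bsbX\hat\bsbB-\bsbX\bsbB^*\|_F^2\lesssim\sigma^2P_o(\bsbB^*)$ on the intersection of the events above, of probability at least $1-Cp^{-c}-C'\exp(-c'mn)$. The main obstacle is the residual lower bound $f(\hat\bsbB)\gtrsim mn\sigma^2$ in the second step: this is exactly what licenses replacing the unknown $\sigma^2$ by the plug-in scale, and it is the place where the parsimony assumption $P_o(\bsbB^*)<mn/A_0$ and the feasibility constraint $\delta(\bsbB)<1$ are indispensable; a secondary technicality is the logarithmic criterion \eqref{log-pic}, whose reduction in step one requires the tangent-line inequality uniformly in $\delta(\hat\bsbB)\in[0,1)$.
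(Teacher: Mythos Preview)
Your overall strategy---reduce the optimality of each scale-free criterion to an inequality of the form used in Theorem~\ref{th:pic} and then invoke that theorem's uniform stochastic control together with $\chi^2$ concentration of $\|\bsbE\|_F^2$---matches the paper's. The difference lies in the algebraic reduction, and this is precisely where your self-identified ``main obstacle'' arises, unnecessarily.

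You pass from $f(\hat\bsbB)/(1-\delta(\hat\bsbB))\le f(\bsbB^*)/(1-\delta(\bsbB^*))$ to the \emph{additive} display $f(\hat\bsbB)+c_1\delta(\hat\bsbB)f(\hat\bsbB)\le f(\bsbB^*)+C_2\delta(\bsbB^*)f(\bsbB^*)$, which leaves the $P_o(\hat\bsbB)$ term multiplied by $f(\hat\bsbB)$ and forces you to establish $f(\hat\bsbB)\gtrsim mn\sigma^2$. The paper instead keeps the \emph{multiplicative} form: with $h(\bsbB;A)=1/(mn-AP_o(\bsbB))$ it writes $f(\hat\bsbB)\le f(\bsbB^*)\,h(\bsbB^*;2A)/h(\hat\bsbB;A)$ (using $1/(1-\delta(\bsbB^*))\le 1/(1-2\delta(\bsbB^*))$ for $A_0>2A$), and then expands. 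The gain is that the $P_o(\hat\bsbB)$ coefficient becomes $-A\|\bsbE\|_F^2/(mn-2AP_o(\bsbB^*))$, i.e.\ it is governed by $\|\bsbE\|_F^2$ rather than $f(\hat\bsbB)$. Since $\|\bsbE\|_F^2\ge(1-\gamma)mn\sigma^2$ on the same high-probability event you already invoke, no separate lower bound on $f(\hat\bsbB)$ is needed.

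Your proposed fix for the lower bound has gaps: the theorem does not assume candidates are restricted OLS estimates (so $\bsbX\hat\bsbB=\Proj_{\bsbX_{\bsbS\bsbU}}\bsbY$ requires justification), and the claim $\|\Proj_{\bsbX_{\bsbS\bsbU}}\bsbE\|_F^2\lesssim\sigma^2P_o(\hat\bsbB)$ uniformly over patterns is delicate because the projection $\bsbU$ is data-dependent and continuous; the natural union bound over $\bsbS$ alone gives $\|\Proj_{\bsbX_{\bsbS}}\bsbE\|_F^2\lesssim\sigma^2(Jm+J\log(ep/J))$, which can exceed $\sigma^2P_o(J,r)$ when $r\ll J$. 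These issues are repairable, but the paper's route simply sidesteps them.
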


We   call \eqref{eq:sf-pic}, \eqref{gcv-pic}, \eqref{log-pic}, \eqref{plugin-pic} the  fractional form, GCV  form,   logarithmic form, and   plug-in form of PIC, respectively. With the inflation term removed,    \eqref{gcv-pic} shares   similarity with GCV~\citep{wahba1990spline}; the log form  is commonly seen when   applying AIC or BIC with an unknown $\sigma^2$;  the penalty in  \eqref{plugin-pic} can be written as $A \hat \sigma ^2 P_o(\bsbB)$ with $\hat\sigma^2 =  \|\bsbY - \bsbX \bsbB\|_F^2/(mn),$ which resembles Mallows' $C_p$~\citep{mallows1973some}. The sparsity assumption and constraint cannot be dropped,  which in turn rule out over-complex models. 

 In common with most   non-asymptotic analyses, we showed the optimal rate but not the optimal numerical constants. The absolute constants    can be  determined by Monte Carlo experiments---for example, in the fractional form  
  we recommend using $\|\bsbY - \bsbX \bsbB\|_F^2/[ 1 - (2 \cdot \mbox{DF} + 1.8 \cdot \mbox{IF})/(mn)] $. 


\section{Rate Calibration of Cross-validation}
This  section studies    \textit{extra-sample error} that cross-validation methods try to estimate \citep[Chapter 7]{ESL2}.
 It turns out that there is a connection between the in-sample error and the extra-sample error, which can be used to guide cross-validation.

To define the extra-sample error, we assume that   the row observations in the design are i.i.d.  and independent of the noise component.
Specifically, assume $\bsbY = \bsbX \bsbB^* + \bsbE$, where $\bsbE=[e_{ik}]$ satisfies $\EE [\bsbE \bsbE^T]=\sigma^2 m \bsbI$, $\bsbX=[\tilde \bsbx_1 \, \cdots \, \tilde \bsbx_n]^T$ is  independent of $\bsbE$ and has  i.i.d. rows   with $\bsbSig$  (positive-definite) as the  covariance matrix. The $s$-pattern from $\bsbB^{*}$ is denoted by $\bsbS^{*} \bsbU^{*}$. Let $\bsbS \bsbU$ be a given candidate   \emph{s}-pattern with $\bar r  =r(\bsbS( \bsbB) \bsbU( \bsbB)) $.  Recall the training error and the structural cross-validation error
\begin{align*}
& \mbox{Trn-Err}     \triangleq  \|\bsbY - \bsbX \hat\bsbB \|_F^2,  \quad
\\ &\mbox{CV-Err}     \triangleq   \sum_{k=1}^K \|\bsbY^{{k}} - \bsbX^{{k}} \hat\bsbB^{-{k}} \|_F^2,
\end{align*}
 where $\hat\bsbB$,  $\hat \bsbB^{-{k}}$ are the restricted OLS estimates associated with the given \emph{s}-pattern, obtained  on   the overall data and the data without the $k$-th subset, respectively. Concretely,   $$\hat \bsbB  = \bsbS \bsbU (\bsbX_{  \bsbS\bsbU }^T \bsbX_{ \bsbS\bsbU} )^{-1} \bsbX_{ \bsbS\bsbU}^T \bsbY,  \hat  \bsbB ^{-k}  = \bsbS \bsbU ((\bsbX_{  \bsbS\bsbU }^{-k}) ^T \bsbX_{ \bsbS\bsbU}^{-k} )^{-1}(\bsbX_{  \bsbS\bsbU }^{-k}) ^T\bsbY^{-k}$$  with $\bsbX_{\bsbS\bsbU}=\bsbX \bsbS \bsbU$ and $\bsbX_{  \bsbS\bsbU }^{-k}=\bsbX^{-k}   \bsbS\bsbU $, where  all   the restricted OLS  problems are assumed to be non-degenerate.
Suppose that    $n=dK$ for some integer $d$ (and so  $d= n/K $). The theorem below  gives an identity of the cross-validation error. 
\begin{theorem} \label{th:iden}
Given any \emph{s}-pattern $\bsbS\bsbU$ with      $\bar r = r(\bsbS\bsbU)$, the following {identity} holds
\begin{align}
\EE[\mbox{CV-Err}] &= \EE[\mbox{Trn-Err}] + D + U,
\end{align}
where
\begin{align}
D  = \  & m \bar r \sigma^2(1 + \frac{n}{\bar r}{ \EE[Tr\{ (\bsbZ_{n-d}^T \bsbZ_{n-d})^{-1}\}] } ),
\end{align}
and
\begin{align}
\begin{split}
U \  = \ \  &\EE[\|\bsbX_{\bsbS\bsbU} ( { \bar  \bsbB_{\bsbS\bsbU, \bsbS^* \bsbU^*}} - \EE {   \bsbB_{\bsbS\bsbU, \bsbS^* \bsbU^*}} )\|_F^2] \\ \ \     +& \EE[\|\bsbSig^{1/2}_  {\bsbS\bsbU} ( {    \bsbB_{\bsbS\bsbU, \bsbS^* \bsbU^*}} - \EE {   \bsbB_{\bsbS\bsbU, \bsbS^* \bsbU^*}} )\|_F^2].\end{split}\label{defU}
\end{align}
Here, $\bsbZ_{n-d}\in \mathbb R^{(n-d)\times \bar r}$ is a submatrix composed of (any) $n-d$ rows of $\bsbZ= \bsbX_{  \bsbS \bsbU}  (\bsbS \bsbU)^T \allowbreak\bsbSig^{-1/2} \bsbS \bsbU $, $ \bsbSig_{\bsbS\bsbU} = (\bsbS\bsbU)^T (n\bsbSig) \bsbS \bsbU$,
$  \bsbB_{\bsbS\bsbU, \bsbS^* \bsbU^*}$ is $((\bsbX_{\bsbS\bsbU}^{-k})^T \bsbX_{\bsbS\bsbU}^{-k})^{-1} \allowbreak (\bsbX_{\bsbS\bsbU}^{-k})^T (\bsbX^{-k}  \Proj^{\circ}   \bsbB^*)$ with say $k=1$,  $ \bar  \bsbB_{\bsbS\bsbU, \bsbS^* \bsbU^*}= ( \bsbX_{\bsbS\bsbU}^T \bsbX_{\bsbS\bsbU} )^{-1}  \bsbX_{\bsbS\bsbU}^T \allowbreak (\bsbX  \Proj^{\circ}   \bsbB^*)$, and $\Proj^\circ$ is the orthogonal projection onto $\Proj_{\bsbS^* \bsbU^*}\setminus (\Proj_{\bsbS^* \bsbU^*}\cap\Proj_{\bsbS \bsbU})$, i.e., the orthogonal complement of $\Proj_{\bsbS^* \bsbU^*}\cap\Proj_{\bsbS \bsbU}$ in the subspace $\Proj_{\bsbS^* \bsbU^*}$.  
\end{theorem}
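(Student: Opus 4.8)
\medskip
\noindent\textbf{Proof plan.}
The plan is to expand $\mbox{CV-Err}$ and $\mbox{Trn-Err}$ block-by-block over the $K$ folds and then to run two successive orthogonal (``bias--variance'') decompositions: one over the noise $\bsbE$ with the design $\bsbX$ held fixed, and one over the random design with the signal held fixed. This exploits three independence facts: $\bsbE\perp\bsbX$; the rows of $\bsbX$ are i.i.d.; and fold $k$ is independent of its complement. Write $\bsbSig_0=(\bsbS\bsbU)^T\bsbSig\,\bsbS\bsbU$, so that $\bsbSig_{\bsbS\bsbU}=n\bsbSig_0$ and $\EE[(\bsbX_{\bsbS\bsbU}^k)^T\bsbX_{\bsbS\bsbU}^k]=d\,\bsbSig_0$, and note that all restricted Gram matrices below are invertible by the non-degeneracy assumption. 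As a first step, $\bsbY^k-\bsbX^k\hat\bsbB^{-k}=\bsbE^k+\bsbX^k(\bsbB^*-\hat\bsbB^{-k})$; since $\bsbE^k$ is independent of $(\bsbX,\bsbE^{-k})$ with mean $\bsb0$, the cross term has zero expectation and $\EE\|\bsbE^k\|_F^2=dm\sigma^2$, giving $\EE[\mbox{CV-Err}]=nm\sigma^2+\sum_{k=1}^K\EE\|\bsbX^k(\bsbB^*-\hat\bsbB^{-k})\|_F^2$. The same device applied to $\bsbX\hat\bsbB=\Proj_{\bsbX_{\bsbS\bsbU}}\bsbY$ yields $\EE[\mbox{Trn-Err}]=\EE\|(\bsbI-\Proj_{\bsbX_{\bsbS\bsbU}})\bsbX\bsbB^*\|_F^2+(n-\bar r)m\sigma^2$.

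For the term $D$ (variance over $\bsbE$), condition on $\bsbX$ and decompose $\hat\bsbB^{-k}=\bsbS\bsbU\hat\bsbC^{-k}$ into its $\bsbE$-mean $\bsbS\bsbU\bsbC_b^{-k}$, where $\bsbC_b^{-k}=((\bsbX_{\bsbS\bsbU}^{-k})^T\bsbX_{\bsbS\bsbU}^{-k})^{-1}(\bsbX_{\bsbS\bsbU}^{-k})^T\bsbX^{-k}\bsbB^*$, plus a mean-zero term. The cross term vanishes given $\bsbX$, and the mean-zero term contributes $\sigma^2 m\,\mathrm{tr}\{(\bsbX_{\bsbS\bsbU}^k)^T\bsbX_{\bsbS\bsbU}^k((\bsbX_{\bsbS\bsbU}^{-k})^T\bsbX_{\bsbS\bsbU}^{-k})^{-1}\}$. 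Taking $\EE$ over $\bsbX$ (using $\bsbX^k\perp\bsbX^{-k}$ and $\EE[(\bsbX_{\bsbS\bsbU}^k)^T\bsbX_{\bsbS\bsbU}^k]=d\,\bsbSig_0$) and summing over the $K$ exchangeable folds produces the total noise contribution $nm\sigma^2\,\mathrm{tr}(\bsbSig_0\,\EE[\bsbG_{n-d}^{-1}])$, where $\bsbG_{n-d}=(\bsbX_{\bsbS\bsbU}^{-k})^T\bsbX_{\bsbS\bsbU}^{-k}$ for any fixed $k$. Collecting the noise bookkeeping from the two expansions, one finds
\[ nm\sigma^2+nm\sigma^2\,\mathrm{tr}(\bsbSig_0\,\EE[\bsbG_{n-d}^{-1}])-(n-\bar r)m\sigma^2=\bar r\,m\sigma^2(1+\tfrac{n}{\bar r}\,\mathrm{tr}(\bsbSig_0\,\EE[\bsbG_{n-d}^{-1}])), \]
which equals $D$ once one notes that, $\bsbZ_{n-d}$ being the rows of $\bsbX_{\bsbS\bsbU}^{-k}$ whitened to unit covariance, $\mathrm{tr}(\bsbSig_0\,\bsbG_{n-d}^{-1})=\mathrm{tr}\{(\bsbZ_{n-d}^T\bsbZ_{n-d})^{-1}\}$.

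It remains to identify $\sum_{k=1}^K\EE\|\bsbX^k\bsbB^*-\bsbX_{\bsbS\bsbU}^k\bsbC_b^{-k}\|_F^2-\EE\|(\bsbI-\Proj_{\bsbX_{\bsbS\bsbU}})\bsbX\bsbB^*\|_F^2$ with $U$ (variance over the design). Split $\bsbB^*=\Proj_{\bsbS^*\bsbU^*}\bsbB^*$ along $\Proj_{\bsbS^*\bsbU^*}\cap\Proj_{\bsbS\bsbU}$ and its orthogonal complement $\Proj^\circ$ inside $\Proj_{\bsbS^*\bsbU^*}$: the first component lies in the column space of $\bsbS\bsbU$, hence is reproduced \emph{exactly} by the restricted least-squares projection onto the column space of $\bsbX_{\bsbS\bsbU}$ and onto that of each $\bsbX_{\bsbS\bsbU}^{-k}$, so replacing $\bsbB^*$ by $\bsbB^\circ:=\Proj^\circ\bsbB^*$ changes neither term. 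Consequently $\bsbX^k\bsbB^*-\bsbX_{\bsbS\bsbU}^k\bsbC_b^{-k}=\bsbX^k\bsbB^\circ-\bsbX_{\bsbS\bsbU}^k\widetilde{\bsbB}^{-k}$ with $\widetilde{\bsbB}^{-k}=((\bsbX_{\bsbS\bsbU}^{-k})^T\bsbX_{\bsbS\bsbU}^{-k})^{-1}(\bsbX_{\bsbS\bsbU}^{-k})^T\bsbX^{-k}\bsbB^\circ$ (which is $\bsbB_{\bsbS\bsbU,\bsbS^*\bsbU^*}$ for $k=1$), and $(\bsbI-\Proj_{\bsbX_{\bsbS\bsbU}})\bsbX\bsbB^*=(\bsbI-\Proj_{\bsbX_{\bsbS\bsbU}})\bsbX\bsbB^\circ$. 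Put $\bsbB_0=\EE[\widetilde{\bsbB}^{-k}]$, the same for every $k$ by exchangeability and equal to $\EE\bsbB_{\bsbS\bsbU,\bsbS^*\bsbU^*}$. Now apply two orthogonal splits. First, conditioning on $\bsbX^k$ and using $\widetilde{\bsbB}^{-k}\perp\bsbX^k$ with mean $\bsbB_0$, each fold's term becomes $\EE\|\bsbX^k\bsbB^\circ-\bsbX_{\bsbS\bsbU}^k\bsbB_0\|_F^2+\EE\|\bsbX_{\bsbS\bsbU}^k(\widetilde{\bsbB}^{-k}-\bsbB_0)\|_F^2$. Second, reassembling the first pieces into $\EE\|\bsbX\bsbB^\circ-\bsbX_{\bsbS\bsbU}\bsbB_0\|_F^2$ and applying Pythagoras relative to the column space of $\bsbX_{\bsbS\bsbU}$ (using $\Proj_{\bsbX_{\bsbS\bsbU}}\bsbX\bsbB^\circ=\bsbX_{\bsbS\bsbU}\bar\bsbB_{\bsbS\bsbU,\bsbS^*\bsbU^*}$) gives $\EE\|(\bsbI-\Proj_{\bsbX_{\bsbS\bsbU}})\bsbX\bsbB^\circ\|_F^2+\EE\|\bsbX_{\bsbS\bsbU}(\bar\bsbB_{\bsbS\bsbU,\bsbS^*\bsbU^*}-\bsbB_0)\|_F^2$. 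Subtracting $\EE\|(\bsbI-\Proj_{\bsbX_{\bsbS\bsbU}})\bsbX\bsbB^\circ\|_F^2$ leaves the first term of $U$; and $\sum_k\EE\|\bsbX_{\bsbS\bsbU}^k(\widetilde{\bsbB}^{-k}-\bsbB_0)\|_F^2$, by $\bsbX^k\perp\widetilde{\bsbB}^{-k}$ and $\EE[(\bsbX_{\bsbS\bsbU}^k)^T\bsbX_{\bsbS\bsbU}^k]=d\,\bsbSig_0=\tfrac{d}{n}\bsbSig_{\bsbS\bsbU}$, equals $\tfrac{Kd}{n}\EE\|\bsbSig_{\bsbS\bsbU}^{1/2}(\bsbB_{\bsbS\bsbU,\bsbS^*\bsbU^*}-\bsbB_0)\|_F^2$, which is the second term of $U$ since $Kd=n$. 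This establishes the identity.

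The two variance computations, and the whitening that rewrites the trace via $\bsbZ_{n-d}$, are routine. The crux of the argument is the design-variance step: spotting that $\bsbB^*$ should be split along $\Proj_{\bsbS^*\bsbU^*}\cap\Proj_{\bsbS\bsbU}$, verifying that the shared component is reproduced \emph{exactly} by restricted least squares on every sub-design (so that it drops out of the difference), and then being scrupulous that \emph{both} orthogonal decompositions are centered at the same matrix $\bsbB_0=\EE\bsbB_{\bsbS\bsbU,\bsbS^*\bsbU^*}$ --- the $(n-d)$-sample least-squares mean --- even though one of the two resulting quadratic terms involves the \emph{full}-sample least-squares coefficient $\bar\bsbB_{\bsbS\bsbU,\bsbS^*\bsbU^*}$; that asymmetry is exactly why $U$ is a sum of two quadratic pieces rather than one.
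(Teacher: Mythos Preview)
Your proof is correct and rests on the same three pillars as the paper's: (i) separating the noise contribution from the signal contribution, (ii) reducing $\bsbB^*$ to $\Proj^\circ\bsbB^*$ by observing that the component in $\Proj_{\bsbS\bsbU}\cap\Proj_{\bsbS^*\bsbU^*}$ is reproduced exactly by every restricted least-squares fit, and (iii) centering the fold-wise regression coefficients at their common mean $\bsbB_0=\EE\bsbB_{\bsbS\bsbU,\bsbS^*\bsbU^*}$ and exploiting the independence between a validation fold and its training complement.

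The main difference is in execution. The paper writes the three bias terms (two CV folds minus the full-data training) as explicit quadratic forms in block Gram matrices $\bsbOmega_1,\bsbOmega_2,\bsbOmega_{12}$, substitutes $\bsbOmega_1^{-1}\bsbOmega_{12}=\bsbR+\bsbDelta$ (and the $\tilde\cdot$, $\bar\cdot$ analogues), expands, and simplifies the surviving pieces algebraically. You instead run two nested bias--variance decompositions via conditioning: first on $\bsbX^k$ to peel off $\EE\|\bsbX_{\bsbS\bsbU}^k(\widetilde\bsbB^{-k}-\bsbB_0)\|_F^2$, then Pythagoras in the column space of $\bsbX_{\bsbS\bsbU}$ to peel off $\EE\|\bsbX_{\bsbS\bsbU}(\bar\bsbB_{\bsbS\bsbU,\bsbS^*\bsbU^*}-\bsbB_0)\|_F^2$. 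Your route avoids the block-matrix bookkeeping and makes the role of fold independence and of the common center $\bsbB_0$ more transparent; the paper's route is more mechanical but self-contained. Both land on exactly the same two quadratic pieces that make up $U$.
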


According to \eqref{defU},      $U$ is always nonnegative. In the   classical asymptotic regime where the sample size tends to $\infty$ and    $m, p$ are fixed, $\bsbB_{\bsbS\bsbU, \bsbS^* \bsbU^*}$, as well as  $\bar \bsbB_{\bsbS\bsbU, \bsbS^* \bsbU^*}$, will approach $\EE {   \bsbB_{\bsbS\bsbU, \bsbS^* \bsbU^*}}$ because of the law of large numbers. In finite samples, however, $U$ incurs extra   cost for
underfitting models,  which could be   an advantage over some information criteria. From the definition of   $\Proj^\circ$, $U$   vanishes for over-fitting
models that   satisfy  $\bsbB^*\in \Proj_{\bsbS\bsbU}$, and hence is   not active  for ``large" models.

  The term  $D$ penalizes the cardinality of the model, and relates to  the
degrees of freedom. The definition of  $D$ is based on $\bsbZ$  which is   just    $\bsbX {\bsbS \bsbU}$    decorrelated. Because the rows of $\bsbZ$ are isotropic (i.e., the covariance of each row vector is $\bsbI$),  it is not difficult to show an upper bound of $D$  using  random matrix theory. We give a   corollary   as an illustration under the assumption that the rows of $\bsbX$  are i.i.d.   $\mathcal N (\boldsymbol 0, \bsbSig)$ (such designs are  widely used in simulation studies).
\begin{corollary} \label{iden-cor} Under the setup of the previous theorem and  the Gaussian assumption of the design,
\begin{align}
\EE[\mbox{CV-Err}] = & \EE \left [ \mbox{Trn-Err}  +{ \frac{(2-(\bar r +1)/n)K-1}{(1-(\bar r +1)/n)K-1} m\bar r \sigma^2 }\right ] + U, \label{guassian-cverr}
\end{align}
where $U$ is as defined in Theorem \ref{th:iden}. In particular, for ordinary    variable selection   with  $m=1$, the identity for any given support $\mathcal J$ reads
\begin{align*}
 \EE[\mbox{CV-Err}] &=     \EE  [ \mbox{Trn-Err}  ]+{ \frac{(2-(J +1)/n)K-1}{(1-(J +1)/n)K-1} J \sigma^2 }   \\&  \ +\EE[ \|\bsbX_{\mathcal J} ( \bar \bsbb_{\mathcal J, \mathcal J^* } - \EE    \bsbb_{\mathcal J, \mathcal J^* } )\|_2^2] +\EE[ \|\bsbSig_{\mathcal J} ^{1/2}(   \bsbb_{\mathcal J, \mathcal J^* } - \EE    \bsbb_{\mathcal J, \mathcal J^* } )\|_2^2] ,
\end{align*}
where  $J=|\mathcal J|$,  $\bsbSig_{\mathcal J} = n\bsbSig[\mathcal J, \mathcal J]$, $\bsbX_{\mathcal J} = \bsbX[\mathcal J,]$, $ \bar \bsbb_{\mathcal J, \mathcal J^*}= (\bsbX_{\mathcal J}^T \bsbX_{\mathcal J})^{-1} \bsbX_{\mathcal J}^T \bsbX_{\mathcal J^* \cap \mathcal J^c} \allowbreak\bsbb_{{ \mathcal J^* \cap \mathcal J^c }}^*$ and $  \bsbb_{\mathcal J, \mathcal J^*}= ((\bsbX_{\mathcal J}^{-k})^T \allowbreak \bsbX_{\mathcal J}^{-k})^{-1} (\bsbX_{\mathcal J}^{-k})^T \bsbX_{\mathcal J^* \cap \mathcal J^c}^{-k} \bsbb_{{ \mathcal J^* \cap \mathcal J^c }}^*$.

\end{corollary}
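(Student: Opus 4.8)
The proof I have in mind is a direct corollary of Theorem~\ref{th:iden}: one keeps the identity $\EE[\mbox{CV-Err}] = \EE[\mbox{Trn-Err}] + D + U$ as is, leaves the term $U$ completely untouched (the Gaussian-design hypothesis is never used there), and only needs to make the term $D$ explicit. So the whole task reduces to evaluating $\EE[Tr\{(\bsbZ_{n-d}^T \bsbZ_{n-d})^{-1}\}]$ under the assumption that the rows of $\bsbX$ are i.i.d. $\mathcal N(\bsb0,\bsbSig)$, and then doing elementary algebra with $d=n/K$.

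For the term $D$, I would first record that under the Gaussian design the decorrelated matrix $\bsbZ$ of Theorem~\ref{th:iden} has i.i.d. $\mathcal N(\bsb0,\bsbI_{\bar r})$ rows (this is precisely the isotropy noted just before the corollary, and uses only that $\bsbS\bsbU$ has orthonormal columns). Because the rows are i.i.d., the distribution of $\bsbZ_{n-d}^T\bsbZ_{n-d}$ does not depend on which $n-d$ rows are taken, and $\bsbZ_{n-d}^T\bsbZ_{n-d}$ follows a standard Wishart law $W_{\bar r}(n-d,\bsbI)$ with $n-d$ degrees of freedom. Provided $n-d-\bar r-1>0$ — equivalently $(1-(\bar r+1)/n)K-1>0$, which is exactly the condition making the claimed formula well-defined and is guaranteed by the non-degeneracy of the restricted OLS fits together with a mild margin — the inverse-Wishart first-moment formula gives $\EE[(\bsbZ_{n-d}^T\bsbZ_{n-d})^{-1}]=\bsbI/(n-d-\bar r-1)$, hence $\EE[Tr\{(\bsbZ_{n-d}^T\bsbZ_{n-d})^{-1}\}]=\bar r/(n-d-\bar r-1)$. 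Substituting this into $D=m\bar r\sigma^2(1+\tfrac{n}{\bar r}\EE[Tr\{(\bsbZ_{n-d}^T\bsbZ_{n-d})^{-1}\}])=m\bar r\sigma^2(1+\tfrac{n}{n-d-\bar r-1})$, then plugging $d=n/K$ so that $n-d=n(K-1)/K$ and simplifying $1+\tfrac{n}{n(K-1)/K-(\bar r+1)}=\tfrac{(2-(\bar r+1)/n)K-1}{(1-(\bar r+1)/n)K-1}$, yields \eqref{guassian-cverr}.

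For the special case $m=1$ I would specialize the $s$-pattern construction: since $r(\bsbB)\le m=1$, step~(b) forces $\bsbU=\bsbI$, so $\bsbS\bsbU=\bsbI[,\mathcal J]$, $\bar r=J$, $\bsbX_{\bsbS\bsbU}=\bsbX[,\mathcal J]=\bsbX_{\mathcal J}$ and $\bsbSig_{\bsbS\bsbU}=n\bsbSig[\mathcal J,\mathcal J]=\bsbSig_{\mathcal J}$. Likewise $\Proj_{\bsbS^*\bsbU^*}$ and $\Proj_{\bsbS\bsbU}$ are the coordinate subspaces of $\mathbb R^p$ indexed by $\mathcal J^*$ and $\mathcal J$, whose intersection is the coordinate subspace indexed by $\mathcal J^*\cap\mathcal J$; thus $\Proj^\circ$ is the coordinate projection onto $\mathcal J^*\cap\mathcal J^c$, so $\bsbX\Proj^\circ\bsbB^*=\bsbX_{\mathcal J^*\cap\mathcal J^c}\bsbb^*_{\mathcal J^*\cap\mathcal J^c}$ and similarly with $\bsbX^{-k}$. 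Substituting these identifications into $\bar\bsbB_{\bsbS\bsbU,\bsbS^*\bsbU^*}$ and $\bsbB_{\bsbS\bsbU,\bsbS^*\bsbU^*}$ reproduces $\bar\bsbb_{\mathcal J,\mathcal J^*}$ and $\bsbb_{\mathcal J,\mathcal J^*}$, and setting $\bar r=J$, $m=1$ in the $D$ term delivers the displayed scalar identity. The only genuine content is the isotropy claim for $\bsbZ$ and the inverse-Wishart moment; the $m=1$ reduction is pure bookkeeping once the coordinate interpretation of $\Proj^\circ$ is written down. Accordingly, the one point that deserves care — rather than being a real obstacle — is tracking the non-degeneracy condition $n-d-\bar r-1>0$ under which the inverse-Wishart expectation, and hence the stated formula, is finite.
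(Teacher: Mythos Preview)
Your proposal is correct and follows essentially the same route as the paper: invoke Theorem~\ref{th:iden}, leave $U$ untouched, and evaluate $D$ via the inverse-Wishart first moment of $(\bsbZ_{n-d}^T\bsbZ_{n-d})^{-1}$ under the Gaussian design, followed by the algebra with $d=n/K$. Your write-up is in fact more careful than the paper's one-line argument---you correctly record the mean as $\bsbI/(n-d-\bar r-1)$ (the paper's appendix has a typo here, writing $n-d-1$), you track the finiteness condition $n-d-\bar r-1>0$, and your $m=1$ bookkeeping via the coordinate interpretation of $\Proj^\circ$ is exactly the intended specialization.
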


In  \eqref{guassian-cverr},
 $D$ is a decreasing function in $K$. When $K=n$, $D$ is  essentially   $2 m\bar r \sigma^2$; even when    $K$  is as  small as $2$,    $D\le \{3+4\theta/(1-2\theta)\} m \bar r\sigma^2$ under $(\bar r+1)/n\le \theta$, and so for $n$   large, the   
    $\mbox{CV-Err}$ of an over-fitting model is   no larger than   \begin{align}\mbox{Trn-err} + A m \bar r \sigma^2. \label{plaincvrate}\end{align}
This means that   the extra-sample error of $K$-fold cross-validation is not significantly different from that in the in-sample error formula \eqref{insampleErr}, which is, however, perhaps natural.

Returning to the jointly sparse model (where $\bar r = r(\bsbS(\bsbB)\bsbU(\bsbB))=r(\bsbB)=r$),
    combining the identity  with PIC gives a choice   of   $K$   to  match  the minimax rate in~\eqref{optpen}:
$
K=\{A P_o(J, r)  - mr\}/\{A P_o(J, r)    -2mr - (A P_o(J, r)  -mr)(r+1)/n\}$.
In large-$p$ problems, however, the value is   below $2$ and  unattainable; consequently,   $K$-fold cross-validation ($K\ge 2$) cannot   penalize the  complexity of the     model sufficiently heavily. 

 One possible fix is to use   delete-$d$ cross-validation~\citep{shao1993linear} which  removes $d$ observations in each training and  $d$ can be larger than $n/2$. A similar identity to that in Theorem \ref{th:iden}  holds for this form of  cross-validation,  and  to match the PIC rate, the training sample size    $n-d$ should be  of order   $ {mnr}/( A P_o(J,r)  -mr)+r+1 $ or $   O( n\cdot  {mr}/({  J r + J\log p})) $. Delete-$d$ cross-validation can be implemented  by enumerating all   subsets of size $d$, or in a stochastic fashion by randomly splitting the whole dataset many times. But neither is   computationally efficient for large $n$ and we shall not pursue further in this work.

By contrast, the commonly used 5-fold CV and 10-fold CV are much less expensive, and these  fold choices enjoy small variance \citep{ESL2}.
 Although $K$-fold CV ($2\le K\le n$) is suboptimal as seen from \eqref{plaincvrate} and \eqref{optpen}, we can make a   rate calibration  as in the in-sample error case. Let $\mbox{R} = (q\wedge  J   - r ) r$ and recall $\mbox{IF} = J \log (ep/J)$. From Theorems~\ref{th:pic} and~\ref{th:iden}, if $\sigma^2$ is known, one can append a bias correction term of the order $\sigma^2 \mbox{R} + \sigma^2 \mathrm{IF}$ to the CV-Err. When $\sigma$ is unknown, motivated by the plug-in form of the $\sigma$-free PIC (cf. Theorem~\ref{th:sf-pic}),   we can use the following   calibrated  structural cross-validation error as the model selection criterion:
\begin{align}
\mbox{\textbf{SCV-Err}}: =    {\mbox{CV-Err}} + \alpha_1 (\mbox{Trn-Err}/( mn))  \mbox{R}   +\alpha_2   (\mbox{Trn-Err}/( mn) ) \mbox{IF}, \label{ratecorrSCV}
\end{align}
where   $\alpha_1, \alpha_2$ are constants.
The rate correction is evident from the identity: the calibrated CV error \eqref{ratecorrSCV} has a complexity penalty of the form  $\sigma^2   \mbox{{DF}} + \sigma^2\mbox{{IF}}$, complying with the studies in Section \ref{sec:insample}. According to  the constraint $\delta(\bsbB) < 1$ in  Theorem~\ref{th:sf-pic},         the candidate models that are far too complex (say $\alpha_1 \mbox{DF} + \alpha_2 \mbox{IF} > mn$) should be excluded;   equivalently,  we set their $\mbox{SCV-Err}$ to be $+\infty$. For $K=5$, we recommend  $\alpha_1=4.6$ and $\alpha_2=3.5$ on the basis of  Monte Carlo experiments.
 (The rate correction of SCV applies to any choice of $K$, but the  numerical constants may be different.)   Of course, other forms are possible---for example,   
  the fractional form, $  \mbox{CV-Err} /[ 1 - \alpha_1 \mbox{R}  / (mn + \alpha mr) -\alpha_2    \mbox{IF}/  (mn+\alpha mr)  ]$, or simply $$ {\mbox{CV-Err}}/( 1 - \alpha_1 \mbox{R}  /( mn)  -  \alpha_2    \mbox{IF}/(  mn)  )$$ with $\alpha_1=2$ and $\alpha_2=2.4$  also gives satisfactory performance.
We   apply the plug-in form \eqref{ratecorrSCV}  as the SCV error in the regression setting.

\section{Experiments}
 \subsection{Simulations}
We use the following setup for the simulation studies. The design matrix $\bsbX$ has i.i.d. rows from $\mathcal{N} \left( \boldsymbol 0, \bsbSig \right)$ with $\Sigma_{jk} = \rho^{ |j - k|}$, $\rho >0$, $1 \leq j$, $k \leq p$. The coefficient matrix has the form $\bsbB^{*} = [ b (\bsbA_0 \bsbA_1)^T \,\, \boldsymbol 0 ]^T$
where $b$ is a constant, $\bsbA_0$ is a $J \times r$ matrix and $\bsbA_1$ is an $r \times m$ matrix. Entries in $\bsbA_0$ and $\bsbA_1$ are i.i.d. standard Gaussian. Entries in $\bsbB^{*}$ past the $J$th row are all zero. The matrix $\bsbE$ has i.i.d. standard Gaussian entries and the response matrix is $\bsbY = \bsbX \bsbB^{*} + \sigma \bsbE$.

We consider the following cases of $p > n$ and $n >p$, with two different correlation levels and signal strengths in generating the design $\bsbX$:
\begin{enumerate}
\item $n > p$: $n=100$, $J =30$, $p=60$, $m=15$, $r=5$, $\sigma=1$, $\rho=0.1, 0.5$, $b = 0.1, 0.5$.
\item $p > n$: $n = 30$, $J = 15$, $p = 100$, $m = 10$, $r=2$, $\sigma=1$, $\rho=0.1, 0.5$, $b = 0.2, 1$.
\end{enumerate}
We ran $200$ simulations for each setting. In common with other studies on high-dimensional parameter tuning~\citep{chenchen}, we called a learning algorithm (cf. \cite{BSW12} and~\cite{ChenHuang12}) on each synthetic dataset to compute a solution path of candidate estimates. Then the associated projection-selection patterns were extracted according to the procedure in Section~\ref{sec:structural}.

Six model selection criteria were compared: AIC, BIC, EBIC, all taking    the logarithmic  form due to the unknown noise level, PIC as in  \eqref{eq:sf-pic}, $K$-fold CV with $K=2,5,10$, and $5$-fold SCV---see~\eqref{ratecorrSCV}. (
We found that the performance of $K$-fold SCV was similar for $K=2,5,10$, and so  focused on $5$-fold SCV.) The prediction accuracy was evaluated     by the   mean squared   error (MSE): $\EE[ Tr \{ (\hat \bsbB - \bsbB^*)^T \bsbSig  (\hat \bsbB - \bsbB^*) \}]/m$.  The median of the MSEs over all simulations was then computed to
represent the goodness of fit of a model. Selection performance was assessed by the median number of predictors ($\hat J$) and the median rank estimate ($\hat r$) over all simulations. The rates of left-out noise-free true variables (M for missing) and incorrectly included variables (FA for false alarms) averaged across all runs are also reported. When the signal strength is high, a successful variable selection method minimizes the M- and FA-rates with a preference for a low miss rate since it is undesirable to leave out true features. In contrast, when the noise level is very high, there is no reason to believe that the noise-free \textit{simulation} truth yields the best predictive model from the observed data. Therefore, in the low SNR situations our ultimate concern is the prediction error. The results for the six model comparison criteria are presented in Tables~\ref{tab:nLarge_bSmall}--\ref{tab:pLarge_bLarge} and are summarized below.

The PIC and SCV methods have superior prediction performance across nearly all combinations of signal strength and correlation considered here. EBIC's prediction in the lower signal-to-noise ratio (SNR) experiments (see, for example, Table~\ref{tab:nLarge_bSmall}) is evidence that its comparatively high complexity penalty induces too much regularization for these particular setups. In the larger SNR experiments (cf. Tables~\ref{tab:nLarge_bLarge} and~\ref{tab:pLarge_bLarge}), the EBIC's prediction performance is comparable to PIC and SCV. In the $p > n$ cases, AIC has the highest MSEs among all the information criteria, which is unsurprising because it is well known that AIC underpenalizes for large $p$.

CV often has the highest MSEs among all methods, indicating that a  rate correction is absolutely necessary. In all experiments the prediction performance of CV across $K=2,5,10$ is similar, verifying the discussion following Corollary~\ref{iden-cor}.   $K=5$ showed a modest improvement compared to $K=2,10$.

Interestingly, the PIC and SCV miss some of the  variables  specified in the absence of noise when the signal strength is weak (cf. Tables~\ref{tab:nLarge_bSmall} and~\ref{tab:pLarge_bSmall}), but this is perhaps natural because a larger amount of regularization may be required to achieve low prediction error. In these small SNR experiments, the AIC tends to miss the fewest of these noise-free   variables, but this comes at the cost of higher FA-rates and as stated before it often has poor prediction accuracy. In both weak and strong signal strength situations, the BIC and especially the EBIC   tend to have high missing rates compared to the AIC because of their larger penalty terms  but,   unlike the PIC and SCV, they are rarely able to select a model with parsimony \emph{and} low prediction error in both weak and strong SNR situations.

In the $n>p$ experiments with relatively stronger signal strength (Table~\ref{tab:nLarge_bLarge}), PIC and SCV's variable selection performance nearly equals or exceeds that of all other methods. In fact, their M- and FA- rates are nearly 0 in Table~\ref{tab:nLarge_bLarge}, suggesting that here the true variables are also highly predictive. Indeed, in such a large SNR situation (see Theorem~\ref{th:supp}) PIC can recover the noise-free row support with high probability.

Compared to the other methods, the median rank values of the PIC and SCV are most consistently equal to their true values. In nearly every experiment the AIC and CV overestimate $r$ while the BIC and especially EBIC underestimate the rank when the signal strength is relatively weak.

\begin{table}
    \caption{\label{tab:nLarge_bSmall} Performance comparisons between AIC, BIC, EBIC, PIC, $2$-fold CV, $10$-fold CV, $5$-fold CV and $5$-fold SCV in the $n > p$ experiment with smaller signal strength ($b=0.1$). MSEs are scaled for ease of comparison and M- and FA-rates are in percentages.}
    \centering
    \begin{tabular}{l c c c c c c c c c c}
    \hline
    & \multicolumn{5}{c}{$\rho = 0.1$} & \multicolumn{5}{c}{$\rho = 0.5$} \\
    \cmidrule(lr){2-6} \cmidrule{7-11}
    & \textbf{MSE} & $\boldsymbol{ \hat J }$ & $\boldsymbol{ \hat r }$ & \textbf{M}   & \textbf{FA} & \textbf{MSE} & $\boldsymbol{ \hat J }$ & $\boldsymbol{ \hat r }$ & \textbf{M}   & \textbf{FA} \\
    \hline
\textbf{AIC} & 43 & 37 & 6 & 4 & 29 & 44 & 38 & 6 & 7 & 35\\
\textbf{BIC} & 45 & 20 & 3 & 33 & 0 & 42 & 18 & 3 & 41 & 1\\
\textbf{EBIC} & 104 & 8 & 1 & 74 & 0 & 103 & 7 & 1 & 75 & 0\\
\hdashline
\textbf{PIC} & 26 & 29 & 4 & 11 & 7 & 28 & 27 & 4 & 18 & 9\\
\specialrule{.9pt}{.1pt}{.1pt}
2-\textbf{CV} & 99 & 50 & 10 & 1 & 68 & 99 & 50 & 10 & 3 & 69\\
10-\textbf{CV} & 99 & 50 & 10 & 1 & 68 & 99 & 50 & 10 & 3 & 69\\
5-\textbf{CV} & 96 & 50 & 10 & 1 & 68 &  96 & 50 & 10 & 3 & 69\\
\hdashline
5-\textbf{SCV}  & 28 & 29 & 5 & 10 & 9 & 30 & 27 & 5 & 19 & 9\\
\hline
    \end{tabular}
\end{table}

\begin{table}
    \caption{\label{tab:nLarge_bLarge} Performance comparisons between AIC, BIC, EBIC, PIC, $2$-fold CV, $10$-fold CV, $5$-fold CV and $5$-fold SCV in the $n > p$ experiment with larger signal strength ($b=0.5$). MSEs are scaled for ease of comparison and M- and FA-rates are in percentages.}
    \centering
    \begin{tabular}{l c c c c c c c c c c}
    \hline
    & \multicolumn{5}{c}{$\rho = 0.1$} & \multicolumn{5}{c}{$\rho = 0.5$} \\
    \cmidrule(lr){2-6} \cmidrule{7-11}
    & \textbf{MSE} & $\boldsymbol{ \hat J }$ & $\boldsymbol{ \hat r }$ & \textbf{M}   & \textbf{FA} & \textbf{MSE} & $\boldsymbol{ \hat J }$ & $\boldsymbol{ \hat r }$ & \textbf{M}   & \textbf{FA} \\
\hline
\textbf{AIC} & 13 & 34 & 6 & 0 & 17 & 12 & 33 & 6 & 0 & 17\\
\textbf{BIC} & 7 & 30 & 5 & 0 & 0 & 7 & 30 & 5 & 0 & 0\\
\textbf{EBIC} & 7 & 30 & 5 & 0 & 0 & 7 & 30 & 5 & 0 & 0\\
\hdashline
\textbf{PIC} & 7 & 30 & 5 & 0 & 1 & 7 & 30 & 5 & 0 & 0\\
\specialrule{.9pt}{.1pt}{.1pt}
2-\textbf{CV} & 40 & 50 & 10 & 0 & 67 & 40 & 50 & 10 & 0 & 67\\
10-\textbf{CV} & 40 & 50 & 10 & 0 & 67 & 40 & 50 & 10 & 0 & 67\\
5-\textbf{CV} & 40 & 50 & 10 & 0 & 66 & 40 & 50 & 10 & 0 & 66 \\
\hdashline
5-\textbf{SCV} & 7 & 30 & 5 & 0 & 2 & 7 & 30 & 5 & 0 & 1\\
\hline
    \end{tabular}
\end{table}

\begin{table}
    \caption{\label{tab:pLarge_bSmall} Performance comparisons between AIC, BIC, EBIC, PIC, $2$-fold CV, $10$-fold CV, $5$-fold CV and $5$-fold SCV in the $p >n$ experiment with smaller signal strength ($b=0.2$). MSEs are scaled for ease of comparison and M- and FA-rates are in percentages.}
    \centering
    \begin{tabular}{l c c c c c c c c c c}
    \hline
    & \multicolumn{5}{c}{$\rho = 0.1$} & \multicolumn{5}{c}{$\rho = 0.5$} \\
    \cmidrule(lr){2-6} \cmidrule{7-11}
    & \textbf{MSE} & $\boldsymbol{ \hat J }$ & $\boldsymbol{ \hat r }$ & \textbf{M}   & \textbf{FA} & \textbf{MSE} & $\boldsymbol{ \hat J }$ & $\boldsymbol{ \hat r }$ & \textbf{M}   & \textbf{FA} \\
\hline
\textbf{AIC} & 93 & 24 & 4 & 44 & 17 & 79 & 24 & 4 & 36 & 16\\
\textbf{BIC} & 42 & 12 & 1 & 59 & 7 & 31 & 10 & 2 & 56 & 5\\
\textbf{EBIC} & 43 & 7 & 1 & 71 & 3 & 36 & 4 & 1 & 74 & 1 \\
\hdashline
\textbf{PIC} & 40 & 10 & 2 & 62 & 5 & 28 & 9 & 2 & 59 & 3\\
\specialrule{.9pt}{.1pt}{.1pt}
2-\textbf{CV} & 122 & 25 & 4 & 43 & 19 & 97 & 25 & 4 & 35 & 18 \\
10-\textbf{CV} & 130 & 25 & 4 & 43 & 19 & 99 & 25 & 4 & 35 & 18 \\
5-\textbf{CV} & 128 & 25 & 4 & 43 & 19 & 99 & 25 & 4 & 35 & 18 \\
\hdashline
5-\textbf{SCV}  & 41 & 10 & 2 & 62 & 5  & 29 & 8 & 2 & 61 & 3 \\
\hline
    \end{tabular}
\end{table}

\begin{table}
   \caption{\label{tab:pLarge_bLarge} Performance comparisons between AIC, BIC, EBIC, PIC, $2$-fold CV, $10$-fold CV, $5$-fold CV and $5$-fold SCV in the $p >n$ experiment with larger signal strength ($b=1$). MSEs are scaled for ease of comparison and M- and FA-rates are in percentages.}
    \centering
    \begin{tabular}{l c c c c c c c c c c}
    \hline
    & \multicolumn{5}{c}{$\rho = 0.1$} & \multicolumn{5}{c}{$\rho = 0.5$} \\
    \cmidrule(lr){2-6} \cmidrule{7-11}
    & \textbf{MSE} & $\boldsymbol{ \hat J }$ & $\boldsymbol{ \hat r }$ & \textbf{M}   & \textbf{FA} & \textbf{MSE} & $\boldsymbol{ \hat J }$ & $\boldsymbol{ \hat r }$ & \textbf{M}   & \textbf{FA} \\
\hline
\textbf{AIC} & 47 & 25 & 2 & 40 & 18 & 31 & 25 & 2 & 32 & 16 \\
\textbf{BIC} & 43 & 24 & 2 & 40 & 17 & 30 & 24 & 2 & 33 & 15 \\
\textbf{EBIC} & 34 & 21 & 2 & 43 & 12 & 26 & 19 & 2 & 38 & 11 \\
\hdashline
\textbf{PIC} & 41 & 23 & 2 & 40 & 15 & 28 & 21 & 2 & 35 & 13 \\
\specialrule{.9pt}{.1pt}{.1pt}
2-\textbf{CV} & 47 & 25 & 2 & 40 & 18 & 34 & 25 & 3 & 31 & 17 \\
10-\textbf{CV} & 48 & 25 & 3 & 40 & 18 & 34 & 25 & 3 & 31 & 17 \\
5-\textbf{CV} & 50 & 25 & 2 & 39 & 18  & 35 & 25 & 4 & 31 & 17 \\
\hdashline
5-\textbf{SCV} & 34 & 11 & 2 & 58 & 5 & 22 & 11 & 2 & 53 & 4 \\
\hline

    \end{tabular}
\end{table}

The prediction performance of the PIC and SCV was similar in the $n>p$ experiments (cf. Tables~\ref{tab:nLarge_bSmall} and~\ref{tab:nLarge_bLarge}), but   SCV revealed some advantage in Table~\ref{tab:pLarge_bLarge}. 
We made additional comparisons  between these two methods by varying the signal strength in the $p>n$ experiments; see      Table~\ref{tab:pLarge_highSNR}.
SCV produced the lowest standard errors and seemed to be more successful at picking models with the lowest MSE's, as well as  reducing instability.

\begin{table}
\caption{\label{tab:pLarge_highSNR} Prediction error comparison between PIC and  SCV in the $p>n$ experiment with different signal strengths.   Reported numbers are        MSE's with standard errors in parentheses (scaled for ease of comparison).}
\centering
\begin{tabular}{ l c c c c c c}
\hline
& \multicolumn{3}{c}{$\rho=.1$} & \multicolumn{3}{c}{$\rho=.5$}  \\
\cmidrule(lr){2-4} \cmidrule(lr){5-7}
    & $b=2$ & $b=3$ & $b=4$ & $b=2$ & $b=3$ & $b=4$ \\
\hline
\textbf{PIC} & 42 \footnotesize{(22)} & 42 \footnotesize{(23)} & 43 \footnotesize{(23)} & 28 \footnotesize{(19)} & 29 \footnotesize{(19)} & 29 \footnotesize{(19)}   \\
\textbf{SCV} & 34 \footnotesize{(16)} & 33 \footnotesize{(17)} & 33 \footnotesize{(17)} & 21 \footnotesize{(10)} & 21 \footnotesize{(10)} & 21 \footnotesize{(10)} \\
\hline
\end{tabular}
\end{table}%

Overall, in almost all cases either the PIC or SCV has the lowest prediction error, which is the ultimate goal of this work. In the weak signal strength situations, it could be argued the AIC is better at selection because it misses the fewest variables in terms of the noise-free simulation truth. However, as previously discussed, because the low SNR data are heavily contaminated by noise, variable selection as measured by the M and FA-rates may not be that meaningful; a more parsimonious \emph{and} predictive model may exist that diverges from the     zero-noise model used in generating synthetic data.

 \subsection{Yeast cell cycle data}

In an experiment conducted by~\cite{spellman1998comprehensive}, $106$ transcription factors (TFs) (also known as DNA binding proteins) were collected for $800$ yeast genes that regulate RNA levels within the eukaryotic cycle. The cell cycle was measured by taking RNA levels on the $800$ genes at $18$ time points using the $\alpha$ factor arrest method. In this data analysis we use a subset of the original dataset obtained from the R~\citep{Rlanguage} package ``spls"~\citep{chun2010sparse}. The $\bsbX$ matrix consists of $106$ transcription factors (TF) collected on $542$ genes; 
the $\bsbY$ matrix contains RNA levels measured on the same subset of genes at $18$ time points. 
For the dataset, there are $21$ experimentally verified TFs related to cell cycle regulation~\citep{wang2007group}.
They can serve as a biological truth and should be consistently picked by a variable selection technique. We centered and scaled both $\bsbX$ and $\bsbY$, performed selective reduced rank regression \citep{she2017selective} 
and compared the selection performance of ordinary  CV  and structural cross-validation (SCV). (The prediction performance of CV and SCV was found to be quite similar over 200 repeated training-test splits with 50\% for training and 50\% for test.)



We bootstrapped the data $200$ times to measure the stability of the two methods in terms of both rank and variable selection. The bootstrap distributions of the estimated ranks and cardinalities are displayed in Figure~\ref{fig:boxplots}. SCV's median $\hat{J}$ and $\hat{r}$ were equal to $86$ and $4$ and CV's median $\hat{J}$ and $\hat{r}$ were equal to $46$ and $7$, respectively. 
Clearly, the spread of the optimal ranks and cardinalities selected by CV is much larger than SCV, showing that CV is   unstable in terms of model selection. Even though CV's median $\hat{J}$ is smaller, its large variance suggests that this method picks and leaves out TFs in a more random fashion than does SCV. The number of free parameters is determined \emph{jointly} by $r$ and $J$; the median of the degrees of freedom was $400$ for SCV and $410$ for CV, suggesting that SCV does tend to select smaller models.  
In addition, CV showed a five-fold increase in computation time than SCV.


\begin{figure}[h]
\centering
\includegraphics[width=.7\textwidth, height=2.4in]{./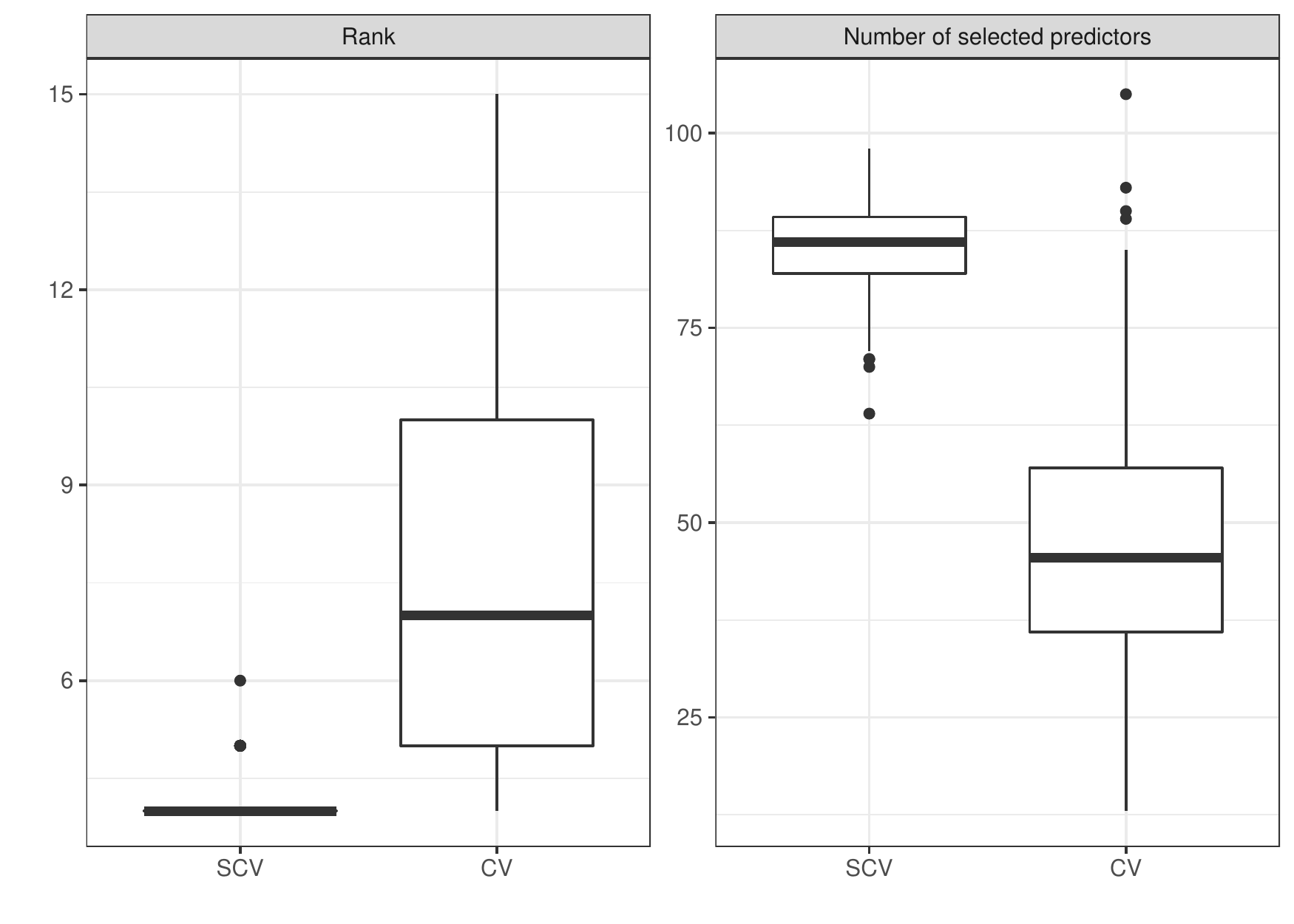} 
\caption{Rank and cardinality distributions on the bootstrapped data.}
\label{fig:boxplots}
\end{figure}

To assess the selection performance in terms of the $21$ experimentally verified TFs, the left panel of Figure~\ref{fig:TF21_freq} shows the percentage of bootstrap replicates in which each TF was selected. Each point corresponds to a TF and   the dotted line labels all identical selection frequencies by  the two methods. Notably, \textit{every} TF lies either on or above the line, showing that SCV's selection frequencies for the confirmed TFs were uniformly larger than CV's.
It is clear that CV often fails to select all of the TFs confirmed to be related to cell cycle regulation. For example, BAS1 is likely to be a significant regulator of the yeast cell cycle~\citep{cokus2006modelling} and was selected by SCV in nearly $75\%$ of the replicates, but CV selected this TF in under $25\%$ of the replicates. The right panel of Figure~\ref{fig:TF21_freq} tabulates the number of confirmed TFs selected at various percentage cut-offs of the bootstrap replicates. The large gap between SCV and CV shows that SCV is far more successful at selecting the verified TFs at \emph{any} cut-off. For example, roughly $50\%$ of the confirmed TFs are selected by CV in at least $50\%$ of the replicates, but SCV has a nearly $100\%$ success rate at the same cut-off.
SCV also identified some TFs that are not part of the confirmed subset. For example, SCV selected SKO1   $194$ times while CV picked  this TF less than 50\% of the time;~\cite{niu2008mechanisms} experimentally determined that overexpression of this TF is related to cell cycle progression.

\begin{figure}[h]
\centering

\includegraphics[width=.49\textwidth, height=2.4in]{./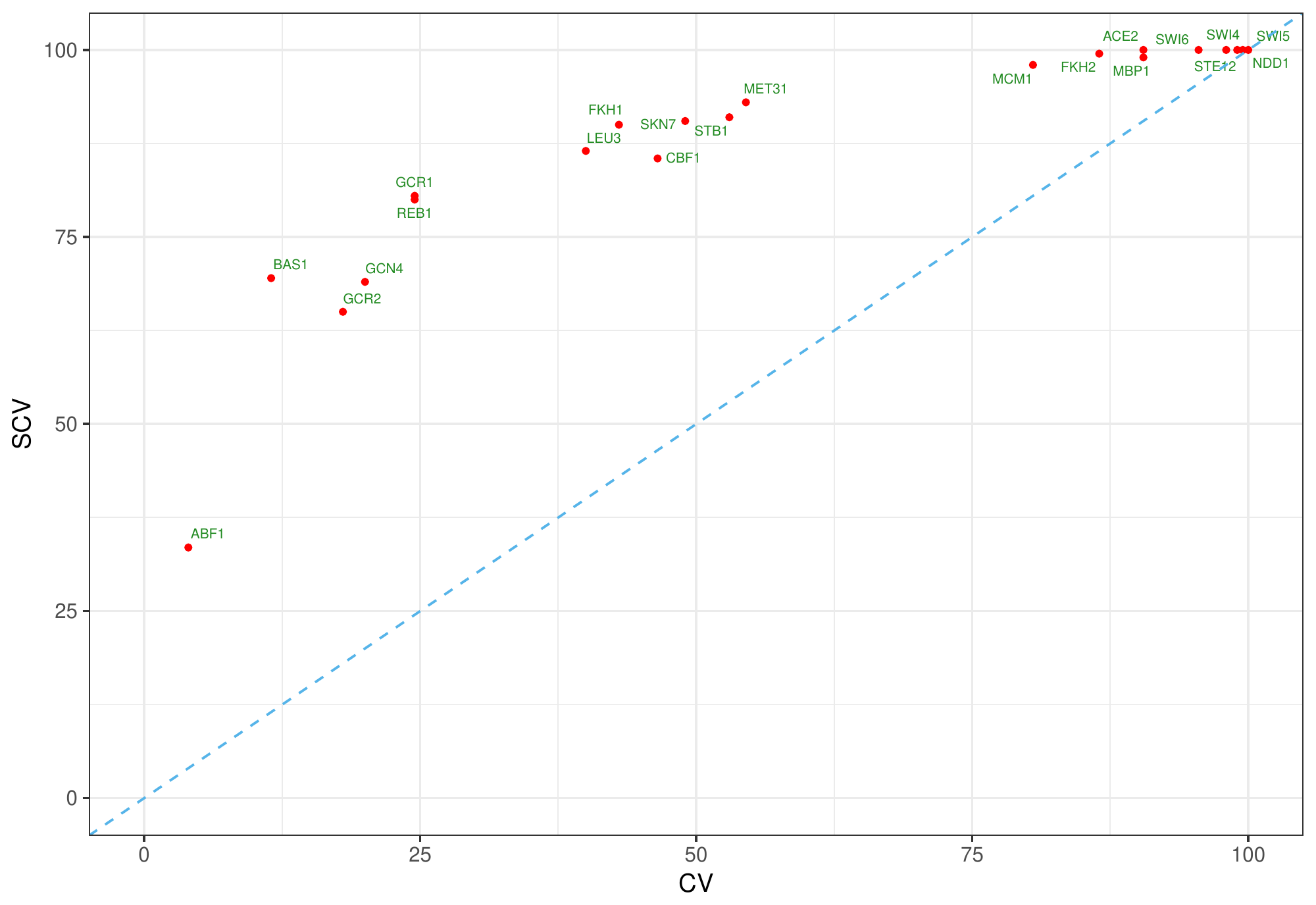}
\includegraphics[width=.49\textwidth, height=2.4in]{./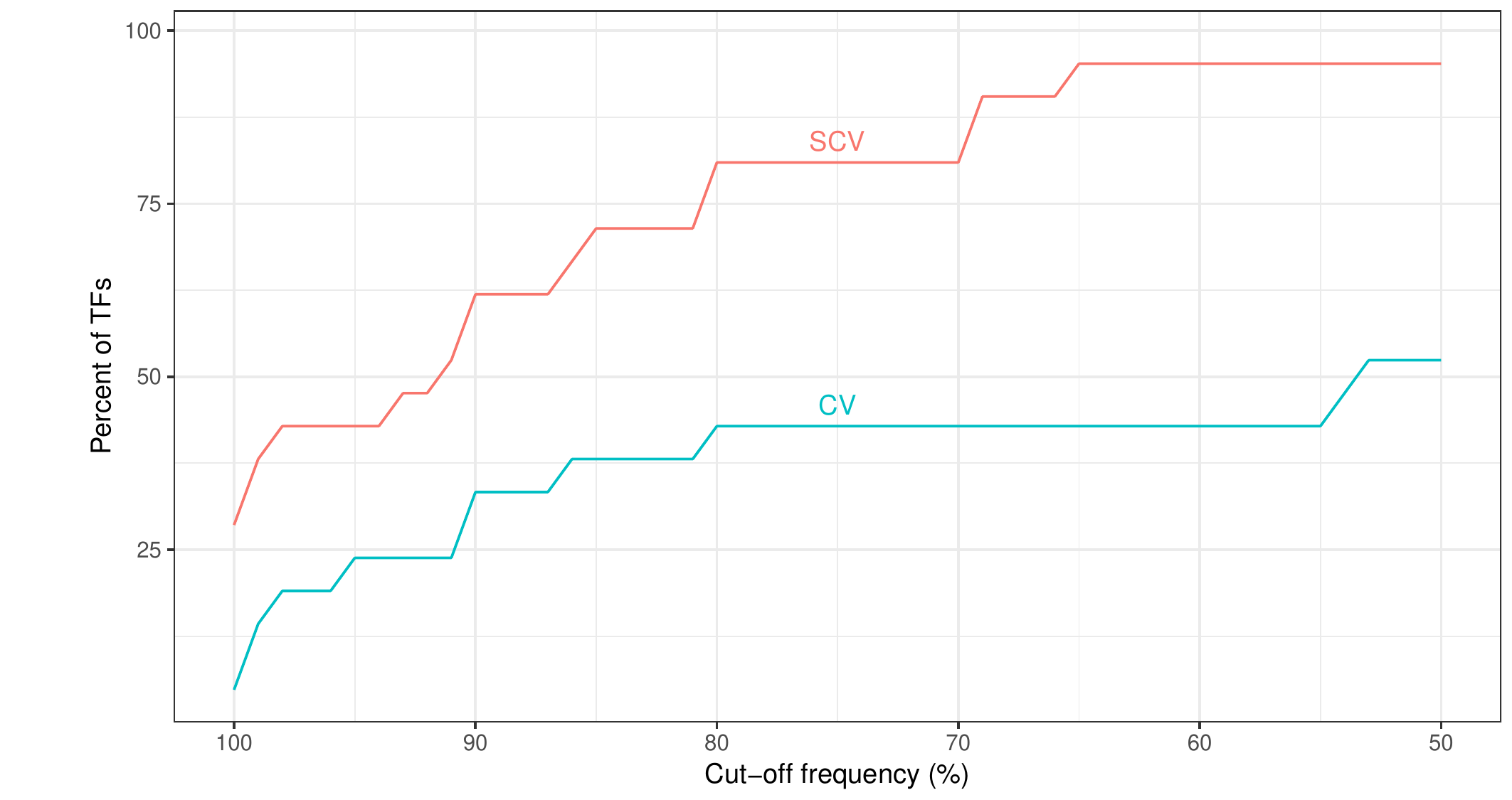}
\caption{Bootstrap selection frequencies of the $21$ experimentally verified TFs.}
\label{fig:TF21_freq}
\end{figure}

\section{Discussion}
The conventional approach to cross-validation involves splitting the data into $K$ subsets and calling a learning algorithm $K$ times. This procedure can be expensive and unstable because it may result in $K$ fitted models not  directly comparable. As a remedy to this problem, we proposed structural cross-validation, which maintains the same model in data resampling and is computationally efficient. Theoretically, we showed that the optimal complexity rate    for joint variable and rank selection is achieved  by the predictive information criterion non-asymptotically. Based on an identity built for cross-validation error, we proposed   scale-free rate correction for the commonly used $K$-fold cross-validation to match the optimal error rate. To the best of our knowledge, this calibrated cross-validation is novel.

The notions of structural cross-validation and PIC are applicable to either pure variable selection or pure rank selection. For example, in the case of a single response variable, the SCV in~\eqref{ratecorrSCV} takes the form $ {\mbox{CV-Err}}+ \alpha(\mbox{Trn-Err}/n)   J\log(e p/J)$ for some positive constant $\alpha$, whereas in   pure ridge-type problems  the inflation term disappears. The extraction of $s$-patterns extends  to   generalized lasso problems that  pursue   sparsity of $\bsbT \bsbb$ for a thin $\bsbT$ matrix \citep{SheSpaReg,TT11}. For example, given an estimate $\hat \bsbb$ satisfying $\bsbT[j,]\bsbb\ne 0$ for all $j\in \mathcal J$, or $\bsbT[{\mathcal J}^c, ]\hat\bsbb = \bsb0$,   we can construct a structural pattern $\bsbO \in \mathbb O^{p\times r}$, with $r=p-r(\bsbT[{\mathcal J}^c, ])$, that spans the orthogonal complement of the row space of $\bsbT[{\mathcal J}^c, ]$.

This framework pursues prediction accuracy 
as its ultimate principle and leads to some general theorems without large-$n$ assumption or incoherent design   conditions. 
If the  SNR  is not too small,  adopting the predictive learning    perspective also automatically implies faithful support recovery. 
Indeed, when the noise contamination was relatively small, our simulations showed that the proposed methods    had   low prediction error   \textit{and} satisfactory selection performance. But our experiments  also demonstrated  that when the SNR is small, it may not be valid to assess variable selection in reference to the noise-free simulation truth. These studies suggest that the aphorism ``all models are wrong but some are useful"~\citep{box1979robustness} seems to apply  in the small SNR scenarios.



\appendix

\section{Proof of Theorem \ref{th:minimax}}
We consider three cases as follows.

\textit{Case (i):} $( J  +m - r ) r \ge  J\log (e p / J  )$ and  $q\ge J $. Consider a signal  subclass \begin{align*}
{\mathcal B}^1(r)=\{\bsbB =& [b_{jk}],     \ b_{jk}=0 \mbox{ or } \gamma R \mbox{ if }  1\le j \le J, 1\le k \le r/2 \mbox{ or } 1\le j \le r/2 \\ & \mbox{ and }   b_{jk}=0 \mbox{ otherwise} \}.
\end{align*}
where $R= \frac{\sigma}{{\overline{\kappa}^{1/2}}  } $ and $\gamma>0$ is a small constant to be chosen later.
Clearly,  $|{\mathcal B}^1 (r)|= 2^{(J+m-r/2)r/2}$, $\mathcal B^1 (r)\subset S(J, r)$, and $r(\bsbB_1 - \bsbB_2)\leq r$, $\forall \bsbB_1, \bsbB_2 \in \mathcal B^1(r)$. Also, since $r\le (  J + m)/2$, $(J+m-r/2)r/2\geq c(J+m-r)r$ for some universal constant $c$.

Let $\rho(\bsbB_1, \bsbB_2)=\|\vect(\bsbB_1) - \vect(\bsbB_2)\|_0$ be the Hamming distance. By the  Varshamov-Gilbert bound (cf. Lemma 2.9 in \cite{tsybakov2009introduction}),
there exists  a subset ${\mathcal B}^{10}(r)\subset {\mathcal B}^{1}(r)$ and $\bsb0 \in  {\mathcal B}^{10}(r)$ such that
\begin{eqnarray*}
\log | {\mathcal B}^{10}(r)| \geq c_1 r(J+m-r),
\end{eqnarray*}
and
\begin{eqnarray*}
\rho(\bsbB_1, \bsbB_2) \geq c_2 r(J+m-r), \forall \bsbB_1, \bsbB_{2} \in \mathcal B^{10},  \bsbB_1\neq \bsbB_2
\end{eqnarray*}
for some universal constants $c_1, c_2>0$.
Then $\| \bsbB_1 - \bsbB_2\|_F^2 = \gamma^2 R^2 \rho(\bsbB_1, \bsbB_2) \geq c_2 \gamma^2 R^2 (J+m-r) r$. It follows from the   conditional number assumption that
\begin{align}
\| \bsbX \bsbB_1 - \bsbX \bsbB_2  \|_F^2 \geq c_2 \underline{\kappa}   \gamma^2 R^2 (J+m-r)r \label{separationLBound}
\end{align}
 for any $\bsbB_1, \bsbB_{2} \in \mathcal B^{10}$,  $\bsbB_1\neq \bsbB_2$, where $\underline{\kappa}/\overline{\kappa}$ is a positive constant.

For Gaussian models, the  Kullback-Leibler divergence of $\mathcal {N}( \vect(\bsbX \bsbB_2),\sigma^2 \bsb{I}\otimes\bsbI)$ (denoted by $P_{\bsbB_2}$) from $\mathcal {N}( \vect(\bsbX \bsbB_1) ,\sigma^2 \bsb{I}\otimes \bsbI)$ (denoted by $P_{\bsbB_1}$) is $\mathcal K(\mathcal P_{\bsbB_1}, \mathcal P_{\bsbB_2}) =   \|\bsbX \bsbB_1 -  \bsbX \bsbB_2 \|_F^2/({2\sigma^2})$. Let $P_{\bsb{0}}$ be $\mathcal {  N}(\bsb{0}, \sigma^2 \bsb{I}\otimes \bsbI)$. By the assumption again,
for any $\bsbB: r(\bsbB)\leq r$, we have
\begin{align*}
\mathcal K(P_{\bsb{0}}, P_{\bsbB}) \leq \frac{1}{2\sigma^2}\overline{\kappa}   \gamma^2 R^2 \rho(\bsb{0}, \bsbB) \leq \frac{\gamma^2}{\sigma^2}   R^2 (J+m-r)r
\end{align*}
where we used $\rho(\bsbB_1, \bsbB_2) \leq (r/2)(J+m-r/2) \le r(J+m-r)$. Therefore,
\begin{align}
\frac{1}{|\mathcal B^{10}|}\sum_{\bsbB\in \mathcal B^{10}} \mathcal K(P_{\bsb{0}}, P_{\bsbB})\leq \gamma^2 r(J+m-r). \label{KLUBound}
\end{align}

Combining  \eqref{separationLBound} and \eqref{KLUBound} and choosing a sufficiently small value of  $\gamma$, we can apply Theorem 2.7 of \cite{tsybakov2009introduction} to get the desired  lower bound.

\textit{Case (ii):} $(q\wedge  J  +m - r ) r\ge J \log (e p / J  )$ and  $q<J $. Let $\underline{\kappa}   = \sigma_{\min}^2(\bsbX)$ and $ \overline{\kappa} =\sigma_{\max}^2(\bsbX)$. Note that   $\sigma_{\min}(\bsbX)$ is the smallest \emph{nonzero} singular-value of $\bsbX$ (and so the  restricted condition number assumption can hold even when $p>n$). Assume that the SVD of     $\bsbX $ is   given by  $  \bsbU \bsbD \bsbV^T$ with $\bsbD$ of size $q\times q$.   Define   $ \bsbZ:=\bsbU\bsbD$  and $  \tilde{\mathcal S} (  r) =\{ \bsbA  \in \mathbb R^{q\times m}: r(\bsbA)\leq r \}$. Then
for any estimator $\hat\bsbB$, and $\hat \bsbA= \bsbV^T \hat \bsbB$, we have  \begin{align*}
&\sup_{ \bsbB^*  \in \mathcal S(J, r)} \EP[\|\bsbX \bsbB^* - \bsbX \hat\bsbB  \|_F^2\geq c {P_o(J, r)}]\\
\geq & \sup_{  \bsbA^*  \in  \tilde{\mathcal S}(  r)} \EP[\|\bsbZ\bsbA^* - \bsbZ \hat\bsbA  \|_F^2\geq c {P_o(J, r)}],
\end{align*} because $\tilde {\mathcal S}(  r) \subset \{ \bsbV^T \bsbB: \bsbB\in S(J,r )\}$ under $q<J$. The new design matrix $\bsbZ$ has $q$ columns, and satisfies     $\underline{\kappa}  \|\bsbA\|_F^2\leq \|\bsbZ \bsbA\|_F^2  \leq \overline{\kappa}    \| \bsbA\|_F^2$
 for any $\bsbA $. 
 Repeating the argument in (i) gives  the result. 

\textit{Case (iii):} $  J \log (e p / J  )\ge  (  J  +m - r ) r  $ and $J\le q$. Construct
\begin{align*}
{\mathcal B}^3(J)=\{\bsbB  =[\bsbb_{1}, \ldots, \bsbb_p]^T:     \bsbb_j = \bsb{0}  \mbox{ or }  \gamma R  \bsb{1}   ,  J(\bsbB)\le J \},
\end{align*}
where $R= \frac{\sigma}{{\overline{\kappa}^{1/2}}  } (\frac{\log (ep /J)}{m})^{1/2} $ and $\gamma>0$ is a small constant.
 Clearly,  ${\mathcal B}^{3}(J) \subset {S}(J,r) $.
By Stirling's approximation, $\log |\mathcal B^{3}(J)|\geq \log { p \choose J}  \geq   J \log (  p/J)  \geq c J \log ( e p /J)$ for some universal constant $c$.
Due to  Lemma A.3 in   \cite{Rigollet11},  there exists  a subset ${\mathcal B}^{30}(J)\subset {\mathcal B}^{3}(J)$ such that
\begin{eqnarray*}
\log | {\mathcal B}^{30}(J)| \geq c_1  J \log ( ep/J)
\mbox{ and }
\rho(\bsbB_1, \bsbB_2) \geq c_2 Jm, \forall \bsbB_1, \bsbB_{2} \in \mathcal B^{30},  \bsbB_1\neq \bsbB_2
\end{eqnarray*}
for some universal constants $c_1, c_2>0$.
The afterward treatment follows the same lines as in (i) and the details are omitted.

\section{Proof of Theorem \ref{th:pic}}



The theorem can be derived from Theorem 2 of \cite{she2017selective}. For the sake of completeness, we provide its  proof, which will be used for proving other theorems.
Recall that a  random variable  $\xi$ is sub-Gaussian if   $\EP(|\xi|\geq t) \leq C e^{-c t^2}$  for any $t>0$ and some constants  $C, c>0$,  and  its  scale     is defined as $\sigma( \xi) =
\inf \{\sigma>0: \EE\{\exp(\xi^2/\sigma^2)\} \leq 2\}$;
  $\bsbxi\in \mathbb R^p$ is   a sub-Gaussian random vector with its scale  bounded by $\sigma$, if   $\langle \bsbxi, \bsba \rangle$ is sub-Gaussian and  $\sigma( \langle \bsbxi, \bsba \rangle )\leq \sigma \|\bsba \|_2$  for any $ \bsba\in \mathbb R^{p}$.
In this proof, given any index set  $\mathcal J\subset [p]:=\{1,\ldots, p\}$, $\Proj_{\bsbX_{\mathcal J}}$  is abbreviated to  $\Proj_{\mathcal J}$    when there is no ambiguity. Given any matrix $\bsbA$, we use $cs(\bsbA)$ and $rs(\bsbA)$ to denote its column space and row space, respectively.
%
%
Because $P_o(\bsbB) = \sigma^2 [\{q\wedge  J(\bsbB) +m - r(\bsbB)\} r(\bsbB) + J(\bsbB)\log \{e p / J(\bsbB)\}]$ only depends on $J(\bsbB)$ and $r(\bsbB)$, we also denote it  by  $P_o(J(\bsbB), r(\bsbB))$.
The optimality of $\hat \bsbB$ implies that for any $\bsbB=\bsbB_l$
\begin{align}
\frac{1}{2}\| \bsbX \hat \bsbB - \bsbX \bsbB^*\|_F^2 \leq \frac{1}{2}\| \bsbX \bsbB - \bsbX \bsbB^*\|_F^2 + A \sigma^2 P_o(\bsbB) -A \sigma^2 P_o(\hat \bsbB)
+ \langle \bsbE, \bsbX \hat \bsbB - \bsbX \bsbB \rangle. \label{firstineq}
\end{align}
Let $\bsbDelta = \hat \bsbB - \bsbB$, $\hat{\mathcal J}= \mathcal J(\hat\bsbB)$, $\mathcal J= \mathcal J(\bsbB)$,   $J=J(\bsbB)$, $\hat J =  J(\hat\bsbB)$,  $r = r(\bsbB)$, $\hat r = r(\hat \bsbB)$.
Denote the orthogonal projection onto the  row space   of $\bsbX_{{\mathcal J}} \bsbB_{{\mathcal J}}$ by  $\Proj_{rs}$      and its orthogonal complement by  $\Proj_{rs}^{\perp}$.
Decompose $\bsbX \bsbDelta$ as follows
\begin{equation}\begin{aligned}
 \bsbX \bsbDelta = &\bsbX \bsbDelta \Proj_{rs} + \bsbX \bsbDelta \Proj_{rs}^{\perp} \\
= & \Proj_{\mathcal J} \bsbX \bsbDelta \Proj_{rs} + \Proj_{\mathcal J}^{\perp} \bsbX \bsbDelta \Proj_{rs} + \bsbX_{\hat{\mathcal J}} \hat\bsbB_{\hat{\mathcal J}} \Proj_{rs}^{\perp} \\
= & \Proj_{\mathcal J} \bsbX \bsbDelta \Proj_{rs} + \Proj_{\mathcal J}^{\perp} \bsbX_{\hat{\mathcal J}} \hat\bsbB_{\hat{\mathcal J}} \Proj_{rs} + \bsbX_{\hat{\mathcal J}} \hat\bsbB_{\hat{\mathcal J}} \Proj_{rs}^{\perp}.
\end{aligned}\label{decompofXD}
\end{equation}
Then
$\| \bsbX \bsbDelta \|_F^2  = \| \Proj_{\mathcal J} \bsbX \bsbDelta \Proj_{rs}\|_F^2 + \|\Proj_{\mathcal J}^{\perp} \bsbX_{\hat{\mathcal J}} \hat\bsbB_{\hat{\mathcal J}} \Proj_{rs}\|_F^2 + \|\bsbX_{\hat{\mathcal J}} \hat\bsbB_{\hat{\mathcal J}}  \Proj_{rs}^{\perp}\|_F^2$ and so
\begin{align}
\langle \bsbE, \bsbX \bsbDelta  \rangle &=  \langle \bsbE, \Proj_{\mathcal J} \bsbX \bsbDelta \Proj_{rs} \rangle + \langle \bsbE, \Proj_{\mathcal J}^{\perp} \bsbX_{\hat{\mathcal J}} \hat\bsbB_{\hat{\mathcal J}}\Proj_{rs}\rangle + \langle \bsbE, \bsbX_{\hat{\mathcal J}} \hat\bsbB_{\hat{\mathcal J}} \Proj_{rs}^{\perp}\rangle.  \label{decomp4}
\end{align}

\begin{lemma}{\cite[Lemma 4]{she2017selective}}\label{emprocBnd}
Suppose $\vect(\bsbE)$ is  sub-Gaussian with mean zero and scale bounded by $\sigma$.  
Given $\bsbX\in \mathbb R^{n\times p}$, $1\leq J\leq p$, $1\leq r \leq J\wedge m$, define $\Gamma_{J, r} = \{\bsbDelta\in \mathbb R^{n\times m}: \|\bsbDelta\|_F\leq 1, r(\bsbDelta) \leq r, cs(\bsbDelta) \subset cs(\bsbX_{\mathcal J}) \mbox{ for some } \mathcal J: | \mathcal J|\le J\}$. Let $$P_o'(J, r) = (q\wedge  J ) r+(m - r) r + \log {p\choose J}.$$
Then for any $t\geq 0$,
\begin{align}
\EP [\sup_{\bsbDelta \in \Gamma_{J,r}} \langle \bsbE, \bsbDelta \rangle \geq t \sigma +  \{L  P_o'(J,r)\}^{1/2} \sigma] \leq C\exp(- ct^2),
\end{align}
where $L, C, c>0$ are universal constants.
\end{lemma}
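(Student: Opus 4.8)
The plan is to establish the bound through a two-level complexity decomposition. First I control the supremum over all increments sharing a fixed row support $\mathcal J$; this will produce the rank term $(q\wedge J)r+(m-r)r$. Then I take a union bound over the $\binom{p}{J}$ admissible supports, which produces the combinatorial term $\log\binom{p}{J}$. The first reduction exploits the constraint $cs(\bsbDelta)\subset cs(\bsbX_{\mathcal J})$. Fixing $\mathcal J$ with $|\mathcal J|\le J$, set $T_{\mathcal J}=\{\bsbDelta\in\Gamma_{J,r}:cs(\bsbDelta)\subset cs(\bsbX_{\mathcal J})\}$ and let $\bsbF\in\mathbb R^{n\times d}$, with $d=r(\bsbX_{\mathcal J})\le q\wedge J$, have orthonormal columns spanning $cs(\bsbX_{\mathcal J})$. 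Every $\bsbDelta\in T_{\mathcal J}$ can then be written $\bsbDelta=\bsbF\bsbW$ with $\bsbW\in\mathbb R^{d\times m}$, $\|\bsbW\|_F=\|\bsbDelta\|_F\le1$ and $r(\bsbW)=r(\bsbDelta)\le r$. Since $\langle\bsbE,\bsbF\bsbW\rangle=\langle\bsbF^T\bsbE,\bsbW\rangle$ and $\|\bsbF\bsbW\|_F=\|\bsbW\|_F$, the vector $\vect(\bsbF^T\bsbE)$ is again sub-Gaussian with scale bounded by $\sigma$, so the problem reduces to bounding $\sup\{\langle\bsbF^T\bsbE,\bsbW\rangle:\|\bsbW\|_F\le1,\ r(\bsbW)\le d\wedge r\}$ over a low-dimensional, rank-restricted Frobenius ball.

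For a fixed support I would run a covering argument. The process $\bsbW\mapsto\langle\bsbF^T\bsbE,\bsbW\rangle$ has sub-Gaussian increments, $\langle\bsbF^T\bsbE,\bsbW_1-\bsbW_2\rangle$ having scale at most $\sigma\|\bsbW_1-\bsbW_2\|_F$ by the definition of a sub-Gaussian vector. The Frobenius $\epsilon$-covering number of $\{\bsbW\in\mathbb R^{d\times m}:r(\bsbW)\le\bar r,\ \|\bsbW\|_F\le1\}$, with $\bar r=d\wedge r$, obeys $\log N(\epsilon)\le c_0(d+m)\bar r\log(3/\epsilon)$ (the rank-$\bar r$ dimension count; see \cite{CandPlan}). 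The structural point is that the column-space constraint caps the rank at $d=r(\bsbX_{\mathcal J})\le q\wedge J$, so $\bar r\le q\wedge J$ and $\bar r\le m$; a short computation then gives $(d+m)\bar r\le 2N_r$ with $N_r:=(q\wedge J)r+(m-r)r$. Feeding this entropy bound into a sub-Gaussian chaining tail inequality (diameter $\le2$, entropy integral $\lesssim\sigma\sqrt{N_r}$) yields, for each fixed $\mathcal J$, a bound of the form $\EP[\sup_{T_{\mathcal J}}\langle\bsbE,\cdot\rangle\ge t\sigma+\{L_1N_r\}^{1/2}\sigma]\le C\exp(-ct^2)$.

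To finish, I would union bound over the $\binom{p}{J}$ supports. Writing the target threshold via $\{L P_o'\}^{1/2}\ge\{L_1N_r\}^{1/2}+\{L_2\log\binom pJ\}^{1/2}$ (valid since $\sqrt{a+b}\ge(\sqrt a+\sqrt b)/\sqrt2$), I shift the per-support threshold by $\{L_2\log\binom pJ\}^{1/2}$ and sum:
\begin{align*}
\EP\Big[\sup_{\Gamma_{J,r}}\langle\bsbE,\cdot\rangle\ge t\sigma+\{L P_o'\}^{1/2}\sigma\Big]
&\le\sum_{\mathcal J}\EP\Big[\sup_{T_{\mathcal J}}\langle\bsbE,\cdot\rangle\ge\big(t+\{L_2\log\tbinom pJ\}^{1/2}\big)\sigma+\{L_1N_r\}^{1/2}\sigma\Big]\\
&\le\binom pJ\,C\exp\!\big(-c(t+\{L_2\log\tbinom pJ\}^{1/2})^2\big)\le C\exp(-ct^2),
\end{align*}
where the last inequality uses $(t+s)^2\ge t^2+s^2$ and holds once $cL_2\ge1$, so that $\binom pJ^{\,1-cL_2}\le1$. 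Since $N_r+\log\binom pJ=P_o'(J,r)$ exactly, this is the claimed inequality after relabeling the universal constants.

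The main obstacle I anticipate is the fixed-support entropy bookkeeping. Because $\bsbE$ is only sub-Gaussian (not Gaussian), I cannot invoke Gaussian comparison inequalities and must instead rely on a covering/chaining bound valid for sub-Gaussian increments; and I must combine the rank-$\bar r$ dimension count with the automatic rank cap $\bar r\le q\wedge J$ to recover the precise order $(q\wedge J)r+(m-r)r$ rather than the cruder $(q\wedge J+m)r$. Merging the two complexity contributions into the single additive exponent $P_o'$, with the union-bound cost fully absorbed by the $\sqrt{\log\binom pJ}$ shift in the threshold, is the step that demands the most care.
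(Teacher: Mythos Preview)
The paper does not actually give a proof of this lemma: it is quoted verbatim as Lemma~4 of \cite{she2017selective} and then used as a black box in the proofs of Theorems~\ref{th:pic}, \ref{th:sf-pic} and \ref{th:supp}. So there is no in-paper argument to compare against.

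Your proposal is the standard route and is essentially correct. Two small points on the bookkeeping. First, your inequality $(d+m)\bar r\le 2N_r$ with $N_r=(q\wedge J)r+(m-r)r$ is not literally a one-line consequence of $\bar r\le q\wedge J$ and $\bar r\le m$; it does hold, but the clean way to see it is to use the sharper entropy exponent $\bar r(d+m-\bar r)$ for rank-$\bar r$ matrices in $\mathbb R^{d\times m}$ and then split into the two cases $d\ge r$ (giving $r(d+m-r)\le N_r$ directly) and $d<r$ (giving $dm=(q'\!)m$ with $q'=d\le q\wedge J$, and one checks $(q')m\le N_r$ using $q'<r$). Second, in the union-bound step your inequality $(t+s)^2\ge t^2+s^2$ and the choice $cL_2\ge 1$ are exactly right, but be sure to state that the per-support tail from chaining is of the form $C\exp(-c u^2)$ uniformly in $\mathcal J$ (same constants), so that the summation is legitimate. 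With these clarifications the argument goes through; it is precisely the proof one would expect for this type of empirical-process bound.
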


Applying the lemma to the first term  on the right-hand side of \eqref{decomp4}, we have for any $a, b, a'>0$,
\begin{align*}
&  \langle \bsbE, \Proj_{\mathcal J} \bsbX \bsbDelta \Proj_{rs} \rangle - \frac{1}{a} \|\Proj_{\mathcal J} \bsbX \bsbDelta \Proj_{rs}\|_F^2 -  b L \sigma^2P_o(J, r)   \\
\leq &
\|\Proj_{\mathcal J} \bsbX \bsbDelta \Proj_{rs}\|_F  \langle \bsbE, \Proj_{\mathcal J} \bsbX \bsbDelta \Proj_{rs} / \|\Proj_{\mathcal J} \bsbX \bsbDelta \Proj_{rs}\|_F\rangle \\&- 2\sigma({{  b}/{a}})^{1/2} \|\Proj_{\mathcal J} \bsbX \bsbDelta \Proj_{rs}\|_F \{{L P_o(J, r)\}^{1/2}}  \\
\leq &
\frac{1}{a'} \|\Proj_{\mathcal J} \bsbX \bsbDelta \Proj_{rs}\|_F^2 +  \frac{a'}{4}
\sup_{1\leq J \leq p, 1\leq r \leq m\wedge J}  \sup_{\bsbDelta\in \Gamma_{J, r}} [\langle \bsbE, \bsbDelta \rangle - 2(b/a)^{1/2} \sigma\{{L P_o(J, r)\}^{1/2}}]_+^2\\
\equiv & \frac{1}{a'} \|\Proj_{\mathcal J} \bsbX \bsbDelta \Proj_{rs}\|_F^2 +  \frac{a'}{4} \sup_{1\leq J \leq p, 1\leq r \leq m\wedge J} R_{J,r}^2  \equiv  \frac{1}{a'} \|\Proj_{\mathcal J} \bsbX \bsbDelta \Proj_{rs}\|_F^2 +  \frac{a'}{4} R^2.
\end{align*}
We  claim that  that  $\EE R^2 \le C \sigma^2$  when choosing $4b >a$. In fact,
\begin{align*}
& \EP(R\geq t \sigma ) \\
\leq & \sum_{J=1}^p \sum_{r=1}^{m \wedge J} \EP(R_{J,r}\geq t\sigma )\\
 \leq & \sum_{J=1}^p \sum_{r=1}^{m \wedge J}  \EP [\sup_{\bsbDelta \in \Gamma_{J,r}}  \langle \bsbE, \bsbDelta \rangle -\sigma\{L  P_o'(J,r)\}^{1/2}\\& \qquad\qquad\ \geq t \sigma +   2(b/a)^{1/2}\sigma \{{L P_o(J, r)\}^{1/2}}-\{{L  P_o'(J,r)}\}^{1/2} ]\\
 \leq & \sum_{J=1}^p  \sum_{r=1}^{m \wedge J} C\exp(-c t^2) \exp[- c \{(2(b/a)^{1/2}-1)^2 L  P_o(J, r)\}] \\
 \leq &  C  \exp(-c t^2) \exp(-c\log p)\sum_{J=1}^p\sum_{r=1}^{m \wedge J} \exp[- c  L  P_o(J, r)\}] \\
 \leq & C \exp(-c t^2) p^{-c}.
\end{align*}

Similarly,   noticing that $r(\bsbX_{\hat{\mathcal J}} \hat\bsbB_{\hat{\mathcal J}} \Proj_{rs}^{\perp})\le r(\hat\bsbB_{\hat{\mathcal J}}) \le \hat r$ and $CS(\bsbX_{\hat{\mathcal J}} \hat\bsbB_{\hat{\mathcal J}}\allowbreak \Proj_{rs}^{\perp})\allowbreak\subset CS(\bsbX_{\hat {\mathcal J}})$, we have $\langle \bsbE, \bsbX_{\hat{\mathcal J}} \hat\bsbB_{\hat{\mathcal J}} \Proj_{rs}^{\perp}\rangle-  \|\Proj_{\hat {\mathcal J}} \bsbX \bsbDelta \Proj_{rs}^{\perp}\|_F^2/a -  b L P_o(\hat J,\hat r) \le  \|\Proj_{\mathcal J} \bsbX \bsbDelta \Proj_{rs}\|_F^2/{a'} \allowbreak+  {a'} R^2/4$. To handle the second term in   \eqref{decomp4},
we introduce the following lemma.

\begin{lemma}{\cite[Lemma 5]{she2017selective}}\label{emprocBnd2}
Suppose $\vect(\bsbE)$ is  sub-Gaussian with mean zero and  $\psi_2$-norm bounded by $\sigma$.  
Given $\bsbX\in \mathbb R^{n\times p}$, $1\leq J\leq p$, $1\le J'\le p$, $1\leq r \leq J\wedge m$, define $\Gamma_{J',J, r} = \{\bsbDelta\in \mathbb R^{n\times m}: \|\bsbDelta\|_F\leq 1, r(\bsbDelta) \leq r,  cs(\bsbDelta) \subset cs(\Proj_{{\mathcal J}'}^{\perp} \Proj_{{\mathcal J}})   \mbox{ for some }  {\mathcal J}', {\mathcal J} \subset [p] \mbox{ satisfying }  | {\mathcal J}'|\allowbreak \le J',  | {\mathcal J}|\le J \}$. Let $$P_o''(J',J, r) =  \{q\wedge  J \wedge (p- J')\} r  +(m - r) r + \log {p\choose J} + \log {p\choose J'}.$$
Then for any $t\geq 0$,
\begin{align*}
\EP [\sup_{\bsbDelta \in \Gamma_{J',J,r}} \langle \bsbE, \bsbDelta \rangle \geq t \sigma +   \{L  P_o''(J',J,r)\}^{1/2} \sigma] \leq C\exp(- ct^2),
\end{align*}
where $L, C, c>0$ are universal constants.
\end{lemma}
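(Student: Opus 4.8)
The plan is to prove Lemma~\ref{emprocBnd2} by the familiar ``peel the discrete index, chain the continuous part, union--bound'' scheme, paralleling the proof of Lemma~\ref{emprocBnd}; recall that this bound is exactly what will control the middle term $\langle\bsbE,\Proj_{\mathcal J}^{\perp}\bsbX_{\hat{\mathcal J}}\hat\bsbB_{\hat{\mathcal J}}\Proj_{rs}\rangle$ of~\eqref{decomp4}. First I would write $\Gamma_{J',J,r}=\bigcup_{\mathcal J',\mathcal J}\Gamma_{\mathcal J',\mathcal J,r}$, a union over index pairs $\mathcal J',\mathcal J\subset[p]$ with $|\mathcal J'|\le J'$, $|\mathcal J|\le J$, where $\Gamma_{\mathcal J',\mathcal J,r}=\{\bsbDelta:\|\bsbDelta\|_F\le1,\ r(\bsbDelta)\le r,\ cs(\bsbDelta)\subset V_{\mathcal J',\mathcal J}\}$ and $V_{\mathcal J',\mathcal J}:=cs(\Proj_{\mathcal J'}^{\perp}\Proj_{\mathcal J})=\Proj_{\mathcal J'}^{\perp}\,cs(\bsbX_{\mathcal J})$. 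A dimension count gives $\dim V_{\mathcal J',\mathcal J}\le r(\bsbX_{\mathcal J})\le q\wedge J$; refining via $V_{\mathcal J',\mathcal J}\subset cs(\bsbX_{\mathcal J'})^{\perp}$ (and full column rank of $\bsbX_{\mathcal J'}$ under the non-degeneracy in force) yields $\dim V_{\mathcal J',\mathcal J}\le d:=q\wedge J\wedge(p-J')$.

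Next, fix a pair $(\mathcal J',\mathcal J)$. For $\bsbDelta\in\Gamma_{\mathcal J',\mathcal J,r}$ one has $\langle\bsbE,\bsbDelta\rangle=\langle\Proj_{V_{\mathcal J',\mathcal J}}\bsbE,\bsbDelta\rangle$, and since $\sup\{\langle M,\bsbDelta\rangle:r(\bsbDelta)\le r,\|\bsbDelta\|_F\le1\}=\{\sum_{i\le r}\sigma_i^{2}(M)\}^{1/2}$, we obtain $\sup_{\bsbDelta\in\Gamma_{\mathcal J',\mathcal J,r}}\langle\bsbE,\bsbDelta\rangle=\{\sum_{i\le r}\sigma_i^{2}(\Proj_{V_{\mathcal J',\mathcal J}}\bsbE)\}^{1/2}$. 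In an orthonormal basis of $V_{\mathcal J',\mathcal J}$ this is a norm of a matrix of size at most $d\times m$ with independent columns of sub-Gaussian scale $\le\sigma$; parametrizing the rank-$r$, $\|\cdot\|_F\le1$ slice as $\bsbU\bsbSig\bsbW^{T}$ with $\bsbU,\bsbW$ orthonormal and $\bsbSig$ diagonal, $\|\bsbSig\|_F\le1$, that slice has Frobenius metric entropy $\log N(\epsilon)\lesssim (d+m-r)r\,\log(1/\epsilon)$. Since $\bsbDelta\mapsto\langle\bsbE,\bsbDelta\rangle$ is a sub-Gaussian process with increments of scale $\le\sigma\|\cdot\|_F$, Dudley's bound gives $\EE\sup_{\bsbDelta\in\Gamma_{\mathcal J',\mathcal J,r}}\langle\bsbE,\bsbDelta\rangle\lesssim\sigma\{(d+m-r)r\}^{1/2}$, and a concentration-of-the-supremum inequality for sub-Gaussian processes (the device behind Lemma~\ref{emprocBnd}) upgrades this to
\begin{align*}
\EP\Big[\sup_{\bsbDelta\in\Gamma_{\mathcal J',\mathcal J,r}}\langle\bsbE,\bsbDelta\rangle\ \ge\ \sigma\{L_{1}((q\wedge J\wedge(p-J'))r+(m-r)r)\}^{1/2}+\sigma t\Big]\le Ce^{-ct^{2}}
\end{align*}
with universal $L_{1},C,c>0$. (For Gaussian $\bsbE$ one may instead use Lipschitz concentration of $\{\sum_{i\le r}\sigma_i^{2}(\cdot)\}^{1/2}$ together with the mean bound $\EE[\{\sum_{i\le r}\sigma_i^{2}(\Proj_{V_{\mathcal J',\mathcal J}}\bsbE)\}^{1/2}]\le(r\,\EE\,\sigma_1^{2}(\Proj_{V_{\mathcal J',\mathcal J}}\bsbE))^{1/2}\lesssim\sigma\{r(d+m)\}^{1/2}$, which is within a constant of $\sigma\{(d+m-r)r\}^{1/2}$ since $r\le J\wedge m$.)

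Then I would union--bound over the at most $\binom{p}{\le J}\binom{p}{\le J'}$ admissible pairs, for which $\log(\binom{p}{\le J}\binom{p}{\le J'})\lesssim\log\binom{p}{J}+\log\binom{p}{J'}$. Substituting $t\mapsto t+\{L_{2}(\log\binom{p}{J}+\log\binom{p}{J'})\}^{1/2}$ in the fixed-pair bound and taking $L_{2}$ large enough that $\binom{p}{\le J}\binom{p}{\le J'}\,C\exp(-cL_{2}(\log\binom{p}{J}+\log\binom{p}{J'}))\le1$, a union bound gives the same tail with rate $\sigma\{L_{1}\tilde P\}^{1/2}+\sigma\{L_{2}(\log\binom{p}{J}+\log\binom{p}{J'})\}^{1/2}$, where $\tilde P=(q\wedge J\wedge(p-J'))r+(m-r)r$. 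Since $\{a\}^{1/2}+\{b\}^{1/2}\le\{2(a+b)\}^{1/2}$ and $\tilde P+\log\binom{p}{J}+\log\binom{p}{J'}=P_o''(J',J,r)$, the two summands collapse to $\{L\,P_o''(J',J,r)\}^{1/2}\sigma$ with $L=2(L_{1}\vee L_{2})$, which is exactly the assertion of the lemma.

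The hard part will be the fixed-pair tail in the second step: with only a sub-Gaussian hypothesis there is no Lipschitz-concentration shortcut, so one must genuinely construct an $\epsilon$-net of the rank-$r$, column-space-restricted Frobenius slice --- the key being that its entropy scales like $(d+m-r)r$, not $dm$ --- and control the discretization error, exactly as in the proof of Lemma~\ref{emprocBnd}; everything after that is routine union-bound bookkeeping. A secondary, fiddly point is pinning down the uniform bound $\dim cs(\Proj_{\mathcal J'}^{\perp}\Proj_{\mathcal J})\le q\wedge J\wedge(p-J')$ over all admissible pairs, and verifying that the $(m-r)r$ (rather than $mr$) form of the leading term is what ultimately survives, which uses $r\le J\wedge m$ to absorb the $O(r^{2})$ slack.
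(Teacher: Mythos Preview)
The paper does not supply its own proof of Lemma~\ref{emprocBnd2}; the lemma is quoted verbatim from \cite[Lemma~5]{she2017selective} and is only \emph{used} (together with Lemma~\ref{emprocBnd}) in the proof of Theorem~\ref{th:pic}. So there is no in-paper argument to compare against.

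Your proposal follows the standard and correct strategy---decompose $\Gamma_{J',J,r}$ as a finite union over index pairs $(\mathcal J',\mathcal J)$, control the supremum over each fixed slice via a covering/chaining bound on rank-$r$ matrices with column space in a subspace of dimension $d$, and then union-bound over the $\binom{p}{\le J}\binom{p}{\le J'}$ pairs---and this is exactly the route taken in the cited reference (and the route behind Lemma~\ref{emprocBnd}). One small point to tighten: your justification of $\dim V_{\mathcal J',\mathcal J}\le p-J'$ via $V_{\mathcal J',\mathcal J}\subset cs(\bsbX_{\mathcal J'})^{\perp}$ gives $n-r(\bsbX_{\mathcal J'})$, not $p-J'$. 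The cleaner route is $V_{\mathcal J',\mathcal J}\subset cs(\Proj_{\mathcal J'}^{\perp}\bsbX)=cs(\Proj_{\mathcal J'}^{\perp}\bsbX_{(\mathcal J')^c})$, which yields $\dim V_{\mathcal J',\mathcal J}\le p-|\mathcal J'|$; when you carry out the union bound stratified by $|\mathcal J'|=j'$, the trade-off between the dimension term $(q\wedge J\wedge(p-j'))r$ and the combinatorial term $\log\binom{p}{j'}$ across $0\le j'\le J'$ is what ultimately collapses (up to the universal constant $L$) into the stated $P_o''(J',J,r)$. Aside from this bookkeeping, your sketch is on target.
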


 Because    $P_o''( J, \hat J, \hat r) \leq   P_o(J, r) + P_o(\hat J, \hat r)$, 
\begin{align*}
& \langle \bsbE, \Proj_{\mathcal J}^{\perp} \bsbX_{\hat{\mathcal J}} \hat\bsbB_{\hat{\mathcal J}}\Proj_{rs}\rangle - \frac{1}{a} \|\Proj_{\mathcal J}^{\perp}\bsbX_{\hat{\mathcal J}} \hat\bsbB_{\hat{\mathcal J}}\Proj_{rs}\|_F^2-  b L\sigma^2 (   P_o(J, r) + P_o(\hat J, \hat r))   \\\le & \
 \frac{1}{a'} \|\Proj_{\mathcal J}^{\perp} \bsbX \bsbDelta \Proj_{rs}\|_F^2 +  \frac{a'}{4} {R'}^2 , \end{align*}
where  $\EE {R'}^2 \leq C \sigma^2$ and   $L$ is a sufficiently large constant.
In summary, we have
\begin{align*}
&\langle \bsbE, \bsbX \bsbDelta \rangle
\\ \leq & (\frac{1}{a}+\frac{1}{a'}) \|\bsbX \bsbDelta\|_F^2  + b L \sigma^2\{P_o(J,  r) + P_o(J, r) + P_o(\hat J, \hat r) +  P_o(\hat J,    \hat r)\}  + \frac{a'}{4} ( 2 R^2 +  {R'}^2)\\
\leq & (\frac{1}{a}+\frac{1}{a'})(1+b') \|\bsbX \bsbB - \bsbX \bsbB^*\|_F^2 +  (\frac{1}{a}+\frac{1}{a'})(1+\frac{1}{b'}) \|\bsbX \hat \bsbB - \bsbX \bsbB^*\|_F^2 \\ & + 2 b L\sigma^2 \{P_o(J,   r) + P_o(\hat J,    \hat r)\}  +  \frac{a'}{4} ( 2 R^2 +  {R'}^2).
\end{align*}
Choose constants   $a$, $a'$, $b$, $b'$, and $A$ sufficiently large such that $({1}/{a}+{1}/{a'})(1+{1}/{b'})<{1}/{2}$,  $4  b>a$,
and  $A > 2 b  L$.   Setting $\bsbB = \bsbB_l$ (which is nonzero), and taking expectation on both sides, we obtain the  inequality as desired.

\section{Proof of Theorem \ref{th:sf-pic}}
We take \eqref{gcv-pic}  as an instance  to show the conclusion.
First,    $1/(1 -  \delta(\hat \bsbB))^2 \ge 1/(1 -   \delta(\hat \bsbB))$ due to   $0\le \delta(\hat\bsbB)<1$. Let $A_0$ and $A$ be chosen such that   $2A_0>A$ and so   $\delta(\bsbB^*)< 1/2$. Then       $1/(1 -2   \delta(  \bsbB^*))\ge 1/(1 -   \delta(  \bsbB^*))^2$.

Let $h(\bsbB; A) = 1/\{mn - A P_o (\bsbB)\}$.
Based on the previous facts and the  optimality of   $\hat \bsbB$, we have \begin{align*}
 \| \bsbY - \bsbX \hat \bsbB\|_F^2 \, h(\hat \bsbB; A) mn
  \le & \frac{\| \bsbY - \bsbX \hat \bsbB\|_F^2}{(1-\delta(\hat \bsbB))^2}  \\\le&  \frac{\| \bsbY - \bsbX   \bsbB^*\|_F^2}{(1-\delta(  \bsbB^*))^2} \\
\le & \| \bsbY - \bsbX   \bsbB^*\|_F^2 \, h(  \bsbB^*; 2A) mn,
\end{align*}
and so $\| \bsbY - \bsbX \hat \bsbB\|_F^2\le \| \bsbY - \bsbX   \bsbB^*\|_F^2 \, h(  \bsbB^*; 2A)/ h(\hat \bsbB; A)$ since $h(\hat\bsbB; A)> 0$. It follows that
\begin{align*}
& \| \bsbX \hat \bsbB - \bsbX \bsbB^*\|_F^2 \\
\leq & \|\bsbE\|_F^2 \{{h(\bsbB^*; 2A)}/{h(\hat\bsbB;  A)}-1\} + 2\langle \bsbE, \bsbX \hat \bsbB - \bsbX \bsbB^*\rangle \\
\leq & \frac{A  \|\bsbE\|_F^2}{mn\sigma^2 - 2A P_o(\bsbB^*)}\sigma^2 P_o(\bsbB^*) - \frac{A  \|\bsbE\|_F^2}{mn \sigma^2}\sigma^2P_o(\hat\bsbB) + 2\langle \bsbE, \bsbX \hat \bsbB - \bsbX \bsbB^*\rangle
\end{align*}

The stochastic term $ 2\langle \bsbE, \bsbX \hat \bsbB - \bsbX \bsbB^*\rangle$ can be decomposed and bounded in the same way as in the proof of Theorem \ref{th:pic}, except that     we use the  high-probability form  results here. For example, for   term $\langle \bsbE, \Proj_{\mathcal J} \bsbX \bsbDelta \Proj_{rs} \rangle$   in \eqref{decomp4},  Lemma \ref{emprocBnd} shows  that for any constants $a, b, a'>0$ satisfying $4b >a$, the following event
\begin{align*}
&  \langle \bsbE, \Proj_{\mathcal J} \bsbX \bsbDelta \Proj_{rs} \rangle \leq  ({1}/{a} +{1}/{a'}) \|\Proj_{\mathcal J} \bsbX \bsbDelta \Proj_{rs}\|_F^2 +  b L\sigma^2 P_o(J, r)
\end{align*}
 occurs with   probability at least
$
1-  \sum_{J=1}^p \sum_{r=1}^{m \wedge J} C  \exp[- c \{(2({{b}/{a}})^{1/2}-1\}^2 L\allowbreak P_o(J, r)/\sigma^2]$ or $1-C p^{-c}$ for a sufficiently large value of $L$. Repeating the analysis in the proof of Theorem \ref{th:pic} shows
$$
 2\langle \bsbE, \bsbX  \hat \bsbB - \bsbX \bsbB^* \rangle
\leq2  ({1}/{a}+{1}/{a'})  \|\bsbX \hat \bsbB - \bsbX \bsbB^*\|_F^2   + 4b L \sigma^2\{P_o(\hat\bsbB) + P_o(\bsbB^*)\},
$$
with probability at least $1-C p^{-c}$ for some $c, C>0$.

Let $\gamma$ and $\gamma'$ be  constants satisfying $0< \gamma <\ 1, \gamma'>0$. On $\mathcal E =\{(1-\gamma) {mn\sigma^2}\leq \|\bsbE\|_F^2 \leq (1+\gamma') {mn\sigma^2} \}$ ,  we have
\begin{align*}
& \frac{ \sigma^2P_o(\bsbB^*)A  \|\bsbE\|_F^2}{mn\sigma^2 - 2A \sigma^2P_o(\bsbB^*)}  - \sigma^2P_o(\hat\bsbB)\frac{A  \|\bsbE\|_F^2}{mn \sigma^2} \\ \leq \ &  \frac{(1+\gamma') A A_0}{A_0 - 2A}\sigma^2P_o(\bsbB^*)-   {(1-\gamma) A\sigma^2 }P_o(\hat\bsbB).
\end{align*}
From \cite{lm2000}, the complement of $\mathcal E$ occurs with probability at most $C'\exp(-c' m n)$ (with $c', C'$ dependent on $\gamma, \gamma'$).
 With $A_0$   large enough, we can choose  $a, a', b, A$ such that    $({1}/{a}+{1}/{a'})<{1}/{2}$,  $4  b>a$, and $4b L\leq ( 1-\gamma)A$. The conclusion results, and it is easy to see that it holds for the fractional form  as well.

Noticing that
$1/(1-\delta) \ge \exp(\delta) \ge 1+\delta \ge 1/(1-\delta/2) $ for any $0\le \delta < 1$, we can show the same conclusion   for the log-form and the plug-in form, the proof of which  follows the same lines.
\section{Proofs of Theorem \ref{th:iden} and Corollary \ref{iden-cor}}
 For   notational clarity, we prove the conclusion for $K=2$ only.  The proof in the general case follows the same lines. Let $\bar r = r$ and    $\bsbO=\bsbS \bsbU$ (which has    $r$ orthogonal columns).  Denote the overall data by $(\bar \bsbX, \bar \bsbY)$.  Let $(\bsbX, \bsbY)$ be the training data and   $(\tilde\bsbY, \tilde \bsbX)$   be the test data  in the first round.
Then      $\hat \bsbB=\bsbO\hat\bsbC$ and   $\hat \bsbC  = (\bsbX_{\bsbO}^T \bsbX_{\bsbO})^{-1} \bsbX_{\bsbO}^T \bsbY$ (where
$\bsbX_{\bsbO}$ is short for   $\bsbX \bsbO$ throughout the proof).   Also, from the model assumption, we can write     $\bsbB^* = \bsbO^* \bsbC^*$ and $\bsbX \bsbB^* = \bsbX_{\bsbO^*} \bsbC^*$ for some $\bsbC^*$, where  $\bsbO^*$ is the $s$-pattern of $\bsbB^*$.

First, we  calculate the  extra-sample test error on   $(\tilde\bsbY, \tilde \bsbX)$:
\begin{align*}
&\EE[\|\tilde\bsbY - \tilde \bsbX \hat\bsbB\|_F^2 ] \\
=& \EE[\|\tilde\bsbY - \tilde \bsbX_{\bsbO} \hat\bsbC\|_F^2 ]\\
= &\EE[\|\tilde\bsbX \bsbB^* + \tilde\bsbE - \tilde \bsbX_{\bsbO} (\bsbX_{\bsbO}^T \bsbX_{\bsbO})^{-1} \bsbX_{\bsbO}^T \bsbX_{\bsbO^*}\bsbC^* - \tilde \bsbX_{\bsbO} (\bsbX_{\bsbO}^T \bsbX_{\bsbO})^{-1} \bsbX_{\bsbO}^T \bsbE\|_F^2 ]\\
=& \EE[\|\tilde \bsbE\|_F^2 ] +\EE[ \|\tilde\bsbX_{\bsbO} (\bsbX_{\bsbO}^T \bsbX_{\bsbO})^{-1} \bsbX_{\bsbO}^T \bsbE\|_F^2]   +  \EE [\|\tilde \bsbX (\bsbB^* -   \bsbO (\bsbX_\bsbO^T \bsbX_\bsbO )^{-1} \bsbX_\bsbO^T \bsbX \bsbB^* )\|_F^2]]\\
=& md \sigma^2+\EE[ \|\tilde\bsbX_{\bsbO} (\bsbX_{\bsbO}^T \bsbX_{\bsbO})^{-1} \bsbX_{\bsbO}^T \bsbE\|_F^2]   +  \EE [\|\tilde \bsbX (\bsbB^* -   \bsbO (\bsbX_\bsbO^T \bsbX_\bsbO )^{-1} \bsbX_\bsbO^T \bsbX \bsbB^* )\|_F^2]],
\end{align*}
where we used $\EE[\bar\bsbE \bar\bsbE^T] = m\sigma^2 \bsbI$ and the independence between $\bar \bsbE$ and $\bar \bsbX$.
The second term can be simplified further:
\begin{align*}
& \EE[ \|\tilde\bsbX_{\bsbO} (\bsbX_{\bsbO}^T \bsbX_{\bsbO})^{-1} \bsbX_{\bsbO}^T \bsbE\|_F^2]\\   = & \EE  Tr[ (\bsbX_{\bsbO}^T \bsbX_{\bsbO})^{-1} \bsbX_{\bsbO}^T \EE[\bsbE \bsbE^T] \bsbX_{\bsbO} (\bsbX_{\bsbO}^T \bsbX_{\bsbO})^{-1} \tilde\bsbX_{\bsbO}^T\tilde\bsbX_{\bsbO}] \\
 = & m \sigma^2 \EE   Tr [(\bsbX_{\bsbO}^T \bsbX_{\bsbO})^{-1}  \tilde\bsbX_{\bsbO}^T\tilde\bsbX_{\bsbO}]  \\
 = &  m   \sigma^2 \EE  Tr \{(\bsbX_{\bsbO}^T \bsbX_{\bsbO})^{-1}  \EE[\tilde\bsbX_{\bsbO}^T\tilde\bsbX_{\bsbO}]\}  \\
  = &   m d  \sigma^2 \EE   Tr \{(\bsbX_{\bsbO}^T \bsbX_{\bsbO})^{-1}   {\bsbO}^T \bsbSig {\bsbO}\} \\
 = & m d  \sigma^2 \EE   Tr \{(\bsbZ_{n-d}^T \bsbZ_{n-d})^{-1}  \}.
\end{align*}
Therefore, the total cross-validation error is given by
\begin{align*}
\mbox{{CV-Err}}
=&    mn\sigma^2 + m n \sigma^2 Tr\{  \EE[(\bsbZ_{n-d}^T \bsbZ_{n-d})^{-1}]  \}   \\  & +  \EE [\|\tilde \bsbX (\bsbB^* -   \bsbO (\bsbX_\bsbO^T \bsbX_\bsbO )^{-1} \bsbX_\bsbO^T \bsbX \bsbB^* )\|_F^2]] \\ & +  \EE [\|  \bsbX (\bsbB^* -   \bsbO (\tilde \bsbX_\bsbO^T \tilde \bsbX_\bsbO )^{-1} \tilde \bsbX_\bsbO^T \tilde \bsbX \bsbB^* )\|_F^2]].
\end{align*}
Similarly,   the overall training error is
\begin{align*}
\EE[\|\bar \bsbY -  \bar \bsbX \hat \bsbB\|_F^2 ] = &\EE[\|(\bsbI - \Proj_{\bar\bsbX_{\bsbO}})\bar \bsbX \bsbB^* \|_F^2] + \EE[\|(\bsbI - \Proj_{\bar \bsbX_{\bsbO}})\bsbE \|_F^2]\\
= & m (n -r) \sigma^2 + \EE[\|(\bsbI - \Proj_{\bar \bsbX_{\bsbO}})\bar \bsbX \bsbB^* \|_F^2].
\end{align*}
It remains to analyze
$
  \EE [\|\tilde\bsbX  \bsbB^* -  \tilde \bsbX\bsbO (\bsbX_\bsbO^T \bsbX_\bsbO )^{-1} \bsbX_\bsbO^T \bsbX \bsbB^* \|_F^2]]  +  \EE [\|\bsbX  \bsbB^* - \allowbreak   \bsbX\bsbO (\tilde\bsbX_\bsbO^T \tilde\bsbX_\bsbO )^{-1}\allowbreak \tilde\bsbX_\bsbO^T \tilde\bsbX \bsbB^* \|_F^2]]
 -\EE [\|\bar\bsbX  \bsbB^* -  \bar \bsbX_\bsbO (\bar\bsbX_\bsbO^T \bar\bsbX_\bsbO )^{-1} \bar\bsbX_\bsbO^T\bar \bsbX \bsbB^* \|_F^2]].
$

Define $\Proj_1 = \Proj_{\bsbO} \cap \Proj_{\bsbO^{*}}$, $\Proj_2 = \Proj_{1}^\perp\cap \Proj_{\bsbO^*} (=
 \Proj_{ \bsbO^*} \setminus ( \Proj_{\bsbO} \cap \Proj_{\bsbO^* } ))=\mathfrak O \mathfrak O^T$. Then    $\bsbB^* = \bsbO^* \bsbC^* =  (\Proj_1 + \Proj_2)\bsbB^*$, and so
\begin{align*}
&\bsbB^* - \bsbO (\bsbX_\bsbO^T \bsbX_\bsbO )^{-1} \bsbX_\bsbO^T \bsbX  \bsbB^* \\ = & (\Proj_1 + \Proj_2) \bsbB^* -
\bsbO (\bsbX_\bsbO^T \bsbX_\bsbO )^{-1} \bsbX_\bsbO^T \bsbX(\Proj_1 + \Proj_2) \bsbB^*\\  = &   \Proj_2  \bsbB^* - \bsbO(\bsbX_\bsbO^T \bsbX_\bsbO )^{-1} \bsbX_\bsbO^T \bsbX  \Proj_2  \bsbB^*.
\end{align*}
With  this fact, and using the shorthand notation $\bsbX_{1} = \bsbX \bsbO$,  $ \bsbX_2 = \bsbX {\mathfrak O}$, $\bsbB_2^*=   {\mathfrak O}^T \bsbB^*$, we get
\begin{align*}
  \tilde\bsbX  \bsbB^* -  \tilde \bsbX\bsbO (\bsbX_\bsbO^T \bsbX_\bsbO )^{-1} \bsbX_\bsbO^T \bsbX \bsbB^*
& =     \tilde\bsbX_2   \bsbB_2^* -   \tilde\bsbX_{1} (\bsbX_{1}^T \bsbX_{1} )^{-1} \bsbX_{1}^T \bsbX_2   \bsbB_2^*\\
& =    \left[\tilde\bsbX_{1} \ \tilde \bsbX_2\right] \left[ \begin{array}{c} -(\bsbX_{1}^T \bsbX_{1} )^{-1} \bsbX_{1}^T \bsbX_2\\ \bsbI \end{array}\right]    \bsbB_2^*,
\end{align*}
and
\begin{align*}
& \bar\bsbX  \bsbB^* -  \bar \bsbX_\bsbO (\bar\bsbX_\bsbO^T \bar\bsbX_\bsbO )^{-1} \bar\bsbX_\bsbO^T\bar \bsbX \bsbB^*
=   \left[ \bar\bsbX_{1 } \ \bar\bsbX_2\right] \left[ \begin{array}{c} -(\bar\bsbX_{1}^T \bar\bsbX_{1} )^{-1}\bar \bsbX_{1}^T \bar\bsbX_2\\ \bsbI \end{array}\right]    \bsbB_2^*. \end{align*}

Introduce $\bsbOmega_{1} = \bsbX_{1}^T \bsbX_{1}$,  $\bsbOmega_{2} = \bsbX_{2}^T \bsbX_{2}$,  $\bsbOmega_{12}=\bsbX_{1 }^T \bsbX_{ 2}$. 
In the following, we investigate
\begin{align}
\begin{split}
&\left[\begin{array}{cc} -\bsbOmega_{1}^{-1} \bsbOmega_{1 2} \\ \bsbI \end{array}\right]^T \left[\begin{array}{cc} \tilde  \bsbOmega_{1} &  \tilde\bsbOmega_{1 2} \\   \tilde\bsbOmega_{2 1} &  \tilde\bsbOmega_{ 2}   \end{array}\right]
\left[\begin{array}{cc} -\bsbOmega_{1}^{-1} \bsbOmega_{1 2} \\ \bsbI \end{array}\right] \\
+ &  \left[\begin{array}{cc} -\tilde\bsbOmega_{1}^{-1} \tilde\bsbOmega_{1 2} \\ \bsbI \end{array}\right]^T \left[\begin{array}{cc}    \bsbOmega_{1} &   \bsbOmega_{1 2} \\    \bsbOmega_{2 1} &  \bsbOmega_{ 2}   \end{array}\right]
\left[\begin{array}{cc} - \tilde\bsbOmega_{1}^{-1} \tilde\bsbOmega_{1 2} \\ \bsbI \end{array}\right]\\
-&  \left[\begin{array}{cc} -\bar\bsbOmega_{1}^{-1} \bar\bsbOmega_{1 2} \\ \bsbI \end{array}\right]^T \left[\begin{array}{cc}    \bar\bsbOmega_{1} &   \bar\bsbOmega_{1 2} \\  \bar  \bsbOmega_{2 1} & \bar \bsbOmega_{ 2}   \end{array}\right]
\left[\begin{array}{cc} - \bar\bsbOmega_{1}^{-1} \bar\bsbOmega_{1 2} \\ \bsbI \end{array}\right].
\end{split} \label{keyterm}
\end{align}

 Define $\bsbOmega_{1}^{-1}\bsbOmega_{1 2}=(\bsbX_{1}^T\bsbX_{1})^{-1} \bsbX_{1}^T\bsbX_2 =   \bsbR + \bsbDelta$,   where  \ $\bsbR = \EE [(\bsbX_{1}^T\bsbX_{1})^{-1} \bsbX_{1}^T\bsbX_2 ]$ (non-random) and  $\EE\bsbDelta = \bsb0$.
By symmetry, $\tilde \bsbOmega_{1}^{-1}\tilde \bsbOmega_{1 2} =  \bsbR + \tilde \bsbDelta$   with $  \EE\tilde \bsbDelta  = \bsb0$.
Represent  $\bar \bsbOmega_{1}^{-1}\bar \bsbOmega_{1 2}$ by $\bsbR + \bar \bsbDelta$ (but $\EE\bar\bsbDelta $ may not be $\bsb0$ in general).

Writing $$\left[\begin{array}{cc} -\bsbR-\bsbDelta  \\ \bsbI \end{array}\right] = \left[\begin{array}{cc} -\bsbR  \\ \bsbI \end{array}\right] + \left[\begin{array}{cc} -\bsbDelta  \\ \bsb0 \end{array}\right]$$ and using the additivity  of   $$\bar \bsbOmega_{1 } =\bsbOmega_{1 } + \tilde \bsbOmega_{1 },  \bar \bsbOmega_{2 } =\bsbOmega_{2 } + \tilde \bsbOmega_{2 },  \bar \bsbOmega_{1 2} =\bsbOmega_{1 2} + \tilde \bsbOmega_{1 2}, $$ \eqref{keyterm} can be reduced to        the sum  of
\begin{align*}
 & \phantom{00}  \left[\begin{array}{cc} -\bsbR  \\ \bsbI \end{array}\right]^T \left[\begin{array}{cc} \tilde  \bsbOmega_{1} &  \tilde\bsbOmega_{1 2} \\   \tilde\bsbOmega_{2 1} &  \tilde\bsbOmega_{ 2}   \end{array}\right]
\left[\begin{array}{cc} -\bsbDelta \\ \bsb0 \end{array}\right] + \left[\begin{array}{cc} -\bsbDelta \\ \bsb0 \end{array}\right]^T \left[\begin{array}{cc} \tilde  \bsbOmega_{1} &  \tilde\bsbOmega_{1 2} \\   \tilde\bsbOmega_{2 1} &  \tilde\bsbOmega_{ 2}   \end{array}\right]\left[\begin{array}{cc} -\bsbR  \\ \bsbI \end{array}\right]
\\
 & +   \left[\begin{array}{cc} -\bsbR \\ \bsbI \end{array}\right]^T \left[\begin{array}{cc}    \bsbOmega_{1} &   \bsbOmega_{1 2} \\    \bsbOmega_{2 1} &  \bsbOmega_{ 2}   \end{array}\right]
\left[\begin{array}{cc} - \tilde\bsbDelta \\ \bsb0 \end{array}\right]+ \left[\begin{array}{cc} - \tilde\bsbDelta \\ \bsb0 \end{array}\right]^T \left[\begin{array}{cc}    \bsbOmega_{1} &   \bsbOmega_{1 2} \\    \bsbOmega_{2 1} &  \bsbOmega_{ 2}   \end{array}\right]\left[\begin{array}{cc} -\bsbR \\ \bsbI \end{array}\right]
\\
  & -   \left[\begin{array}{cc} -\bsbR \\ \bsbI \end{array}\right]^T \left[\begin{array}{cc}    \bar\bsbOmega_{1} &   \bar\bsbOmega_{1 2} \\  \bar  \bsbOmega_{2 1} & \bar \bsbOmega_{ 2}   \end{array}\right]
\left[\begin{array}{cc} - \bar\bsbDelta \\ \bsb0 \end{array}\right]-\left[\begin{array}{cc} - \bar\bsbDelta \\ \bsb0 \end{array}\right]  ^T \left[\begin{array}{cc}    \bar\bsbOmega_{1} &   \bar\bsbOmega_{1 2} \\  \bar  \bsbOmega_{2 1} & \bar \bsbOmega_{ 2}   \end{array}\right] \left[\begin{array}{cc} -\bsbR \\ \bsbI \end{array}\right]
\\
= & \, 2 ( \bsbR^T \tilde \bsbOmega_{1}  - \tilde \bsbOmega_{2 1}) \tilde   \bsbDelta +  2 ( \bsbR^T  \bsbOmega_{1}  -  \bsbOmega_{2 1})   \tilde\bsbDelta - 2 ( \bsbR^T  \bar\bsbOmega_{1}  -  \bar\bsbOmega_{2 1})   \bar\bsbDelta
\end{align*}
and
\begin{align*}
&\phantom{+\ }\left[\begin{array}{cc} -\bsbDelta  \\ \bsb0 \end{array}\right]^T \left[\begin{array}{cc} \tilde  \bsbOmega_{1} &  \tilde\bsbOmega_{1 2} \\   \tilde\bsbOmega_{2 1} &  \tilde\bsbOmega_{ 2}   \end{array}\right]
\left[\begin{array}{cc} -\bsbDelta \\ \bsb0 \end{array}\right] \\
  &+  \left[\begin{array}{cc} -\tilde\bsbDelta \\ \bsb0 \end{array}\right]^T \left[\begin{array}{cc}    \bsbOmega_{1} &   \bsbOmega_{1 2} \\    \bsbOmega_{2 1} &  \bsbOmega_{ 2}   \end{array}\right]
\left[\begin{array}{cc} - \tilde\bsbDelta \\ \bsb0 \end{array}\right]\\
& -  \left[\begin{array}{cc} -\bar\bsbDelta \\ \bsb0 \end{array}\right]^T \left[\begin{array}{cc}    \bar\bsbOmega_{1} &   \bar\bsbOmega_{1 2} \\  \bar  \bsbOmega_{2 1} & \bar \bsbOmega_{ 2}   \end{array}\right]
\left[\begin{array}{cc} - \bar\bsbDelta \\ \bsb0 \end{array}\right]
\\= & \ \bsbDelta^T \tilde \bsbOmega_{1}\bsbDelta + \tilde \bsbDelta^T   \bsbOmega_{1}\tilde \bsbDelta -  \bar \bsbDelta^T   \bar \bsbOmega_{1}\bar \bsbDelta.
\end{align*}

In expectation,    $2 ( \bsbR^T \tilde \bsbOmega_{1}  - \tilde \bsbOmega_{2 1})   \bsbDelta$ and $ 2 ( \bsbR^T  \bsbOmega_{1}  -  \bsbOmega_{2 1})   \tilde\bsbDelta$ vanish. From the data splitting manner and the row independence, we have
\begin{align*}
&\EE Tr\{  \bsbB_2^{ * T}(\bsbDelta^T \tilde \bsbOmega_{1}\bsbDelta + \tilde \bsbDelta^T   \bsbOmega_{1}\tilde \bsbDelta ) \bsbB_2^*\}\\
= & Tr\{  \EE[\tilde \bsbOmega_{1}]\EE[\bsbDelta \bsbB_2^* \bsbB_2^{* T} \bsbDelta^T] +  \EE[ \bsbOmega_{1}]\EE[ \tilde \bsbDelta \bsbB_2^*\bsbB_2^{*T}\tilde \bsbDelta^T]\}\\
=& Tr\{  \EE[\bar \bsbOmega_{1}]\EE[\bsbDelta \bsbB_2^* \bsbB_2^{* T} \bsbDelta^T]\}  \\
 = &\EE Tr\{\bsbB_2^{* T}[((\bsbX_{1}^T\bsbX_{1})^{-1} \bsbX_{1}^T\bsbX_2 -   \bsbR)^T\bsbSig_{1} ((\bsbX_{1}^T\bsbX_{1})^{-1} \bsbX_{1}^T\bsbX_2 -   \bsbR) ]\bsbB_2^{*}\}.
\end{align*}
where $\bsbSig_{1} = \EE\bar \bsbOmega_{1}= \bsbO^T \EE[\bar \bsbX^T \bar \bsbX]\bsbO = \bsbO^T (n\bsbSig) \bsbO $.
  Finally,
\begin{align*}
&
-  \bar \bsbDelta^T   \bar \bsbOmega_{1}\bar \bsbDelta- 2 ( \bsbR^T  \bar\bsbOmega_{1}  -  \bar\bsbOmega_{2 1})   \bar\bsbDelta\\
 = & - (\bar\bsbOmega_{1}^{-1} \bar \bsbOmega_{1 2} - \bsbR)^T \bar\bsbOmega_{1}(\bar\bsbOmega_{1}^{-1} \bar \bsbOmega_{1 2} - \bsbR)+2 ( \bsbR^T \bar \bsbOmega_{1}   -  \bar \bsbOmega_{2 1})( \bsbR-\bar\bsbOmega_{1}^{-1} \bar \bsbOmega_{1 2}  ) \\
= &    \bsbR^T    \bar \bsbOmega_{1 } \bsbR -\bsbR^T \bar \bsbOmega_{1 2}   - \bar \bsbOmega_{1 2}^T \bsbR + \bar \bsbOmega_{1 2}^T \bar\bsbOmega_{1}^{-1} \bar \bsbOmega_{1 2} \\
=& (\bar \bsbOmega_{1}^{-1} \bar \bsbOmega_{1 2} - \bsbR)^T \bar \bsbOmega_{1} (\bar \bsbOmega_{1}^{-1} \bar \bsbOmega_{1 2}-\bsbR).
\end{align*}
The conclusion   follows.

To show the result in the corollary, use the fact that  under the Gaussian design assumption, $\bsbZ_{n-d}^T \bsbZ_{n-d}$ follows a Wishart distribution $\mathcal W_r(\bsbI, n - d)$, and  so $(\bsbZ_{n-d}^T \bsbZ_{n-d})^{-1}$ follows an inverse Wishart distribution and has mean  $\bsbI_{r}/(n - d -1)$.

\section{Rowwise error bound for support recovery}
\label{appsec:supp}
Given $\bsbB = [\bsbb_1, \ldots, \bsbb_p]^T\in \mathbb R^{p\times m}$, define $\|\bsbB\|_{2, \infty} = \max_{1\le j \le p } \|\bsbb_j\|_2$. Recall $J^* = J(\bsbB^*) = \| \bsbB^*\|_{2,0}$ and $r^* = r(\bsbB^*)$. 

\setcounter{theorem}{4}
\begin{theorem}\label{th:supp}
Assume  $\bsbY = \bsbX \bsbB^* + \bsbE$ where  $\vect(\bsbE)$ is sub-Gaussian with mean $\bsb0$ and scale bounded by $\sigma$ and $\bsbB^*\ne \bsb0$. Let $
\hat \bsbB = {\mathop{\rm arg\, min}}_{\bsbB  }\frac{1}{2}\| \bsbY - \bsbX \bsbB \|^2_F + P(\bsbB; A)$ where $P(\bsbB;A)=A \sigma^2 P_o(\bsbB)$ and $A$ is a constant. Assume that there exist $\kappa>0$ and large enough $A$ such that  $\kappa n  J^*  \|\bsbB  - \bsbB^*\|_{2, \infty}^2/2 \le   \|\bsbX \bsbB  - \bsbX \bsbB^*\|_F^2/2 + P(\bsbB ;A) + P(\bsbB^*; A)$ for any $\bsbB\in \mathbb R^{p\times m} $. Let   $\bsbB^* = [\bsbb_1^*, \ldots, \bsbb_p^*]^T$, $\hat \bsbB = [\hat \bsbb_1, \ldots, \hat \bsbb_p]^T$ and $\hat {\mathcal J} = \mathcal J(\hat \bsbB)$. Then
\begin{align}
\| \hat \bsbB - \bsbB^*\|_{2, \infty} \le 8A\sigma^2 \frac{P_o(\bsbB^*)}{\kappa n J^*} \le 8A\sigma^2 \left\{\frac{r^{*} + \log p}{n\kappa }+\frac{mr^*}{ n J^* \kappa} \right\}.\label{coorderror}
\end{align}
If, in addition, the minimum signal strength satisfies
\begin{align}
\min_{j\in \mathcal J^*} \|\bsbb_j^*\|_2 \ge 4 \sqrt {2A} \sigma \left \{ \frac{r^{*} + \log p}{n\kappa }+\frac{mr^*}{ n J^* \kappa}  \right\}^{1/2},\label{snrcond}
\end{align}
then with probability at least $1-Cp^{-c}$ for some constants $C, c>0$, $ {\mathcal J}^* \subset \hat {\mathcal J}$ and $\|\hat \bsbb_j\|_2$ for all    $j\in \mathcal J^*$  are exactly  the $J^*$ largest norms in $\|\hat \bsbb_j\|_2$ ($1\le j \le p$).
\end{theorem}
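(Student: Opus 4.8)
The plan is to first establish, on a high-probability event, the prediction-error oracle inequality of Theorem~\ref{th:pic} in a form that \emph{keeps} the penalty of $\hat\bsbB$ on the left-hand side, then to convert this into the rowwise bound \eqref{coorderror} by invoking the hypothesized restricted condition, and finally to read off the support-recovery conclusion from \eqref{snrcond} through a triangle-inequality comparison of the row norms of $\hat\bsbB$. Concretely, I would begin from the optimality inequality with $\bsbB^*$ as the competitor (as in \eqref{firstineq}),
\[
\tfrac12\|\bsbX\hat\bsbB-\bsbX\bsbB^*\|_F^2+A\sigma^2 P_o(\hat\bsbB)\le A\sigma^2 P_o(\bsbB^*)+\langle\bsbE,\bsbX\hat\bsbB-\bsbX\bsbB^*\rangle ,
\]
and decompose the noise term along the row support of $\bsbB^*$ and the row space of $\bsbX\bsbB^*$ exactly as in \eqref{decompofXD}--\eqref{decomp4}, controlling each of the three pieces by Lemma~\ref{emprocBnd} or Lemma~\ref{emprocBnd2} --- but now using the \emph{high-probability} versions of those empirical-process bounds, as already carried out in the proof of Theorem~\ref{th:sf-pic}. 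A union bound over $1\le J\le p$ and $1\le r\le m\wedge J$ then yields, with probability at least $1-Cp^{-c}$, an estimate of the shape $\langle\bsbE,\bsbX\hat\bsbB-\bsbX\bsbB^*\rangle\le 2(1/a+1/a')\|\bsbX\hat\bsbB-\bsbX\bsbB^*\|_F^2+4bL\sigma^2\{P_o(\hat\bsbB)+P_o(\bsbB^*)\}$ for suitable constants with $4b>a$.

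Choosing $a,a'$ so that $2(1/a+1/a')<1/4$ and $A$ large enough that $A>8bL$, I absorb the $\|\bsbX\hat\bsbB-\bsbX\bsbB^*\|_F^2$ term into the left-hand side and the $P_o(\hat\bsbB)$ term into $A\sigma^2P_o(\hat\bsbB)$, which leaves (after a further enlargement of $A$ so the numerical factor is at most $4$)
\[
\tfrac12\|\bsbX\hat\bsbB-\bsbX\bsbB^*\|_F^2+P(\hat\bsbB;A)+P(\bsbB^*;A)\le 4A\sigma^2 P_o(\bsbB^*) .
\]
Invoking the restricted condition from the hypothesis with $\bsbB=\hat\bsbB$ gives $\kappa nJ^*\|\hat\bsbB-\bsbB^*\|_{2,\infty}^2/2\le 4A\sigma^2 P_o(\bsbB^*)$, which is the first bound in \eqref{coorderror}; the second bound is elementary, since $P_o(\bsbB^*)=(q\wedge J^*+m-r^*)r^*+J^*\log(ep/J^*)\lesssim J^*r^*+mr^*+J^*\log p$ for $p\ge 2$, whence $P_o(\bsbB^*)/(nJ^*)\lesssim(r^*+\log p)/n+mr^*/(nJ^*)$ and one absorbs the absolute constant into $A$.

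For the selection claim, put $\eta:=\|\hat\bsbB-\bsbB^*\|_{2,\infty}$; then \eqref{coorderror} gives $\eta\le\sqrt{8A}\,\sigma\{(r^*+\log p)/(n\kappa)+mr^*/(nJ^*\kappa)\}^{1/2}$, while \eqref{snrcond} gives $\min_{j\in\mathcal J^*}\|\bsbb_j^*\|_2\ge 2\sqrt{8A}\,\sigma\{\cdots\}^{1/2}>2\eta$ (the inequality is strict because the numerical constant actually obtained in \eqref{coorderror} is below $8A$). Consequently, for $j\in\mathcal J^*$ we get $\|\hat\bsbb_j\|_2\ge\|\bsbb_j^*\|_2-\eta>\eta>0$, so $\mathcal J^*\subset\hat{\mathcal J}$, whereas for $k\notin\mathcal J^*$ we get $\|\hat\bsbb_k\|_2=\|\hat\bsbb_k-\bsbb_k^*\|_2\le\eta$. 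Since $\|\hat\bsbb_j\|_2>\eta\ge\|\hat\bsbb_k\|_2$ for every $j\in\mathcal J^*$ and $k\notin\mathcal J^*$, the $J^*$ largest of the numbers $\|\hat\bsbb_1\|_2,\ldots,\|\hat\bsbb_p\|_2$ are exactly those indexed by $\mathcal J^*$.

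I expect the main obstacle to be the upgrade of the empirical-process control from the in-expectation form used for Theorem~\ref{th:pic} to the high-probability form needed here: the sub-Gaussian tails must be tracked through the three-term decomposition \eqref{decomp4} and through the union bound over all $(J,r)$ so that the exceptional probability stays $O(p^{-c})$, and $A$ must be kept large relative to the chaining constant $bL$ so that $P(\hat\bsbB;A)$ survives with a positive coefficient on the left --- this last point is exactly what makes the restricted condition, whose right-hand side contains $P(\hat\bsbB;A)$, usable. A minor, routine issue is arranging the strict separation $\min_{j\in\mathcal J^*}\|\bsbb_j^*\|_2>2\eta$ so that ties among the row norms of $\hat\bsbB$ cannot occur.
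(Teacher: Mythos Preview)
Your proposal is correct and follows essentially the same approach as the paper's own proof: both start from the basic optimality inequality with $\bsbB^*$ as competitor, control the stochastic term $\langle\bsbE,\bsbX\hat\bsbB-\bsbX\bsbB^*\rangle$ via the high-probability versions of Lemmas~\ref{emprocBnd}--\ref{emprocBnd2} and a union bound over $(J,r)$, then feed the resulting prediction-plus-penalty bound into the assumed restricted condition to obtain~\eqref{coorderror}, and finally derive the support-recovery claim via the triangle inequality on row norms. The only cosmetic difference is that the paper reaches the constant $8$ by introducing an auxiliary parameter $\alpha$ (setting $1/a+1/a'=1/(2(1+\alpha))$, $A=2bL(1+\alpha)$) and optimizing at $\alpha=1$, whereas you fix specific numerical thresholds on $a,a',A$ and argue the factor directly; both routes deliver the same bound.
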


\begin{proof}
First, the stochastic term $ 2\langle \bsbE, \bsbX \hat \bsbB - \bsbX \bsbB^*\rangle$ can be decomposed and bounded in the same way as in the proof of Theorem \ref{th:pic}, except that     we use the  high-probability form  results here. For example, for   term $\langle \bsbE, \Proj_{\mathcal J} \bsbX \bsbDelta \Proj_{rs} \rangle$   in \eqref{decomp4},  Lemma \ref{emprocBnd} shows  that for any constants $a, b, a'>0$ satisfying $4b >a$, the following event
\begin{align*}
&  \langle \bsbE, \Proj_{\mathcal J} \bsbX \bsbDelta \Proj_{rs} \rangle \leq  ({1}/{a} +{1}/{a'}) \|\Proj_{\mathcal J} \bsbX \bsbDelta \Proj_{rs}\|_F^2 +  b L \sigma^2 P_o(J, r)
\end{align*}
 occurs with   probability at least
$
1-  \sum_{J=1}^p \sum_{r=1}^{m \wedge J} C  \exp[- c \{(2({{b}/{a}})^{1/2}-1\}^2 L\allowbreak P_o(J, r)/\sigma^2]$ or $1-C p^{-c}$ for a sufficiently large value of $L$. Repeating the analysis in the proof of Theorem \ref{th:pic} shows
$$
  \langle \bsbE, \bsbX  \hat \bsbB - \bsbX \bsbB^* \rangle
\leq   ({1}/{a}+{1}/{a'})  \|\bsbX \hat \bsbB - \bsbX \bsbB^*\|_F^2   + 2b L \sigma^2\{P_o(\hat\bsbB) + P_o(\bsbB^*)\},
$$
with probability at least $1-C p^{-c}$ for some $c, C>0$, from which it follows that \begin{align*}
(\frac{1}{2}- \frac{1}{a}-\frac{1}{a'})\| \bsbX \hat \bsbB - \bsbX \bsbB^*\|_F^2 \leq   (A+2bL)\sigma^2 P_o(\bsbB^{*}) -(A-2bL)  \sigma^2P_o(\hat \bsbB).
\end{align*}

From the assumption,
$$
 \kappa n  J^* \|\hat \bsbB - \bsbB^*\|_{2, \infty}^2 \le    \|\bsbX \hat \bsbB  - \bsbX\bsbB^*\|_F^2 + 2 P(\hat \bsbB;A)+ 2 P(\bsbB^*;A).
$$
Hence we obtain
$$
 \kappa n  J^* \|\hat \bsbB - \bsbB^*\|_{2, \infty}^2 + \frac{\frac{2}{a}+\frac{2}{a'}-\frac{2bL}{A}}{\frac{1}{2}- \frac{1}{a}-\frac{1}{a'}}  A\sigma^2P_o(\hat \bsbB)\le( 2+   \frac{1+\frac{2bL}{A}}{\frac{1}{2}- \frac{1}{a}-\frac{1}{a'}} )A\sigma^2P_{o}(\bsbB^*).
$$
Let $1/a + 1/a'=1/(2(1+\alpha))$, $4b> a$ and $A =   {2bL}(1+\alpha)$ with $\alpha>0$. Then
$$
 \|\hat \bsbB - \bsbB^*\|_{2, \infty}^2  \le\frac{1}{ \kappa n  J^*}( 2+  2(1+\frac{2}{\alpha}) ] (1+\alpha) 2bL\sigma^2P_{o}(\bsbB^*).
$$
Taking the optimal    $\alpha=1$, we get \eqref{coorderror}.

Furthermore, under \eqref{snrcond}, for any $j\in {\mathcal J}^*$
$$
\|\hat\bsbb_j \|_2 = \|\hat\bsbb_j  -   \bsbb_j^* + \bsbb_j^*\|_2\ge \|\bsbb_j^*\|_2  - \|\hat\bsbb_j  - \hat \bsbb_j^*\|_2 \ge2 \sqrt {2A} \sigma \left \{ \frac{r^{*} + \log p}{n\kappa }+\frac{mr^*}{ n J^* \kappa}  \right\}^{1/2}.  $$
Combining it with  $\|\hat \bsbb_{j'}\|_2^2 \le 8P(\bsbB^*;A)/(\kappa n J^*)$  for any $j'\in {{\mathcal J}^*}^c$ gives  the conclusion. 
\end{proof}

\bibliographystyle{apalike}\bibliography{crbib}

\begin{thebibliography}{}

\bibitem[Akaike, 1974]{akaike1974new}
Akaike, H. (1974).
\newblock A new look at the statistical model identification.
\newblock {\em Automatic Control, IEEE Transactions on}, 19(6):716--723.

\bibitem[Anderson, 1999]{And99}
Anderson, T.~W. (1999).
\newblock Asymptotic distribution of the reduced rank regression estimator
  under general conditions.
\newblock {\em The Annals of Statistics}, 27(4):1141--1154.

\bibitem[Arlot and Celisse, 2010]{arlot2010survey}
Arlot, S. and Celisse, A. (2010).
\newblock A survey of cross-validation procedures for model selection.
\newblock {\em Statistics Surveys}, 4:40--79.

\bibitem[Bickel et~al., 2009]{bickel09}
Bickel, P.~J., Ritov, Y., and Tsybakov, A.~B. (2009).
\newblock Simultaneous analysis of {L}asso and {D}antzig selector.
\newblock {\em The Annals of Statistics}, 37:1705--1732.

\bibitem[Box, 1979]{box1979robustness}
Box, G. E.~P. (1979).
\newblock {Robustness in the strategy of scientific model building}.
\newblock In {\em Robustness in Statistics}, pages 201--236. Academic Press.

\bibitem[Bunea et~al., 2011]{BuneaSheWeg}
Bunea, F., She, Y., and Wegkamp, M. (2011).
\newblock Optimal selection of reduced rank estimators of high-dimensional
  matrices.
\newblock {\em The Annals of Statistics}, 39:1282--1309.

\bibitem[Bunea et~al., 2012]{BSW12}
Bunea, F., She, Y., and Wegkamp, M. (2012).
\newblock Joint variable and selection for parsimonious estimation of high
  dimensional matrices.
\newblock {\em The Annals of Statistics}, 40(5):2359--2388.

\bibitem[Cand{\`e}s, 2008]{candes2008restricted}
Cand{\`e}s, E.~J. (2008).
\newblock The restricted isometry property and its implications for compressed
  sensing.
\newblock {\em Comptes Rendus Mathematique}, 346(9):589--592.

\bibitem[Cand\`es and Plan, 2011]{CandPlan}
Cand\`es, E.~J. and Plan, Y. (2011).
\newblock Tight oracle bounds for low-rank matrix recovery from a minimal
  number of random measurements.
\newblock {\em IEEE Transactions on Information Theory}, 57(4):2342--2359.

\bibitem[Chen and Chen, 2008]{chenchen}
Chen, J. and Chen, Z. (2008).
\newblock Extended {Bayesian} information criterion for model selection with
  large model space.
\newblock {\em Biometrika}, 95:759--771.

\bibitem[Chen et~al., 2012]{chen2012jrssb}
Chen, K., Chan, K.-S., and Stenseth, N.~C. (2012).
\newblock Reduced rank stochastic regression with a sparse singular value
  decomposition.
\newblock {\em Journal of the Royal Statistical Society: Series B},
  74(2):203--221.

\bibitem[Chen and Huang, 2012]{ChenHuang12}
Chen, L. and Huang, J.~Z. (2012).
\newblock Sparse reduced-rank regression for simultaneous dimension reduction
  and variable selection.
\newblock {\em Journal of the American Statistical Association},
  107:1533--1545.

\bibitem[Chun and Kele{\c{s}}, 2010]{chun2010sparse}
Chun, H. and Kele{\c{s}}, S. (2010).
\newblock Sparse partial least squares regression for simultaneous dimension
  reduction and variable selection.
\newblock {\em Journal of the Royal Statistical Society: Series B (Statistical
  Methodology)}, 72(1):3--25.

\bibitem[Cokus et~al., 2006]{cokus2006modelling}
Cokus, S., Rose, S., Haynor, D., Gr{\o}nbech-Jensen, N., and Pellegrini, M.
  (2006).
\newblock Modelling the network of cell cycle transcription factors in the
  yeast saccharomyces cerevisiae.
\newblock {\em BMC bioinformatics}, 7(1):381.

\bibitem[Donoho and Johnstone, 1994]{donoho1994}
Donoho, D. and Johnstone, I. (1994).
\newblock Ideal spatial adaptation via wavelet shrinkages.
\newblock {\em Biometrika}, 81:425--455.

\bibitem[Fan and Li, 2001]{fan2001variable}
Fan, J. and Li, R. (2001).
\newblock Variable selection via nonconcave penalized likelihood and its oracle
  properties.
\newblock {\em Journal of the American Statistical Association},
  96(456):1348--1360.

\bibitem[Foster and George, 1994]{FosterGeorge}
Foster, D.~P. and George, E.~I. (1994).
\newblock The risk inflation criterion for multiple regression.
\newblock {\em Annals of Statistics}, 22(4):1947--1975.

\bibitem[Geisser, 1975]{geisser1975predictive}
Geisser, S. (1975).
\newblock The predictive sample reuse method with applications.
\newblock {\em Journal of the American Statistical Association},
  70(350):320--328.

\bibitem[Hastie et~al., 2009]{ESL2}
Hastie, T., Tibshirani, R., and Friedman, J. (2009).
\newblock {\em The Elements of Statistical Learning}.
\newblock Springer-Verlag, New York, 2nd edition.

\bibitem[Johnstone and Lu, 2009]{johnstoneLu2009}
Johnstone, I.~M. and Lu, A.~Y. (2009).
\newblock On consistency and sparsity for principal components analysis in high
  dimensions.
\newblock {\em Journal of the American Statistical Association},
  104(486):682--693.

\bibitem[Laurent and Massart, 2000]{lm2000}
Laurent, B. and Massart, P. (2000).
\newblock Adaptive estimation of a quadratic functional by model selection.
\newblock {\em The Annals of Statistics}, 28(5):1302--1338.

\bibitem[Ma, 2013]{Ma2013}
Ma, Z. (2013).
\newblock Sparse principal component analysis and iterative thresholding.
\newblock {\em The Annals of Statistics}, 41(2):772--801.

\bibitem[Mallows, 1973]{mallows1973some}
Mallows, C.~L. (1973).
\newblock Some comments on {$C_p$}.
\newblock {\em Technometrics}, 15(4):661--675.

\bibitem[Niu et~al., 2008]{niu2008mechanisms}
Niu, W., Li, Z., Zhan, W., Iyer, V.~R., and Marcotte, E.~M. (2008).
\newblock Mechanisms of cell cycle control revealed by a systematic and
  quantitative overexpression screen in {S.} cerevisiae.
\newblock {\em PLoS genetics}, 4(7):e1000120.

\bibitem[{R Core Team}, 2017]{Rlanguage}
{R Core Team} (2017).
\newblock {\em R: A Language and Environment for Statistical Computing}.
\newblock R Foundation for Statistical Computing, Vienna, Austria.

\bibitem[Rigollet and Tsybakov, 2011]{Rigollet11}
Rigollet, P. and Tsybakov, A. (2011).
\newblock Exponential screening and optimal rates of sparse estimation.
\newblock {\em The Annals of Statistics}, 39(2):731--771.

\bibitem[Rohde and Tsybakov, 2011]{Rohde11}
Rohde, A. and Tsybakov, A.~B. (2011).
\newblock Estimation of high-dimensional low-rank matrices.
\newblock {\em The Annals of Statistics}, 39(2):887--930.

\bibitem[Schwarz et~al., 1978]{schwarz1978estimating}
Schwarz, G. et~al. (1978).
\newblock Estimating the dimension of a model.
\newblock {\em The Annals of Statistics}, 6(2):461--464.

\bibitem[Shao, 1993]{shao1993linear}
Shao, J. (1993).
\newblock Linear model selection by cross-validation.
\newblock {\em Journal of the American Statistical Association},
  88(422):486--494.

\bibitem[Shao, 1997]{shao1997asymptotic}
Shao, J. (1997).
\newblock An asymptotic theory for linear model selection.
\newblock {\em Statistica Sinica}, 7:221--242.

\bibitem[She, 2010]{SheSpaReg}
She, Y. (2010).
\newblock Sparse regression with exact clustering.
\newblock {\em Electronic Journal of Statistics}, 4:1055--1096.

\bibitem[She, 2012]{SheGLMTISP}
She, Y. (2012).
\newblock An iterative algorithm for fitting nonconvex penalized generalized
  linear models with grouped predictors.
\newblock {\em Computational Statistics and Data Analysis}, 9:2976--2990.

\bibitem[She, 2016]{She2016}
She, Y. (2016).
\newblock On the finite-sample analysis of {$\Theta$}-estimators.
\newblock {\em Electron. J. Statist.}, 10(2):1874--1895.

\bibitem[She, 2017]{she2017selective}
She, Y. (2017).
\newblock Selective factor extraction in high dimensions.
\newblock {\em Biometrika}, 104(1):97--110.

\bibitem[She et~al., 2013]{SheSpec}
She, Y., Li, H., Wang, J., and Wu, D. (2013).
\newblock Grouped iterative spectrum thresholding for super-resolution sparse
  spectrum selection.
\newblock {\em IEEE Transactions on Signal Processing}, 61:6371--6386.

\bibitem[Shen and Huang, 2008]{shenhuang}
Shen, H. and Huang, J. (2008).
\newblock Sparse principal component analysis via regularized low rank matrix
  approximation.
\newblock {\em Journal of Multivariate Analysis}, 99:1015--1034.

\bibitem[Spellman et~al., 1998]{spellman1998comprehensive}
Spellman, P.~T., Sherlock, G., Zhang, M.~Q., Iyer, V.~R., Anders, K., Eisen,
  M.~B., Brown, P.~O., Botstein, D., and Futcher, B. (1998).
\newblock Comprehensive identification of cell cycle--regulated genes of the
  yeast saccharomyces cerevisiae by microarray hybridization.
\newblock {\em Molecular Biology of the Cell}, 9(12):3273--3297.

\bibitem[Stock and Watson, 2002]{stock2002forecasting}
Stock, J.~H. and Watson, M.~W. (2002).
\newblock Forecasting using principal components from a large number of
  predictors.
\newblock {\em Journal of the American statistical association},
  97(460):1167--1179.

\bibitem[Tibshirani, 1996]{Tib}
Tibshirani, R. (1996).
\newblock Regression shrinkage and selection via the lasso.
\newblock {\em Journal of the Royal Statistical Society. Series B
  (Methodological)}, pages 267--288.

\bibitem[Tibshirani and Taylor, 2011]{TT11}
Tibshirani, R.~J. and Taylor, J. (2011).
\newblock The solution path of the generalized lasso.
\newblock {\em The Annals of Statistics}, 39:1335--1371.

\bibitem[Tsybakov, 2009]{tsybakov2009introduction}
Tsybakov, A.~B. (2009).
\newblock {\em Introduction to Nonparametric Estimation}.
\newblock Springer-Verlag, New York.

\bibitem[Vounou et~al., 2012]{vounou2012sparse}
Vounou, M., Janousov{\'a}, E., Wolz, R., Stein, J.~L., Thompson, P.~M.,
  Rueckert, D., and Montana, G. (2012).
\newblock Sparse reduced-rank regression detects genetic associations with
  voxel-wise longitudinal phenotypes in {A}lzheimer's disease.
\newblock {\em NeuroImage}, 60(1):700--716.

\bibitem[Wahba, 1990]{wahba1990spline}
Wahba, G. (1990).
\newblock {\em Spline models for observational data}.
\newblock Society for Industrial and Applied Mathematics, Philadelphia.

\bibitem[Wang et~al., 2007]{wang2007group}
Wang, L., Chen, G., and Li, H. (2007).
\newblock Group {SCAD} regression analysis for microarray time course gene
  expression data.
\newblock {\em Bioinformatics}, 23(12):1486--1494.

\bibitem[Witten et~al., 2009]{pmd}
Witten, D., Tibshirani, R., and Hastie, T. (2009).
\newblock A penalized matrix decomposition, with applications to sparse
  principal components and canonical correlation analysis.
\newblock {\em Journal of Multivariate Analysis}, 10(3):515--534.

\bibitem[Yang, 2005]{yang2005can}
Yang, Y. (2005).
\newblock Can the strengths of {AIC} and {BIC} be shared? {A} conflict between
  model identification and regression estimation.
\newblock {\em Biometrika}, 92(4):937--950.

\bibitem[Zhang and Huang, 2008]{ZhangHuang}
Zhang, C.-H. and Huang, J. (2008).
\newblock The sparsity and bias of the {Lasso} selection in high-dimensional
  linear regression.
\newblock {\em The Annals of Statistics}, 36:1567--1594.

\bibitem[Zou et~al., 2006]{spapca}
Zou, H., Hastie, T., and Tibshirani, R. (2006).
\newblock Sparse principal component analysis.
\newblock {\em Journal of Computational and Graphical Statistics}, 15:265--286.

\end{thebibliography}

\end{document}